\DeclareMathOperator{\tho}{\text{\rm\textthorn}}
\DeclareMathOperator{\thop}{\text{\rm\textthorn}^\prime\negthinspace}
\DeclareMathOperator{\edt}{\eth}
\DeclareMathOperator{\edtp}{\eth^\prime\negthinspace}
\DeclareMathOperator{\thoBG}{\mathring{\tho}}
\DeclareMathOperator{\thopBG}{\mathring{\tho}{\vphantom{\tho}}^\prime\negthinspace}
\DeclareMathOperator{\edtBG}{\mathring{\edt}}
\DeclareMathOperator{\edtpBG}{\mathring{\edt}{\vphantom{\edt}}^\prime\negthinspace}
\newcommand{\defn}[1]{\textbf{#1}}
\newcounter{mnotecount}[section]
\theoremstyle{plain}
\newtheorem{theorem}{Theorem}[section]
\newtheorem{definition}[theorem]{Definition}
\newtheorem{lemma}[theorem]{Lemma}
\newtheorem{corollary}[theorem]{Corollary}
\newtheorem{remark}[theorem]{Remark}
\newcounter{step}
\renewcommand{\thestep}{\arabic{step}}
\newenvironment{steps}{\setcounter{step}{0}}{}
\newcommand{\step}[1]{\addtocounter{step}{1}\noindent\textbf{Step \thestep: #1}}
\newcommand{\Naturals}{\mathbb{N}}
\newcommand{\Reals}{\mathbb{R}}
\newcommand{\Complex}{\mathbb{C}}
\newcommand{\Circle}{\mathbb{S}^1}
\newcommand{\Sphere}{\mathbb{S}^2}
\newcommand{\di}{\mathrm{d}}
\newcommand{\metric}{g}
\newcommand{\KerrStarDomain}{\mathcal{K}^*}
\newcommand{\metricBackground}{\mathring{g}}
\newcommand{\metricPerturbation}{h}
\newcommand{\vecL}{l}
\newcommand{\vecM}{m}
\newcommand{\vecMb}{\bar{m}}
\newcommand{\vecN}{n}
\newcommand{\vecH}{e_{\Theta}}  % t is time, and h is for tHeta
\newcommand{\vecP}{e_{\Phi}}    % P is for phi
\newcommand{\vecLBackground}{\mathring{\vecL}}
\newcommand{\vecMBackground}{\mathring{\vecM}}
\newcommand{\vecMbBackground}{\bar{\mathring{\vecM}}}
\newcommand{\vecNBackground}{\mathring{\vecN}}
\newcommand{\vecLSpecified}{\hat{\vecL}}
\newcommand{\vecNSpecified}{\hat{\vecN}}
\newcommand{\vecX}{X}
\newcommand{\Lie}{\mathcal{L}}
\newcommand{\Riem}{\mathop{\mathrm{Riem}}{}}
\newcommand{\NLnhCondition}{radiation gauge condition}
\newcommand{\LnhCondition}{linear radiation gauge condition}
\newcommand{\linearTraceCondition}{linear trace condition}
\newcommand{\ClassicalORG}{full radiation gauge of Chrzanowski}
\newcommand{\backgroundHypothesesNoRef}{background hypotheses}
\newcommand{\backgroundHypotheses}{background hypotheses of definition \ref{def:radiationGaugeHypotheses}}
\newcommand{\radiationGaugeHypothesesNoRef}{vacuum, radiation-gauge hypotheses}
\newcommand{\radiationGaugeHypotheses}{vacuum, radiation-gauge hypotheses of definition \ref{def:radiationGaugeHypotheses}}
\newcommand{\frameGaugeHypothesesNoRef}{frame-gauge hypotheses}
\newcommand{\frameGaugeHypotheses}{frame-gauge hypotheses of definition \ref{def:frameGaugeHypotheses}}
\newcommand{\assumeRadiationHypotheses}{Assume the \radiationGaugeHypotheses}
\newcommand{\assumeRadiationHypothesesAndLorentzTransformVariables}{Assume the \radiationGaugeHypotheses{} and a choice of differential Lorentz transformation variables}
\newcommand{\assumeRadiationAndFrameGaugeHypotheses}{Assume the \radiationGaugeHypotheses{} and \frameGaugeHypotheses}
\newcommand{\geometricVariables}{\mathfrak{u}}
\newcommand{\ia}{a}
\newcommand{\ib}{b}
\newcommand{\gaugeRegularityIn}{k}
\newcommand{\gaugeRegularityOut}{k'}
\newcommand{\heightInNullGaugeEnforceability}{h}
\newcommand{\vNew}{v^{\text{new}}}
\newcommand{\rNew}{r^{\text{new}}}
\newcommand{\omegaNew}{\omega^{\text{new}}}
\newcommand{\vSfc}{\hat{v}}
\newcommand{\rSfc}{\hat{r}}
\newcommand{\thetaSfc}{\hat{\theta}}
\newcommand{\phiSfc}{\hat{\phi}}
\newcommand{\omegaSfc}{\hat{\omega}}
\newcommand{\vecPSW}{\xi}
\newcommand{\Flow}[2]{\Phi[#1](#2)}
\newcommand{\Held}[1]{#1{}^{\circ}}
\DeclareMathOperator{\Heldtho}{\Held{\thoBG{}\negthinspace}{}\negthinspace}
\DeclareMathOperator{\Heldeth}{\Held{\edtBG{}\negthinspace}{}\negthinspace}
\DeclareMathOperator{\Heldethp}{\Held{\edtpBG{}}{}\negthinspace}
\def\InvGTrTilde{\tilde{\slashed{G}}{}^{\#}_{\phantom{0}}{}}
\def\LinGZero{\dot{G}_0{}}
\def\LinGOne{\dot{G}_1{}}
\def\LinGTwo{\dot{G}_2{}}
\def\LinGOneDg{\overline{\dot{G}}_1{}}
\def\LinGTwoDg{\overline{\dot{G}}_2{}}
\def\LinGTr{\dot{\slashed{G}}{}}
\title{Nonlinear radiation gauge for near Kerr spacetimes}
\author[L. Andersson]{Lars Andersson \orcidlink{0000-0002-6364-7384}}
\email{laan@aei.mpg.de}
\address{Albert Einstein Institute, Am M\"uhlenberg 1, D-14476 Potsdam, Germany }
\author[T. B\"ackdahl]{Thomas B\"ackdahl \orcidlink{0000-0003-3240-2445}} 
\email{thomas.backdahl@chalmers.se}
\address{Mathematical Sciences, Chalmers University of Technology and University of Gothenburg, SE-412~96 Gothenburg, Sweden}
\author[P. Blue]{Pieter Blue \orcidlink{0000-0002-6195-4022}}
\email{p.blue@ed.ac.uk}
\address{Maxwell Institute and The University of Edinburgh, Peter Guthrie Tait Road, Edinburgh, EH9 3FD, UK}
\author[S. Ma]{Siyuan Ma \orcidlink{0000-0003-2893-7674}}
\email{siyuan.ma@aei.mpg.de, siyuan.ma@sorbonne-universite.fr}
\address{Laboratoire Jacques-Louis Lions, Universit\'{e} Pierre et Marie Curie, 4 place Jussieu, 75005 Paris, France \and Albert Einstein Institute, Am M\"uhlenberg 1, D-14476 Potsdam, Germany}
\numberwithin{equation}{section}
\begin{document}

\begin{abstract}
In this paper, we introduce and explore the properties of a new gauge choice for the vacuum Einstein equation inspired by the ingoing and outgoing radiation gauges (IRG, ORG) for the linearized vacuum Einstein equation introduced by Chrzanowski in his work on metric reconstruction \cite{Chrzanowski} on the Kerr background. It has been shown by Price, Shankar and Whiting \cite{Price:2006ke} that the IRG/ORG are consistent gauges for the linearized vacuum Einstein equation on Petrov type II backgrounds. In \cite{Andersson:2019dwi}, the ORG was used in proving linearized stability for the Kerr spacetime, and the new non-linear radiation gauge introduced here is a direct generalization of that gauge condition, and is intended to be used to study the stability of Kerr black holes under the evolution generated by the vacuum Einstein equation. 
\end{abstract}

\maketitle

\tableofcontents

\section{Introduction}

Given $M>0$ and $a\in(-M,M)$, for $(v,r,\omega)\in\Reals\times(0,\infty)\times\Sphere$ and $(\theta,\phi)$ spherical coordinates on $\Sphere$, the Kerr metric in Eddington-Finkelstein coordinates takes the form 
\begin{subequations}
\begin{align}
\metricBackground 
={}& \left(1-\frac{2Mr}{\Sigma}\right) \di v^2
+\frac{4Mra}{\Sigma}\sin^2\theta  \di v\di\phi
-\frac{(r^2+a^2)^2-a^2\Delta\sin^2\theta}{\Sigma}\sin^2\theta\di\phi^2
\nonumber\\
&-\Sigma\di\theta^2
-2\di v \di r
+2a\sin^2\theta \di r\di\phi ,
\label{eq:KerrMetric} \\
\Sigma ={}& r^2+a^2\cos^2\theta,\quad
\Delta ={} r^2-2Mr +a^2 .
\end{align}
\end{subequations}
For $M>0$ and $a\in(-M,M)$, this metric describes a subextremal black hole geometry. As explained in many textbooks (e.g. \cite{ONeill:Kerr}), the metric \eqref{eq:KerrMetric} extends smoothly to the set $\KerrStarDomain=\Reals\times(0,\infty)\times\Sphere$, in particular to the north and south poles, and there is a further analytic extension, extending beyond $v=\pm\infty$ and (for $a\not=0$) to $r<0$. The Kerr space-time is of Petrov type D (or $\{2,2\}$), which means there are two, repeated principal null directions;
a future-directed ingoing (respectively outgoing) principal null vector is a positive multiple of $\vecNSpecified$ (respectively $\vecLSpecified$), where 
\begin{subequations}
\label{eq:PNVs}
\begin{align}
\vecNSpecified ={}& -\partial_r ,
\label{eq:ingoingPNV}\\
\vecLSpecified ={}& \frac{\Delta}{2}\partial_r +\left((r^2+a^2)\partial_v+a\partial_\phi\right).
\label{eq:outgoingPNV}
\end{align}
\end{subequations}

Central to this paper is the following gauge condition: 
\begin{definition}
\label{def:NLORG}
Let $M>0$ and $a\in(-M,M)$. 
Let $\metricBackground$ be the Kerr metric on $\KerrStarDomain$, and let $\vecN$ be a future-directed, ingoing principal null vector. 
Let $U$ be an open subset of $\KerrStarDomain$.

A symmetric tensor $\metric$ on $U$ is defined to satisfy the \defn{\NLnhCondition{}} iff
\begin{align}
\label{eq:NLnh}
\vecN^a\metric_{ab} ={}& \vecN^a\metricBackground_{ab} .  
\end{align}
\end{definition}

We shall use the term diffeomorphism gauge to be synonymous with a local diffeomorphism. In order to be able to state our main results, we shall need the following, somewhat technical definition. For convenience, we define a reference Riemannian metric on $\KerrStarDomain$ from which we further define, for any $\gaugeRegularityIn\in\Naturals$, the $C^\gaugeRegularityIn$ norm with respect to the reference metric on any subset of $\KerrStarDomain$. It is well known that when dealing with diffeomorphism gauges, it is unfortunately common to lose regularity and to need to restrict to somewhat smaller sets. The relevant sets for the following definition are illustrated in figure \ref{fig:SetsForImposingGauge}. 

\begin{definition}
\label{def:diffeomorphismGauge}
In this paper, given a nonnegative integer $\gaugeRegularityIn$ and an open set $V\subset\KerrStarDomain$, a $C^{\gaugeRegularityIn}$ \defn{diffeomorphism gauge} is a map $\Phi:V\rightarrow \KerrStarDomain$ such that $\Phi$ is a $C^{\gaugeRegularityIn}$ diffeomorphism of $V$ to its image.
  
Let $(X,Y,I,J,h,U,V)$ be such that: $X$ is a bounded, open subset of $\Reals\times\Sphere$; $Y$ is a open set such that its closure is a subset of $X$; $0<J<I<\infty$; $\heightInNullGaugeEnforceability:X\rightarrow(M/2,\infty)$ is smooth; and $U$ and $V$ are the spacetime slabs $U=\{(v,r,\omega):(v,\omega)\in X,\heightInNullGaugeEnforceability(v,\omega)-I<r<\heightInNullGaugeEnforceability(v,\omega)+I\}$, $V=\{(v,r,\omega):(v,\omega)\in Y, \heightInNullGaugeEnforceability(v,\omega)-J<r<\heightInNullGaugeEnforceability(v,\omega)+J\}$. A diffeomorphism $\Phi$ is defined to be \defn{compatible with $(X,Y,I,J,\heightInNullGaugeEnforceability,U,V)$} if $\Phi(V)\subset U$. Abusing notation, we use $\heightInNullGaugeEnforceability(X)$ (and similarly for $\heightInNullGaugeEnforceability(Y)$) to denote the graph in $\Reals\times\Reals\times\Sphere$ of $\heightInNullGaugeEnforceability$ over $X$ rather than the image in $\Reals$ of $X$.
\end{definition}

Our first result is that for initial data that is close to data from the Kerr spacetime, it is possible to construct a diffeomorphism gauge so as to impose the \NLnhCondition{}.

\begin{theorem}[Enforceability of the \NLnhCondition]
\label{thm:NLnhEnforceability}
Let $M>0$ and $a\in(-M,M)$. 
Let $\metricBackground$ be the Kerr metric on $\KerrStarDomain$, and let $\vecN$ be a future-directed, ingoing principal null vector. 
Let $(X,Y,I,J,h,U,V)$ be as in definition \ref{def:diffeomorphismGauge}, and let $\gaugeRegularityOut$ be a sufficiently large integer. 

There exist $\varepsilon_0>0$, $\gaugeRegularityIn>\gaugeRegularityOut$, and $K>0$ such that, 
if $\metric_{ab}$ is a symmetric $(0,2)$ tensor satisfying $|\metric-\metricBackground|_{C^{\gaugeRegularityIn}(U)}<\varepsilon_0$, 
then
there is a $C^{\gaugeRegularityOut}$ diffeomorphism gauge $\Phi$ compatible with $(X, Y, I, J, h, U, V )$ such that $\Phi^{-1}_*\metric$ satisfies the \NLnhCondition{} on $V$. Furthermore, there is the following bound of the initial data for $\Phi^{-1}_*\metric$ in terms of the initial data for $\metric$: 
$|\Phi^{-1}_*\metric-\metricBackground|_{C^{\gaugeRegularityOut}(h(Y))}$ $\leq K|\metric-\metricBackground|_{C^{\gaugeRegularityIn}(h(X))}$. 
\end{theorem}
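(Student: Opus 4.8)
The plan is to impose the gauge condition $\vecN^a\metric_{ab}=\vecN^a\metricBackground_{ab}$ by constructing the diffeomorphism $\Phi$ as the flow of the vector field $\vecNSpecified=-\partial_r$, i.e.\ by choosing new coordinates adapted to the congruence generated by $\vecN$. Concretely, I would look for $\Phi$ whose pullback adjusts only how the $r$-coordinate is threaded through the congruence, so that in the new coordinates the $r$-lines coincide with the integral curves of $\vecN$ (rescaled), and the remaining coordinates $(v,\omega)$ are transported along $\vecN$. The point is that $\vecN^a\metric_{ab}=\vecN^a\metricBackground_{ab}$ is precisely $6-1=5$ scalar conditions — since one component, $\vecN^a\vecN^b\metric_{ab}=0$, is automatically guaranteed if $\vecN$ is chosen null for $\metric$ (or can be arranged by the freedom in rescaling $\vecN$) — matching the $4$ functional degrees of freedom of a diffeomorphism plus the $1$ degree of freedom in rescaling $\vecN$. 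So the construction is not over- or under-determined, and the natural route is to set it up as a first-order ODE system along the $r$-direction (a transport/characteristic problem), with "initial data" prescribed on a transversal hypersurface, namely the graph $\heightInNullGaugeEnforceability(X)$.

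The key steps, in order, are as follows. \emph{Step 1:} Reduce the gauge condition to a transport system. Writing $\Phi$ in terms of unknown functions $(\vNew,\rNew,\omegaNew)$ of $(v,r,\omega)$, expand $\vecN^a(\Phi^{-1}_*\metric)_{ab}-\vecN^a\metricBackground_{ab}=0$ and organize it as an evolution equation in $r$ (the parameter along $\vecN$) for these functions, with the metric perturbation $\metric-\metricBackground$ appearing as a coefficient/source. \emph{Step 2:} Verify the system is of the right type — a quasilinear first-order symmetric (or simply triangular) system transverse to the level sets of $r$ — so that, for $\metric$ sufficiently $C^{\gaugeRegularityIn}$-close to $\metricBackground$, the relevant coefficient (essentially $\vecN^a\vecN^b\metric_{ab}$ together with the transverse block) stays nondegenerate and the system remains well-posed as an ODE-in-$r$ with parameters $(v,\omega)$. \emph{Step 3:} Prescribe the initial condition on $\heightInNullGaugeEnforceability(X)$, namely $\Phi=\mathrm{id}$ there (so the "new" and "old" coordinates agree on the reference graph), and solve the system by the method of characteristics / Picard iteration, integrating in $r$ away from $\heightInNullGaugeEnforceability(X)$ in both directions. \emph{Step 4:} Control the interval of existence: a smallness assumption $|\metric-\metricBackground|_{C^{\gaugeRegularityIn}(U)}<\varepsilon_0$ forces the solution to exist and stay $C^{\gaugeRegularityOut}$-close to the identity on all of $V$ — this is where one needs $I>J$ (room to lose a bit of the $r$-interval) and $X\supset\overline{Y}$ (room to lose a bit in the transverse directions), exactly as in the statement. \emph{Step 5:} Derive the quantitative bound on the new initial data: since $\Phi$ differs from the identity by an amount controlled by $\|\metric-\metricBackground\|_{C^{\gaugeRegularityIn}}$ via Grönwall applied to the transport system, and since $\Phi^{-1}_*\metric-\metricBackground = \Phi^{-1}_*(\metric-\metricBackground) + (\Phi^{-1}_*\metricBackground-\metricBackground)$ with the second term controlled by $\|\Phi-\mathrm{id}\|$, one gets $|\Phi^{-1}_*\metric-\metricBackground|_{C^{\gaugeRegularityOut}(\heightInNullGaugeEnforceability(Y))}\leq K|\metric-\metricBackground|_{C^{\gaugeRegularityIn}(\heightInNullGaugeEnforceability(X))}$ with $K$ depending only on the data $(X,Y,I,J,h)$ and on $M,a$.

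The main obstacle is the loss of regularity, which is why the theorem is stated with $\gaugeRegularityIn>\gaugeRegularityOut$ and with $V$ strictly inside $U$. There are really two intertwined sources: first, solving the transport system and then pushing forward $\metric$ involves composing with $\Phi$ and $\Phi^{-1}$ and differentiating, each of which costs derivatives; second, to get uniform-in-$(v,\omega)$ estimates on the $r$-ODEs one must differentiate the characteristic system in the transverse variables, and closing the resulting hierarchy of estimates up to order $\gaugeRegularityOut$ requires starting from control at some strictly higher order $\gaugeRegularityIn$. Bookkeeping the precise number of derivatives lost, and checking that the nondegeneracy of the transverse block of $\metric$ (needed to solve for $\omegaNew$ and $\vNew$, not just $\rNew$) survives under the $C^{\gaugeRegularityIn}$-smallness hypothesis uniformly on $U$, is the technical heart of the argument; everything else is a fairly standard characteristic-ODE construction with Grönwall estimates.
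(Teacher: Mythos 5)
Your overall strategy --- build $\Phi$ by transporting coordinates along a congruence from data prescribed on the graph $\heightInNullGaugeEnforceability(X)$, shrink from $(X,I)$ to $(Y,J)$ to absorb the loss of domain and of derivatives, and close with Gr\"onwall --- is the same as the paper's. But there is a genuine gap at the central step. The condition $\vecN^a\metric_{ab}=\vecN^a\metricBackground_{ab}$ is four scalar conditions, not ``$6-1=5$'' (the contraction is a one-form), and three of the four, namely $\metric(\partial_{\rNew},\partial_{\vNew})=-1$, $\metric(\partial_{\rNew},\partial_{\theta^{\text{new}}})=0$, $\metric(\partial_{\rNew},\partial_{\phi^{\text{new}}})=-a\sin^2\theta$, involve the \emph{transverse} derivatives $\partial x^\mu/\partial\vNew$, $\partial x^\mu/\partial\theta^{\text{new}}$, $\partial x^\mu/\partial\phi^{\text{new}}$ of the unknown map. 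They are therefore not an evolution system in $r$ as your Step~1 asserts; they are constraints, and the whole difficulty is to show that they propagate. The missing idea, which is the heart of the paper's proof, is to (i) determine a $\metric$-null vector $\vecN$ on $\heightInNullGaugeEnforceability(X)$ purely algebraically by the inner-product conditions \eqref{eq:surfaceInnerProducts:imposed} (possible by $C^0$-closeness to Kerr), (ii) define the new coordinates by the affinely parameterized null \emph{geodesic} flow of $\metric$ launched from $-\vecN$, and (iii) observe that, since then $\nabla_{\partial_{\rNew}}\partial_{\rNew}=0$, $\metric(\partial_{\rNew},\partial_{\rNew})\equiv 0$, and $[\partial_{\rNew},\partial_{\vNew}]=[\partial_{\rNew},\vecH]=[\partial_{\rNew},\vecP]=0$ in the new coordinates, each remaining inner product obeys $\nabla_{\vecN}\bigl(\metric(\vecN,Z)\bigr)=\metric(\vecN,\nabla_{Z}\vecN)=\tfrac12\nabla_{Z}\bigl(\metric(\vecN,\vecN)\bigr)=0$ and is hence constant along each geodesic, so the conditions imposed on $\heightInNullGaugeEnforceability(X)$ hold throughout $V$. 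Without this conservation mechanism your ``triangular ODE in $r$'' does not close.

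Two smaller points. Your parenthetical that $\vecN^a\vecN^b\metric_{ab}=0$ ``can be arranged by the freedom in rescaling $\vecN$'' is false: rescaling preserves the causal character of a vector, and the nullity of $-\partial_{\rNew}$ with respect to $\metric$ is exactly what the null-geodesic construction delivers. Also, the construction must be checked to extend smoothly across the poles of the spherical coordinates (the paper does this by re-expressing $\vecH$ and $\vecP$ in a combination regular at $\theta=0,\pi$), which your argument does not address.
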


\begin{figure}
\label{fig:SetsForImposingGauge}
\scalebox{0.75}{
\newcommand{\cylinderline}[1]{ ({2+sin(#1)}, {2.5+sqrt(2)*cos(#1)}, {2 + sin(#1)^2/4}) -- ({2+sin(#1)}, {2.5+sqrt(2)*cos(#1)}, {4 + sin(#1)^2/4})}
\newcommand{\innercylinderline}[1]{ ({2+sin(#1)/sqrt(2)}, {2.5+cos(#1)}, {2.2+1/4 + sin(#1)^2/8}) -- ({2+sin(#1)/sqrt(2)}, {2.5+cos(#1)}, {3.8+1/4 + sin(#1)^2/8})}
\newcommand{\bottomcylinderline}[1]{ ({2+sin(#1)}, {2.5+sqrt(2)*cos(#1)}, 0) -- ({2+sin(#1)}, {2.5+sqrt(2)*cos(#1)}, {2 + sin(#1)^2/4})}
\newcommand{\innerbottomcylinderline}[1]{ ({2+sin(#1)/sqrt(2)}, {2.5+cos(#1)}, {0}) -- ({2+sin(#1)/sqrt(2)}, {2.5+cos(#1)}, {2.2+1/4 + sin(#1)^2/8})}
\pgfmathsetmacro{\LeftAngle}{165}
\pgfmathsetmacro{\RightAngle}{-12}
\tdplotsetmaincoords{66}{110}
\begin{tikzpicture}[tdplot_main_coords,scale=2]
 \draw[thick,->] (0,0,0) coordinate (O) -- (4,0,0) coordinate(X) node[pos=1.1]{};
 \draw[thick,->] (O) -- (0,4,0) node[pos=1.1]{$v,\omega$};
 \draw[thick,->] (O) -- (0,0,4) node[pos=1.1]{$r$};
 \path[opacity=0.6,left color=blue!50,right color=blue!80,middle color=blue!20,shading angle=75]
  plot[variable=\t,domain=0:360,smooth,samples=90] ({2+sin(\t)/sqrt(2)}, {2.5+cos(\t)}, {3.8+1/4 + sin(\t)^2/8}) -- cycle;
 \path[opacity=0.4,left color=blue!50,right color=blue!80,middle color=blue!20,shading angle=75]
  plot[variable=\t,domain=0:360,smooth,samples=90] ({2+sin(\t)}, {2.5+sqrt(2)*cos(\t)}, {4 + sin(\t)^2/4}) -- cycle;
 \draw[dotted] plot[variable=\t,domain=-1:1,smooth,samples=90] ({2}, {2.5+sqrt(2)*\t}, {3+3/2 - (\t)^2/2});
 \draw[dotted] plot[variable=\t,domain=-1:1,smooth,samples=90] ({2}, {2.5+\t}, {2.8+3/2 - (\t)^2/4});
 \draw[dotted] plot[variable=\t,domain=-1:1,smooth,samples=90] ({2}, {2.5+sqrt(2)*\t}, {2+3/2 - (\t)^2/2});
 \draw[dotted] plot[variable=\t,domain=-1:1,smooth,samples=90] ({2}, {2.5+sqrt(2)*\t}, {1+3/2 - (\t)^2/2});
 \draw[dotted] plot[variable=\t,domain=-1:1,smooth,samples=90] ({2}, {2.5+\t}, {1.2+3/2 - (\t)^2/4});
 \path[opacity=0.6, left color=blue!70, right color=blue, middle color=blue!50]
  plot[variable=\t,domain=\RightAngle:\LeftAngle,smooth,samples=90] ({2+sin(\t)/sqrt(2)}, {2.5+cos(\t)}, {2.2+1/4 + sin(\t)^2/8}) -- 
  plot[variable=\t,domain=\LeftAngle:\RightAngle,smooth,samples=90] ({2+sin(\t)/sqrt(2)}, {2.5+cos(\t)}, {3.8+1/4 + sin(\t)^2/8}) -- cycle;
 \path[opacity=0.4, left color=blue!70, right color=blue, middle color=blue!50]
  plot[variable=\t,domain=\RightAngle:\LeftAngle,smooth,samples=90] ({2+sin(\t)}, {2.5+sqrt(2)*cos(\t)}, {2 + sin(\t)^2/4}) -- 
  plot[variable=\t,domain=\LeftAngle:\RightAngle,smooth,samples=90] ({2+sin(\t)}, {2.5+sqrt(2)*cos(\t)}, {4 + sin(\t)^2/4}) -- cycle;
 \draw[thick] plot[variable=\t,domain=0:360,smooth,samples=90] ({2+sin(\t)}, {2.5+sqrt(2)*cos(\t)}, {4 + sin(\t)^2/4}) -- cycle;
 \draw[thick] plot[variable=\t,domain=\RightAngle:\LeftAngle,smooth,samples=90] ({2+sin(\t)}, {2.5+sqrt(2)*cos(\t)}, {3 + sin(\t)^2/4});
 \draw plot [variable=\t,domain=\LeftAngle+360:\RightAngle,smooth,samples=90] ({2+sin(\t)}, {2.5+sqrt(2)*cos(\t)}, {3 + sin(\t)^2/4});
 \draw[thick] plot[variable=\t,domain=\RightAngle:\LeftAngle,smooth,samples=90] ({2+sin(\t)}, {2.5+sqrt(2)*cos(\t)}, {2 + sin(\t)^2/4});
 \draw plot [variable=\t,domain=\LeftAngle+360:\RightAngle,smooth,samples=90] ({2+sin(\t)}, {2.5+sqrt(2)*cos(\t)}, {2 + sin(\t)^2/4});
 \draw[thick, fill=gray, opacity=0.4] plot[variable=\t,domain=0:360,smooth,samples=90] ({2+sin(\t)}, {2.5+sqrt(2)*cos(\t)}, 0) -- cycle;
 \draw[thick] \cylinderline{\LeftAngle};
 \draw[thick] \cylinderline{\RightAngle};
 \draw[dashed] \bottomcylinderline{\LeftAngle};
 \draw[dashed] \bottomcylinderline{\RightAngle};
 \draw \innercylinderline{\LeftAngle};
 \draw \innercylinderline{\RightAngle};
 \draw[dashed] \innerbottomcylinderline{\LeftAngle};
 \draw[dashed] \innerbottomcylinderline{\RightAngle};
 \draw plot[variable=\t,domain=0:360,smooth,samples=90] ({2+sin(\t)/sqrt(2)}, {2.5+cos(\t)}, {3.8+1/4 + sin(\t)^2/8}) ;
 \draw plot[variable=\t,domain=0:360,smooth,samples=90] ({2+sin(\t)/sqrt(2)}, {2.5+cos(\t)}, {3+1/4 + sin(\t)^2/8}) ;
 \draw plot[variable=\t,domain=0:360,smooth,samples=90] ({2+sin(\t)/sqrt(2)}, {2.5+cos(\t)}, {2.2+1/4 + sin(\t)^2/8}) ;
 \draw[thick, fill=gray, opacity=0.6] plot[variable=\t,domain=0:360,smooth,samples=90] ({2+sin(\t)/sqrt(2)}, {2.5+cos(\t)}, 0) ;
\node at (2,2.5,0) {$Y$};
\node at (2,2.5,3.25) {$h(Y)$};
\coordinate[label=right:$X$] (Xlabel) at (2,4,0);
\coordinate[label=right:$h(X)$] (hXlabel) at (2,4,3.1);
\node at (2,2.5,4.15) {$V$};
\node at (2,4,4.45) {$U$};
\draw[thick, , decorate,decoration={brace,amplitude=4pt}] ({2+sin(\LeftAngle)/sqrt(2)}, {2.45+cos(\LeftAngle)}, {3+1/4 + sin(\LeftAngle)^2/8}) -- node[anchor=east] {$J\;$} ({2+sin(\LeftAngle)/sqrt(2)}, {2.45+cos(\LeftAngle)}, {3.8+1/4 + sin(\LeftAngle)^2/8}) ;
\draw[thick, decorate,decoration={brace,amplitude=3pt} ]  ({2+sin(\LeftAngle)}, {2.45+sqrt(2)*cos(\LeftAngle)}, {3 + sin(\LeftAngle)^2/4}) -- node[anchor=east] {$I\;$} ({2+sin(\LeftAngle)}, {2.45+sqrt(2)*cos(\LeftAngle)}, {4 + sin(\LeftAngle)^2/4}) ;
\end{tikzpicture}
}
\caption{The sets arising in definition \ref{def:diffeomorphismGauge}.}
\end{figure}

Our other main result is to make the vacuum Einstein equation well-posed by constructing a first-order symmetric hyperbolic system. This involves using the Geroch-Held-Penrose (GHP) formalism \cite{GHP,PenroseRindler} to construct components of $\metric-\metricBackground$, the difference between the connection coefficients of $\metric$ and of $\metricBackground$, the difference of the corresponding curvatures, and some additional variables that describe the difference between foreground and background frames, which we call differential Lorentz transformations. In applying the GHP formalism, it is necessary to make a choice of an equivalence class of frames, which we refer to as a choice of frame gauge. This is explained in section \ref{s:FOSH}. 

\begin{theorem}[Well-posedness]
\label{thm:NLnhIsLWP}
Let $M>0$ and $a\in(-M,M)$. 
Let $\metricBackground$ be the Kerr metric on $\KerrStarDomain$, and let $\vecN$ be a future-directed, ingoing principal null vector. 
\begin{enumerate}
\item The vacuum Einstein equation, the \NLnhCondition{} and the frame gauge hypotheses in definition \ref{def:frameGaugeHypotheses} together imply a first-order symmetric hyperbolic system for the geometric variables in definition \ref{def:geometricVariables}. 
\label{pt:FOSH}
\item The geometric variables in definition \ref{def:geometricVariables} uniquely determine a metric $\metric$. 
\label{pt:systemDeterminesMetric}
\item If the initial data for first-order symmetric hyperbolic system in \ref{pt:FOSH} arise from initial data for the vacuum Einstein equation, then the metric determined by \ref{pt:FOSH}-\ref{pt:systemDeterminesMetric} satisfies the vacuum Einstein equation. 
\end{enumerate}
\end{theorem}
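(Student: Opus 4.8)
The plan is to treat the three parts in turn, with parts \ref{pt:FOSH} and \ref{pt:systemDeterminesMetric} being largely bookkeeping within the GHP formalism and part (iii) carrying the substantive analytic content. For part \ref{pt:FOSH} I would first list the geometric variables of definition \ref{def:geometricVariables}---the frame components of $\metric-\metricBackground$, the differences of the GHP spin coefficients of $\metric$ and of $\metricBackground$, the differences of the Weyl scalars, and the differential Lorentz transformation variables relating the foreground frame to the background frame $(\vecLBackground,\vecMBackground,\vecMbBackground,\vecNBackground)$---and then assemble their equations from three sources: (a) the GHP form of the second Bianchi identity for $\metric$, which for a vacuum metric is the classical symmetric hyperbolic spin-$2$ system for the five Weyl scalars underlying the Teukolsky equations; (b) the GHP Ricci/structure equations, which express GHP derivatives of the spin coefficients through spin coefficients and curvature and, after using $\Ric(\metric)=0$ to remove the Ricci spinor, yield evolution/transport equations for the spin-coefficient differences; and (c) the relations expressing the metric and connection through the frame, i.e. the definition of the differential Lorentz transformation together with metric-compatibility of $\nabla$, which give evolution equations for the metric components and the Lorentz variables. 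The point is that the \NLnhCondition{}, equation \eqref{eq:NLnh}, removes the $\vecN$-contracted components of $\metric$ as unknowns, and the \frameGaugeHypothesesNoRef{} of definition \ref{def:frameGaugeHypotheses} remove the residual frame freedom, leaving a determined system; choosing a time function built from $\vecNSpecified$ and $\vecLSpecified$ one verifies that the principal symbol is symmetric and the time direction non-characteristic. I would present this by splitting the equations into a hyperbolic evolution part and a set of constraint equations (the remaining Bianchi and structure components, the trace and tensorial parts of the gauge condition, and the algebraic frame-normalisation relations).

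For part \ref{pt:systemDeterminesMetric}, given a solution of the system the differential Lorentz transformation variables act on the background null tetrad to produce a foreground coframe, and the frame components of $\metric-\metricBackground$---constrained by \eqref{eq:NLnh}---then assemble into a symmetric $(0,2)$ tensor $\metric$. Uniqueness is immediate, since this assignment is explicit and injective. One must check that the frame so produced is genuinely a null tetrad for the constructed $\metric$, but this is exactly one of the algebraic constraints built into the system in part \ref{pt:FOSH}, hence holds by construction.

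For part (iii) I would run a propagation-of-constraints-and-gauge argument. Introduce the collection of violation quantities: the components of $\Ric(\metric)$ for the metric of part \ref{pt:systemDeterminesMetric}; the differences between the true GHP spin coefficients of $\metric$ and the quantities the system evolves in their place; the difference between the curvature of $\metric$ and the evolved Weyl scalars; and the violations of the \NLnhCondition{} and of the \frameGaugeHypothesesNoRef{}. Using the twice-contracted second Bianchi identity $\nabla^{a}(\Ric_{ab}-\tfrac12 R\,\metric_{ab})=0$ for the constructed $\metric$ together with the equations actually imposed in the first-order system, I would show that these violation quantities satisfy a closed, linear, homogeneous system which is again symmetric hyperbolic---in fact partly a family of transport equations along $\vecNSpecified$, reflecting that the radiation gauge is a transport-type rather than a wave-type condition. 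The hypothesis that the initial data for the first-order system arise from initial data for the vacuum Einstein equation means precisely that the usual Hamiltonian and momentum constraints hold on the initial hypersurface and that the reduction conditions, the gauge condition and the frame normalisation hold there; this makes all violation quantities vanish initially, and uniqueness for the homogeneous system then forces them to vanish throughout $V$, so $\metric$ solves $\Ric(\metric)=0$ and satisfies the \NLnhCondition{}.

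The hard part will be part (iii): verifying that the violation quantities genuinely close into a homogeneous symmetric hyperbolic plus transport system requires exploiting the Bianchi identity and the exact form of the reduced equations, and extra care is needed because the radiation gauge does not propagate via a wave equation---one must exhibit the transport mechanism along $\vecNSpecified$ by which the vacuum equations keep $\vecN^{a}(\metric_{ab}-\metricBackground_{ab})$ zero once it is zero initially. A secondary subtlety, needed to make part \ref{pt:systemDeterminesMetric} consistent along the whole evolution, is to check that the algebraic relations among the differential Lorentz transformation variables and the null-tetrad normalisation conditions are themselves propagated, so that the frame produced remains a valid GHP null tetrad on all of $V$.
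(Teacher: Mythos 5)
Your treatment of parts \ref{pt:FOSH} and \ref{pt:systemDeterminesMetric} matches the paper's: the evolution system is assembled from the Bianchi identities (the only genuinely coupled symmetric-hyperbolic block, for the five Weyl scalars), the Ricci/structure relations, and the frame-gauge evolution equations, with the remaining relations relegated to constraints; the metric is then recovered algebraically from the metric coefficients, the differential Lorentz variables, and the background tetrad. One point worth sharpening in your part \ref{pt:FOSH}: in the paper every equation other than the Bianchi block is a pure transport equation $\thop\varphi=f(\geometricVariables)$ along $\vecN$, so its principal part is diagonal and the symmetric-hyperbolicity check reduces to verifying positive definiteness of the symmetrizer for the curvature subsystem alone; your ``time function built from the two principal null vectors'' phrasing obscures this structural simplification. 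For part (iii) you take a genuinely different route. You propose the classical subsidiary-system argument: introduce violation quantities (components of $\Ric(\metric)$, discrepancies between evolved and true connection/curvature components, gauge violations), show via the contracted Bianchi identity that they satisfy a closed homogeneous symmetric-hyperbolic-plus-transport system, and conclude they vanish from vanishing initial data. The paper instead runs a uniqueness sandwich: it takes the vacuum solution launched by the given initial data (standard local existence), puts it into the \NLnhCondition{} using theorem \ref{thm:NLnhEnforceability}, solves the frame-gauge ODEs for $\nu,\eta$, observes that the resulting geometric variables solve the same first-order system with the same data, and invokes uniqueness for symmetric-hyperbolic systems to identify the two solutions. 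The paper's route is shorter and outsources all the hard work to results already established (gauge enforceability, the lemmas of section \ref{s:FOSH}, FOSH uniqueness), at the cost of presupposing local existence for the vacuum Einstein equation in some other gauge; your route is more self-contained and is the one that would be needed if one wanted the reduced system itself to furnish the existence theory, but the closure of the violation system --- which you correctly flag as the hard part --- is a substantial computation that your proposal does not carry out, and is only gestured at in the paper's remark on propagation of constraints rather than verified there either.
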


It is well established that first-order symmetric-hyperbolic systems are well posed in suitable function spaces \cite{MR748308}. Note that the geometric nature of our variables ensures that the first-order symmetric hyperbolic system is well-defined for all $\omega\in\Sphere$ and not just in a particular coordinate patch on the sphere. 

In the final two sections of this paper, we go further in relating the \NLnhCondition{} for the Einstein equation to previously existing results for the linearized Einstein equation. In section \ref{s:traceCondition}, we apply a residual gauge transformation to further impose a condition on the trace $\metric^{ab}\metricBackground_{ab}$ analogous to that imposed in the linear case by \cite{Chrzanowski, Price:2006ke}. In the final section of this paper, we linearize the Einstein equation with the radiation and frame gauge conditions imposed, and we show that resulting linearized metric coefficients coincide with those constructed in our previous work on the linear stability of the Kerr metric \cite{Andersson:2019dwi}. In the previous and current works, we have made different choices in decomposing the linearized connection and curvature coefficients; the different choices of linearized variables are related by a linear change of variables and satisfy equivalent PDE systems, as explained in section \ref{s:linearization}.

\subsection{Motivation and relation to existing literature}

In this paper, we introduce a new gauge choice to study the stability of Kerr black holes under the evolution generated by the vacuum Einstein equation. This gauge is inspired by what is called the ``outgoing radiation gauge (ORG)'' in \cite{Chrzanowski, Price:2006ke}, a so-called linearized gauge for the linearized Einstein equation.

The Kerr stability problem remains a central problem in the study of the Einstein equation as a hyperbolic differential equation. In brief, the problem is to show that, for any initial data that generates a solution containing a Kerr exterior, any sufficiently small perturbation of such initial data will generate a solution which contains a region that, in the future, converges to some Kerr exterior. So far, most work has focused on the linearized Einstein equation and models for it, such as the wave and Maxwell equations
\cite{bluesoffer03mora,BlueSterbenz,dafrod09red,blue:soffer:integral,blue2008decay,tataru2011localkerr,AnderssonBlue:KerrWave,larsblue15Maxwellkerr,Dafermos:2014cua,pasqualotto2019spin,Ma2017Maxwell}  and the linearized gravity \cite{DHR:SchwarzschildStability,Hung:2017qop,Ma:2017bxq,Dafermos:2017yrz,Andersson:2019dwi,Hafner:2019kov}. Quite recently, a few works \cite{klainermanszeftel2020global,dafermos2021non,Klainerman:2021qzy} have made important progress on the full nonlinear stability of Kerr spacetimes.

We are particularly interested in the following approach to proving decay of solutions to the linearization of the Einstein equation on a Kerr background: The Kerr solutions admit a pair of principal null vectors. At least locally, one can construct a basis consisting of these principal null vectors, and an oriented orthonormal basis for the plane orthogonal to them. The GHP formalism uses spinors to construct the analogue of the Cartan formalism for such bases \cite{GHP,PenroseRindler}. Of central importance, in this set up, the two extreme components of the linearized curvature each satisfy a decoupled equation known as the Teukolsky master equation (TME) \cite{Teukolsky}. Chrzanowski \cite{Chrzanowski} introduced a linearized gauge transformation, 
and showed that, in this linearized gauge, all linearized metric coefficients can be reconstructed from the linearized curvature. In the very slowly rotating case, uniform energy bounds and integrated local energy decay has been shown for the Teukolsky equation \cite{Ma:2017bxq,Dafermos:2017yrz}. Recently similar results have been obtained using physical-space methods \cite{Giorgi:2021skz}. In the full subextremal range, decay is proved for bounded frequencies in \cite{Shlapentokh-Rothman:2020vpj}. 
Higher order perturbations of the Kerr spacetime was studied in \cite{Campanelli:1998jv, Green:2019nam, Loutrel:2020wbw}.

From such results, we have shown that it follows that there are pointwise decay estimates for the linearized metric coefficients in the linear ORG \cite{Andersson:2019dwi}. In spherical symmetry, this linearized gauge choice uses the same choice of null tetrad as in the linearized gauge choice arising from double null coordinates, which has been used previously to show decay of linearized perturbations about Schwarzschild black holes \cite{DHR:SchwarzschildStability}. A significantly different approach to the linear stability problem was taken in \cite{Hafner:2019kov}.

As a geometric equation for curvature, the Einstein equation is invariant under changes of coordinate or, equivalently, diffeomorphisms.  As a consequence of the resulting freedom to choose a diffeomorphism gauge, for any solution of the Einstein equation $\metricBackground$, any vector field $\vecX$, and any solution $\metricPerturbation$ of the linearization of the Einstein equation $\metricBackground$, one finds that $\metricPerturbation +\Lie_{\vecX}\metricBackground$ is also a solution of the linearization of the Einstein equation about $\metricBackground$. The freedom to add any $\Lie_{\vecX}\metricBackground$ is called linearized gauge freedom.

For the linearized Einstein equation, the radiation gauge can be defined in the following way. 

\begin{definition}
\label{def:linearORG}
Let $M>0$ and $a\in(-M,M)$. 
Let $U$ be a subset of the maximal extension of the Kerr black hole with mass and angular momentum per unit mass $M,a$, and let $\metricBackground$ be the metric on $U$.
Let $\vecN$ denote an ingoing principal null vector on $U$.\footnote{Because equation \eqref{eq:linearnh} is homogeneous, the normalisation of $\vecN$ does not need to be specified.}
Let $\metricPerturbation$ be a symmetric $(0,2)$ tensor field on $U$. 

$\metricPerturbation$ is defined to satisfy the \defn{\LnhCondition}
\footnote{Note that \cite{Price:2006ke} calls this the \defn{$\vecN\cdot\metricPerturbation$ gauge}.} if 
\begin{subequations}
\begin{align}
\vecN^a\metricPerturbation_{ab} ={}& 0  ,
\label{eq:linearnh}
\end{align}
and to satisfy the \defn{\linearTraceCondition} if
\begin{align}
\metricBackground^{ab} \metricPerturbation_{ab} ={}& 0 .
\label{eq:linearTraceCondition}
\end{align}
\end{subequations}
$\metricPerturbation$ is defined to satisfy the \defn{\ClassicalORG} (\defn{ORG}) if it satisfies both the radiation gauge and the linear trace conditions. 
\end{definition}
Essentially, this was first introduced in \cite{Chrzanowski} and then clarified in \cite{Price:2006ke}. 
\cite{Price:2006ke} has shown that if $\metricPerturbation$ satisfies the \LnhCondition{}, then there is a linearized gauge transformation so that $\metricPerturbation +\Lie_{\vecX}\metricBackground$ satisfies the \ClassicalORG. From the perspective of naive function counting, it is surprising that all five of the conditions can be imposed, not merely the four of the linear null condition. A careful reading of \cite{Price:2006ke} shows that for any linearized metric (i.e. symmetric $(0,2)$ tensor), one can construct a linear gauge transformation so that the \LnhCondition{} is satisfied on open sets. Furthermore, one can apply further residual gauge transformations that maintain the \LnhCondition{}. From the perspective of naive function counting, it is convenient to consider residual gauge transformations as diffeomorphisms of the initial data set that can be applied in addition to the four gauge conditions that are applied within the spacetime and that generate a well-posed dynamics when combined with the Einstein equation and a frame gauge condition.

While it is clear that if one has a smooth family of gauge transformations $\Phi_t$ then the linearization of this family determines a linear gauge transformation $\frac{\di}{\di t}\Phi_t^* h$, it is not clear that any so-called linear gauge transformation genuinely arises from the linearization of a family of gauge transformations, nor that, even if they did, the family of gauge transformations would have desirable properties. The main results of this paper, theorems \ref{thm:NLnhEnforceability}-\ref{thm:NLnhIsLWP}, show that the \LnhCondition{} does arise from the linearization of a gauge for the full Einstein equation, namely the \NLnhCondition, and that this gauge together with a frame gauge choice gives a locally well-posed Cauchy problem for the Einstein equation. Furthermore, in section \ref{s:traceCondition}, we show that for the full Einstein equation, one can make use of the diffeomorphism gauge freedom to find a gauge that both satisfies the \NLnhCondition{} and such that the trace $\metric^{ab}\metricBackground_{ab}-4$ vanishes quadratically, and the frame gauge hypotheses in definition \ref{def:frameGaugeHypotheses} can be further imposed such that the well-posedness theorem \ref{thm:NLnhIsLWP} holds additionally. Thus, the linearization of this system can be seen as satisfying \ClassicalORG.

The formalism we use to treat the \NLnhCondition{} has important similarities with and differences from the formalism based on principal geodesic structures in \cite{Klainerman:2021qzy}. Both formalisms specify one null vector field that is tangent to null geodesics. They are both frame formalisms based on a choice of a pair of null vector fields such that the orthogonal plane fails to be integrable in the sense of Frobenius. 
By exclusively using properly weighted quantities, we can use the GHP formalism without specifying a choice of basis for the orthogonal plane and, hence, avoid the ``artificial gauge singularities'' noted in \cite[p27]{Klainerman:2021qzy}.
Perhaps in most striking contrast to the previous literature, both formalisms use not one but two classes of frame. In obtaining the first-order symmetric-hyperbolic form of the Einstein equations under the \NLnhCondition{} and the frame gauge hypotheses in definition \ref{def:frameGaugeHypotheses}, we use the background principal null vectorfields $\vecLBackground,\vecNBackground$ of the background Kerr geometry $\metricBackground$ and a foreground pair of vectorfields $\vecL,\vecN=\vecNBackground$ that are null with respect to the new, foreground geometry $\metric$. To each pair of null vectors, we associate the plane that is orthogonal in the relevant geometry. In contrast, the two frames used in the principal geodesic structures of \cite{Klainerman:2021qzy} share the same null legs, but one frame is completed by adjoining a basis for the (non-integrable) orthogonal plane while the other frame is completed by adjoining a basis for the (integrable) tangent space of the spheres that are $r,v$ level sets. Our two classes of frames coincide when the metric is exactly the Kerr metric, which suggests the possibility that the formalism based on the \NLnhCondition{} will provide significant simplifications, in addition to connecting with the previously existing physics literature.

\subsection{Structure of the proofs and of the paper}
Section \ref{s:ProofOfEnforecabilityOfTheNLORG} proves theorem \ref{thm:NLnhEnforceability} about the existence of a gauge transformation to impose the \NLnhCondition. Section \ref{s:FOSH} proves theorem \ref{thm:NLnhIsLWP} on the existence of a first-order symmetric hyperbolic system for the metric components and other geometric quantities; this section includes the definition of the frame gauge and the relevant geometric variables in terms of the GHP formalism. Section \ref{s:traceCondition} proves that perturbations of the trace, $\metric^{ab}\metricBackground_{ab}-4$, can be made to vanish quadratically, in a quantifiable sense introduced in that section; this section is heavily inspired by \cite{Price:2006ke}. Section \ref{s:linearization} treats the linearization of the Einstein equation under our gauge choices and makes a comparison with our earlier work \cite{Andersson:2019dwi}.

%%%%%%%%%%%%%%%%%%%%%%%%%%%%%%%%%%%%%%%%%%%%%
\section{Imposing the \NLnhCondition{}} 
\label{s:ProofOfEnforecabilityOfTheNLORG}
This section begins with some definitions to simplify discussion of the geometry in the directions orthogonal to the principal null vectors. There is then a lemma about metrics satisfying the \NLnhCondition, in particular that the flow along $\vecN=-\partial_r$ generates affinely parameterized null geodesics, as is the case in the Kerr spacetime. Finally, there is a proof of the enforceability of the \NLnhCondition, which is based on appropriately constructing null geodesics. This completes the proof of theorem \ref{thm:NLnhEnforceability}. 

Recall the notions of real null tetrad and complex null tetrads. These are given in appendix \ref {sec:worry}. Unless otherwise specified, a null tetrad is understood to mean an oriented complex null tetrad.

\begin{definition}
Let $M>0$ and $a\in(-M,M)$. Let $U$ be an open subset of $\KerrStarDomain$ parameterized by $(v,r,\omega)$. In the domain of the standard spherical coordinates, define 
\begin{subequations}
\begin{align}
\vecH ={}&\partial_\theta ,\\
\vecP ={}&\partial_\phi-a\sin^2\theta\partial_v .
\end{align}
\end{subequations}
\end{definition}

\begin{lemma}[Necessary results of the \NLnhCondition]
\label{lem:NLnhConsequences}
Let $M>0$ and $a\in(-M,M)$. Let $U$ be an open subset of $\KerrStarDomain$ parameterized by $(v,r,\omega)$. 

If $\metric$ is a Lorentzian metric on $U$ that satisfies the \NLnhCondition, then
\begin{enumerate}
\item $\partial_r$ is null. 
\item In the portion of $U$ covered by spherical coordinates, 
$\vecP$ and $\vecP$ are orthogonal to $\partial_r$. 
\item At each point in the domain of the spherical coordinates, if $\vecN=-\partial_r$ and $\vecM$ is a complex linear combination of $\vecH$ and $\vecP$ such that $\vecM$ and its complex conjugate $\vecMb$  are a complex basis for the space spanned by $\vecH$ and $\vecP$ such that $\metric(\vecM,\vecM)=0$ and $\metric(\vecM,\vecMb)=-1$, then there is a unique, future-directed null vector $\vecL$ that is orthogonal to $\vecM$ and $\vecMb$ and that satisfies $\metric(\vecL,\vecN)=1$. Furthermore, if $\metric(\vecM,\vecM)=\metric(\vecMb,\vecMb)=0$ and $\metric(\vecM,\vecMb)=-1$, then $(\vecL,\vecN,\vecM,\vecMb)$ form a null tetrad. 
\label{pt:NLnhHelper:ExistenceOfTetrad}
\item For all $(v_0,\omega_0)\in\Reals\times\Sphere$, the curve $\gamma(s)=(v_0,s,\omega_0)$ is a (not necessarily affinely parameterized) geodesic. 
\item If $\Sigma$ is $3$-submanifold of $U$ parameterized by $(v,\omega)$, and if $(\vSfc,\omegaSfc)$ are the restrictions of $(v,\omega)$ to $\Sigma$, $(\thetaSfc,\phiSfc)$ denote the values of the standard spherical coordinate corresponding to $\omegaSfc$, and $\rSfc$ is the restriction of $r$ to $\Sigma$, then, in the domain of the standard spherical coordinates, $\partial_{\vSfc}, \partial_{\thetaSfc}, \partial_{\phiSfc} \in T\Sigma\subset TU$ satisfy 
\begin{subequations}
\begin{align}
\metric(\partial_r,\partial_{\vSfc}) ={}& -1 ,\\
\metric(\partial_r,\partial_{\thetaSfc}) ={}& 0, \\
\metric(\partial_r,\partial_{\phiSfc}) ={}& -a\sin^2\theta .
\end{align}
\label{eq:surfaceInnerProducts}
\end{subequations}
\end{enumerate}
\end{lemma}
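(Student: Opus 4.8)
The plan is to reduce all five assertions to two ingredients: (a) since $\vecN$ is a positive multiple of $-\partial_r$, the \NLnhCondition{} \eqref{eq:NLnh} is equivalent to the family of scalar identities $\metric(\partial_r,\partial_\mu)=\metricBackground(\partial_r,\partial_\mu)$ for every coordinate vector field $\partial_\mu$; and (b) the explicit form \eqref{eq:KerrMetric} of $\metricBackground$, from which one reads off $\metricBackground(\partial_r,\partial_r)=0$, $\metricBackground(\partial_r,\partial_\theta)=0$, together with the values of $\metricBackground(\partial_r,\partial_v)$ and $\metricBackground(\partial_r,\partial_\phi)$, all of which are independent of $r$. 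Part (i) is then immediate from (a) and $\metricBackground(\partial_r,\partial_r)=0$. Part (ii) follows by expanding $\vecH=\partial_\theta$ and $\vecP$ in the coordinate frame and combining (a) with these values. For part (v) one expands the surface coordinate fields in the ambient frame, $\partial_{\vSfc}=\partial_v+(\partial_{\vSfc}\rSfc)\partial_r$ and similarly for $\partial_{\thetaSfc},\partial_{\phiSfc}$; pairing each with $\partial_r$, the term proportional to $\partial_r$ drops because $\metric(\partial_r,\partial_r)=0$ by part (i), leaving exactly the asserted values, again by (a) and \eqref{eq:KerrMetric}.

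For part (iv), let $\gamma$ be an integral curve of $\partial_r$. Metric compatibility together with $[\partial_r,\partial_\mu]=0$ and vanishing torsion gives, for each coordinate field $\partial_\mu$,
\[
\metric(\nabla_{\partial_r}\partial_r,\partial_\mu)=\partial_r\metric(\partial_r,\partial_\mu)-\metric(\partial_r,\nabla_{\partial_\mu}\partial_r)=\partial_r\metric(\partial_r,\partial_\mu)-\tfrac12\partial_\mu\metric(\partial_r,\partial_r).
\]
By ingredient (a) the function $\metric(\partial_r,\partial_\mu)=\metricBackground(\partial_r,\partial_\mu)$ is $r$-independent, so the first term vanishes, and $\metric(\partial_r,\partial_r)=0$ identically on $U$, so the second term vanishes; non-degeneracy of $\metric$ then yields $\nabla_{\partial_r}\partial_r=0$, so $\gamma$ is an (in fact affinely parameterized) geodesic.

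Part (iii) is the one requiring an actual argument. Writing $\vecM=(\vecH_1+i\vecH_2)/\sqrt{2}$ with real $\vecH_1,\vecH_2$ in $P:=\mathrm{span}_{\Reals}(\vecH,\vecP)$, the conditions $\metric(\vecM,\vecM)=0$ and $\metric(\vecM,\vecMb)=-1$ force $\metric(\vecH_i,\vecH_j)=-\delta_{ij}$, so $\metric$ is (negative) definite on $P$, hence $TU=P\oplus P^{\perp}$ with $\metric|_{P^{\perp}}$ Lorentzian of signature $(+,-)$. By parts (i)–(ii) the vector $\vecN=-\partial_r$ is null and lies in the two-dimensional Lorentzian plane $P^{\perp}$. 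Such a plane contains exactly two null lines, and two distinct null lines in it are never orthogonal; hence a vector spanning the null line complementary to $\Reals\vecN$ exists, can be uniquely normalised so that its $\metric$-product with $\vecN$ equals $1$, and is the only null vector orthogonal to $P$ (equivalently, to both $\vecM$ and $\vecMb$) with that normalisation — this is $\vecL$. Since two non-proportional future-directed causal vectors have strictly positive mutual product, the requirement $\metric(\vecL,\vecN)=1>0$ forces $\vecL$ to lie in the same half of the null cone as the future-directed $\vecN$, i.e. $\vecL$ is future-directed. For the final sentence of (iii) one simply records the full list of tetrad relations: $\metric(\vecL,\vecL)=\metric(\vecN,\vecN)=\metric(\vecM,\vecM)=\metric(\vecMb,\vecMb)=0$, $\metric(\vecL,\vecN)=1$, $\metric(\vecM,\vecMb)=-1$, and $\metric(\vecL,\vecM)=\metric(\vecL,\vecMb)=\metric(\vecN,\vecM)=\metric(\vecN,\vecMb)=0$ — the last four from $\vecL\perp P$ and part (ii) — which, with the fixed spacetime orientation, is exactly the data of an oriented complex null tetrad in the sense of appendix \ref{sec:worry}.

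The main obstacle is part (iii): identifying $P^{\perp}$ as a two-dimensional Lorentzian plane through $\vecN$ and carrying out the elementary but sign-sensitive argument that the normalisation $\metric(\vecL,\vecN)=1$ selects the future-directed generator of the remaining null line. The other four parts are direct consequences of \eqref{eq:NLnh}, the explicit metric \eqref{eq:KerrMetric}, and the $r$-independence of the components $\metricBackground(\partial_r,\partial_\mu)$.
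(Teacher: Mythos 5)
Your proposal is correct, and for parts (i), (ii), (iii) and (v) it follows essentially the paper's route: read off the components $\metric_{r\mu}=\metricBackground_{r\mu}$ from the gauge condition and the explicit Kerr metric, and for (iii) identify the orthogonal complement of $\mathrm{span}(\vecH,\vecP)$ as a $1+1$-dimensional Lorentzian plane containing the null vector $\vecN$ (your version just spells out the negative-definiteness of $\metric$ on that span and the sign argument for future-directedness, which the paper leaves implicit). The genuine divergence is part (iv). The paper shows only that $\nabla_{\vecN}\vecN$ is parallel to $\vecN$, by checking its contractions with $\vecN$, $\vecM$, $\vecMb$ via commutator identities such as $\metric(\vecN,[\vecN,\vecM])=0$ — hence the cautious phrasing ``not necessarily affinely parameterized.'' You instead exploit the additional fact that the components $\metricBackground(\partial_r,\partial_\mu)$ are all $r$-independent, so that $\metric(\nabla_{\partial_r}\partial_r,\partial_\mu)=\partial_r\metric(\partial_r,\partial_\mu)-\tfrac12\partial_\mu\metric(\partial_r,\partial_r)=0$ for every coordinate field, giving $\nabla_{\partial_r}\partial_r=0$ outright. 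This is a cleaner computation (it is just $\Gamma^{\lambda}{}_{rr}=0$), needs no auxiliary frame $\vecM$, and yields the strictly stronger conclusion that the curves are affinely parameterized geodesics — a fact the paper only recovers later, in the proof of theorem \ref{thm:NLnhEnforceability}, by reconstructing the parameterization from the geodesic flow. Two minor points you leave implicit but which cost nothing: the extension of the statements from the domain of the spherical coordinates to the poles (by continuity, or by the smooth linear combinations of $\vecH$ and $\vecP$ the paper writes down), and, in the last sentence of (iii), that obtaining an \emph{oriented} tetrad may require swapping $\vecM$ and $\vecMb$.
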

\begin{proof}
Unless otherwise specified, in this proof, we work in the domain of the spherical coordinates and then extend by continuity. 
Since $\metric_{r\ia}\di x^\ia=(\di v+a\sin^2\theta\di\phi)$, it follows that $\metric(\partial_r,\partial_r)$ $=\metric_{r\ia}\di x^\ia(\partial_r)$ $=(\di v+a\sin^2\theta\di\phi)(\partial_r)$ $=0$, that $\metric(\partial_r,\vecP)$ $=(\di v+a\sin^2\theta\di\phi)(\partial_\theta)$ $=0$, and that $\metric(\partial_r,\vecP)$ $=(\di v+a\sin^2\theta\di\phi)(\partial_\phi-a\sin^2\theta\partial_v)$ $=0$, which establishes the first two claims in the domain of the spherical coordinates. By continuity, $\partial_r$ remains null at the poles of the spherical coordinates. The plane orthogonal to $\vecM$ and $\vecMb$ is a $1+1$-dimensional Lorentzian vector space with a time orientation, and, since $\vecN$ is null but not zero, the existence of a unique $\vecL$ as in the statement of point \ref{pt:NLnhHelper:ExistenceOfTetrad} holds. 

To show that the curves $(v_0,s,\omega_0)$ are (not necessarily affinely parameterized) geodesics it is sufficient to show that $\ddot\gamma^\ib=\vecN^\ia\nabla_\ia\vecN^\ib$ is parallel to $\vecN$. This is equivalent to $\ddot\gamma^\ib\vecN_\ib$ $=\ddot\gamma^\ib\vecM_\ib$ $=\ddot\gamma^\ib\vecMb_\ib=0$. Trivially, 
\begin{align}
\vecN_\ib\vecN^\ia\nabla_\ia\vecN^\ib
={}&\frac12\vecN^\ia\nabla_\ia(\vecN_\ib\vecN^\ib)
=0, 
\end{align}
since $\vecN_\ib\vecN^\ib=0$. Before continuing, first observe that the commutator $[\vecN,\vecM]$ satisfies
\begin{subequations}
\begin{align}
[\vecN,\vecM]
={}&-[\partial_r,\vecM^\Theta\vecH +\vecM^\Phi\vecP] \nonumber\\
={}&-[\partial_r,\vecM^\Theta\partial_\theta +\vecM^\Phi(\partial_\phi-a\sin^2\theta\partial_v)] \nonumber\\
={}&-(\partial_r\vecM^\Theta)\vecH -(\partial_r\vecM^\Phi)\vecP ,\\
\metric(\vecN,[\vecN,\vecM])
={}& 0. 
\end{align}
\end{subequations}
Now, observe, from the orthogonality conditions and from properties of the commutator, that 
\begin{subequations}
\begin{align}
\vecM_\ib\vecN^\ia\nabla_\ia\vecN^\ib
={}& \vecN^\ia\nabla_\ia(\vecM_\ib\vecN^\ib) -\vecN_\ib\vecN^\ia\nabla_\ia\vecM^\ib
=-\vecN_\ib\vecN^\ia\nabla_\ia\vecM^\ib,\\
\vecN_\ib\vecN^\ia\nabla_\ia\vecM^\ib
={}& 0 +\vecN_\ib\vecM^\ia\nabla_\ia\vecN^\ib +\vecN_\ib[\vecN,\vecM]^\ib\nonumber\\
={}&\frac12\vecM^\ia\nabla_\ia(\vecN_\ib\vecN^\ib) +0 
=0 
\label{eq:llnablam}. 
\end{align}
\end{subequations}
Observe that $\cos\phi\partial_\theta+\frac{\sin\phi}{\sin\theta}(\partial_\phi-a\sin^2\theta\partial_v)$ and $\sin\phi\partial_\theta-\frac{\cos\phi}{\sin\theta}(\partial_\phi-a\sin^2\theta\partial_v)$ form a basis for the planes they span, and that this combination extends smoothly to $\theta=0$ and to $\theta=\pi$. Thus, the results extend from the domain of the spherical coordinates to all of $U$. 

From the chain rule, one finds $\partial_{\vSfc}= \partial_v +\frac{\partial \rSfc}{\partial \vSfc}\partial_r$. From this and the fact that $\partial_r$ is null, it follows that $\metric(\partial_r,\partial_{\vSfc}) = \metric(\partial_r,\partial_{v})$, which is equal to $-1$ by the \NLnhCondition. This proves the first equation of \eqref{eq:surfaceInnerProducts}. Replacing $\vSfc$ by $\thetaSfc$ and $\phiSfc$, one obtains the remaining two equations. 
\end{proof}

\begin{proof}[Proof of the enforceability of the \NLnhCondition, theorem \ref{thm:NLnhEnforceability}]
To begin we construct the gauge transformation. In this paragraph $(v,r,\omega)$ denotes the original parameterization in $V$. On $h(X)$, define $(\vSfc,\omegaSfc)$ and $\rSfc$ to be the restrictions of $(v,\omega)$ and $r$ respectively. By the closeness (in $C^0$) of $\metric$ to $\metricBackground$, at each point $p\in h(X)$ in the domain of the spherical coordinates, there is a unique vector $\vecN$ in $T_p W$ such that $\vecN$ is null and satisfies the analogue of \eqref{eq:surfaceInnerProducts}, i.e.  
\begin{subequations}
\label{eq:surfaceInnerProducts:imposed}
\begin{align}
\metric(\vecN,\partial_{\vSfc}) ={}& 1 
\label{eq:surfaceInnerProducts:imposed:v},\\
\metric(\vecN,\partial_{\thetaSfc}) ={}& 0, \\
\metric(\vecN,\partial_{\phiSfc}) ={}& a\sin^2\theta .
\end{align}
\end{subequations}
Furthermore, since $a\sin^2\theta\di\phi$ extends smoothly to $0$ in $T\Sphere$, the vector field $\vecN$ has a unique continuous extension from the portion of $\heightInNullGaugeEnforceability(X)$ covered by spherical coordinates to all of $\heightInNullGaugeEnforceability(X)$. To avoid overloading notation, let $\vecN$ denote this extension. 
By the $C^2$ closeness of $\metric$ and $\metricBackground$, $\vecN$ is $C^2$ on $h(X)$, and there is an $\varepsilon_0>0$ and an open neighbourhood $W$ of $h(X)$ such that the geodesic flow defines a diffeomorphism $(-\varepsilon_0,\varepsilon_0)\times h(Y)\rightarrow U$. At $q\in W$, define $(\vNew,\omegaNew)$ to be the value of $(\vSfc,\omegaSfc)$ at the unique point $p\in h(X)$ such that $q$ is on the geodesic launched by $\vecN$ at $p$. (The diffeomorphism guarantees the existence of such a point.) Let $\tilde{\gamma}_{(\vNew,\omegaNew)}(s)$ denote the geodesic corresponding to the values $(\vNew,\omegaNew)$ with, on $h(X)$, the initial conditions $s=r$ and $\frac{\di}{\di s}\tilde{\gamma} =-\vecN$. Set $\rNew=s$. Thus, $(\vNew,\rNew,\omegaNew)$ is a gauge choice. 
In this parameterization, $-\partial_{\rNew}$ is null, since it is the tangent to a geodesic launched from a null vector. 

It remains to show the \NLnhCondition{} holds in this diffeomorphism gauge. For the remainder of this proof $(v,r,\omega)$ denotes the parameters in the new parameterization. In the domain of the spherical coordinates, the form $\lambda =i_\vecN\metric =-\metric_{\ia\ib}(\partial_r^\ib)\di x^\ia$ can be expanded, in $\di v$, $\di r$, $\di\theta$, and $\di\phi$. It is sufficient to show that $\metric(\partial_r,\partial_r)$ $=0$ $=\metric(\partial_r,\vecH)$ $=\metric(\partial_r,\vecP)$ and $\metric(-\partial_r,\partial_v)=1$. 
Since $\partial_r$ is null, clearly $\metric(\partial_r,\partial_r)=0$. 
From \eqref{eq:surfaceInnerProducts:imposed}, $\metric(\partial_r,\vecH)$ has the desired value on $h(X)$. Let $\vecN$ denote $-\partial_r$. Observe that since $\partial_r$ is tangent to an affinely parameterized geodesic, $\nabla_{\partial_r}\partial_r=0$. Observe further that $[\vecN,\vecH]$ $=[-\partial_r,\partial_\theta]$ $=0$. Thus, 
\begin{align}
0
={}&\metric(\nabla_{\vecN}\vecN,\vecH) \nonumber\\
={}& \nabla_{\vecN}\left(\metric(\vecN,\vecH)\right)
-\metric(\vecN,\nabla_{\vecN}\vecH) \nonumber\\
={}& \nabla_{\vecN}\left(\metric(\vecN,\vecH)\right)
-\metric(\vecN,\nabla_{\vecH}\vecN) \nonumber\\
={}& \nabla_{\vecN}\left(\metric(\vecN,\vecH)\right)
-\frac12\nabla_{\vecH}\left(\metric(\vecN,\vecN)\right) .
\end{align}
The final term vanishes since $\vecN$ is always a null vector. Thus, $\metric(\vecN,\vecH)$ is constant, and, in particular, since it is initially zero, it remains zero along the entire geodesic. Since $[\vecN,\vecP]$ $=[-\partial_r,\partial_\phi-a\sin^2\theta\partial_v]$ $=0$, the same argument applies with $\vecP$. Since $\metric(\partial_r,\partial_\theta)=0$, the $\di\theta$ component of $\lambda$ vanishes. Since $\metric(\partial_r,\vecP)=0$, the $\di\phi$ component of $\lambda$ is $a\sin^2\theta$ times the coefficient of $\di v$. Since $[\vecN,\partial_v]=0$, a similar calculation shows that $\metric(\vecN,\partial_v)$ is constantly $-1$. Since the parameterization is constructed smoothly, the construction extends from the domain of the spherical coordinates to the full sphere. Since the Kerr metric is itself a solution, from the continuity of solutions of ODE, it follows that for any $V\subset U$, if the initial data is sufficiently close (in a sufficiently high regularity class), the gauge transformation maps $V$ to a subset of $U$. Observe that the new metric on the initial hypersurface $h(X)$ depends only on the old metric on $h(X)$, which gives the desired norm property. This completes the proof. 
\end{proof}

%%%%%%%%%%%%%%%%%%%%%%%%%%%%%%%%%%%%%%%%%%%%%
\section{Field equations}
\label{s:FOSH}

Within this section, we introduce geometric variables and a frame gauge condition, which are used to construct a first-order symmetric-hyperbolic system. 

\subsection{GHP Notation}
In this subsection, we review the GHP notation \cite{GHP} for connection and curvature components, which we will use throughout this paper. Appendix \ref{sec:worry} explains the nature of GHP scalars and recalls the definitions of tetrads and properly weighted scalars. All calculations for this paper were done using the \emph{xAct} suite for Mathematica \cite{xAct}, and in particular the \emph{SpinFrames} package \cite{SpinframesPackage}. 

\begin{definition}
\label{def:spincoeff}
Given any null tetrad $(l^a, n^a, m^a, \bar{m}^a)$ and the Levi-Civita connection $\nabla_a$ with respect to the corresponding metric, the spin coefficients are
\begin{subequations}
\begin{align}
\kappa ={}&l^{a} m^{b} \nabla_{a}l_{b},&
\kappa '={}&\bar{m}^{a} n^{b} \nabla_{b}n_{a},\\
\rho ={}&m^{a} \bar{m}^{b} \nabla_{b}l_{a},&
\rho '={}&m^{a} \bar{m}^{b} \nabla_{a}n_{b},\\
\sigma ={}&m^{a} m^{b} \nabla_{a}l_{b},&
\sigma '={}&\bar{m}^{a} \bar{m}^{b} \nabla_{b}n_{a},\\
\tau ={}&m^{a} n^{b} \nabla_{b}l_{a},&
\tau '={}&l^{a} \bar{m}^{b} \nabla_{a}n_{b},
\end{align}
\end{subequations}
and 
\begin{subequations}
\begin{align}
\beta ={}&- \tfrac{1}{2} m^{a} \bar{m}^{b} \nabla_{a}m_{b}
 -  \tfrac{1}{2} l^{a} m^{b} \nabla_{b}n_{a},&
\beta '={}&\tfrac{1}{2} \bar{m}^{a} \bar{m}^{b} \nabla_{b}m_{a}
 + \tfrac{1}{2} l^{a} \bar{m}^{b} \nabla_{b}n_{a},\\
\epsilon ={}&- \tfrac{1}{2} l^{a} \bar{m}^{b} \nabla_{a}m_{b}
 -  \tfrac{1}{2} l^{a} l^{b} \nabla_{b}n_{a},&
\epsilon '={}&\tfrac{1}{2} \bar{m}^{a} n^{b} \nabla_{b}m_{a}
 + \tfrac{1}{2} l^{a} n^{b} \nabla_{b}n_{a}.
\end{align}
\end{subequations}
\end{definition}

\begin{definition}
Given any null tetrad $(l^a, n^a, m^a, \bar{m}^a)$ and the Weyl tensor $C_{abcd}$ with respect to the corresponding metric, we define the Weyl scalars 
\begin{subequations}
\begin{align}
\Psi_{0}{}={}&l^{a} l^{c} m^{b} m^{d} C_{abcd},&
\Psi_{1}{}={}&l^{a} l^{c} m^{b} n^{d} C_{abcd},&
\Psi_{2}{}={}&l^{a} m^{b} \bar{m}^{c} n^{d} C_{abcd},\\
\Psi_{3}{}={}&l^{a} \bar{m}^{c} n^{b} n^{d} C_{abcd},&
\Psi_{4}{}={}&\bar{m}^{a} \bar{m}^{c} n^{b} n^{d} C_{abcd}.
\end{align}
\end{subequations}
\end{definition}
One of the central results of the GHP framework is that $\kappa,\tau,\rho,\sigma,\kappa',\tau',\rho',\sigma'$ and all the $\Psi_i$ are properly weighted, but $\beta,\epsilon,\beta',\epsilon'$ are not.

%%%%%%%%%%%%%%%%%%%%%%%%%%%%%%%%%%%%%%%%%%%%%%%%%%%%%%%%%%%%%%%%%%%%
\subsection{Background and foreground metrics}

To begin our analysis of perturbations of the Kerr metric, we introduce the following hypotheses, which we typically use throughout the rest of this section. 

\begin{definition}[The \radiationGaugeHypothesesNoRef]
\label{def:radiationGaugeHypotheses}
The \defn{\backgroundHypothesesNoRef} are defined as follows: 
``Let $M>0$ and $a\in(-M,M)$. 
Let $\mathring{g}_{ab}$ be the background Kerr metric as in equation \eqref{eq:KerrMetric} with parameters $(M, a)$. 
Let $U$ be an open subset of $\KerrStarDomain$. 
Let $(\vecLBackground^{a},\vecNBackground^{a},\vecMBackground^{a},\vecMbBackground^{a})$ denote an arbitrary element of the set of local complex null tetrads such that $\vecLBackground$ and $\vecNBackground$ are outgoing and ingoing, future-directed principal null vectors. 
Let $(\vecLBackground_{a},\vecNBackground_{a},\vecMBackground_{a},\vecMbBackground_{a})$ be the corresponding co-frame. 
The spin coefficients and Weyl scalars with respect to this tetrad are indicated with the accent $\mathring{}$.'' 

The \defn{\radiationGaugeHypothesesNoRef} are defined to be the \backgroundHypothesesNoRef{} 
together with the assumption that $g_{ab}$ is a Lorentzian metric satisfying the vacuum Einstein equation and the \NLnhCondition{} 
\begin{align}
\vecNBackground^{b} (g_{ab} -  \mathring{g}_{ab})={}&0.
\end{align}
The \defn{background and foreground metrics} are defined to be $\metricBackground_{ab}$ and $\metric_{ab}$ respectively with inverses $\metricBackground{}^{ab}$ and $g^{\#}{}^{ab}$. 
\end{definition}

Because $(\vecLBackground,\vecNBackground,\vecMBackground,\vecMbBackground)$ is used to denote an arbitrary element of the set of local tetrads in Kerr aligned with $(\vecLSpecified,\vecNSpecified)$, there is a freedom to apply spin and boost transformations. As long as our variables and operators are made so that they transform properly under such transformation, this allows us to introduce properly weighted quantities, which are globally defined. In the language of principal-$G$ bundles, as long as our variables transform equivariantly, we may use local tetrads to construct a globally defined section of an associated complex line bundle. In the language of gauge theory, we have a gauge freedom corresponding to choice of boost and spin transformation, and, as long as our variables transform correctly under such gauge transformations, they are globally defined gauge fields. This allows us to avoid problems at the poles in spherical coordinates that might arise from, for example, taking $\vecMBackground=2^{-1/2}(r-ia\cos\theta)^{-1}(\partial_\theta+i(\sin\theta)^{-1}(\partial_\phi+a\sin^2\theta\partial_v))$ or any other explicit combination of $\vecH$ and $\vecP$.

\begin{definition}[Foreground metric coefficients in the background frame]
\assumeRadiationHypotheses{}. 
Define the \defn{foreground metric coefficients in the background frame} to be 
\begin{subequations}
\begin{align}
G_{2}={}&\vecMbBackground^{a} \vecMbBackground^{b}(g_{ab} - \mathring{g}_{ab})
= \vecMbBackground^{a} \vecMbBackground^{b}g_{ab},\\
G_{1}={}& \mathring{l}^{a} \vecMbBackground^{b}(g_{ab} - \mathring{g}_{ab})
= \mathring{l}^{a} \vecMbBackground^{b}g_{ab},\\
G_{0}={}& \mathring{l}^{a} \mathring{l}^{b}(g_{ab} - \mathring{g}_{ab})
= \mathring{l}^{a} \mathring{l}^{b}g_{ab},\\
\slashed{G}_{}={}&\mathring{g}^{ab} (g_{ab} - \mathring{g}_{ab})
=\mathring{g}^{ab} g_{ab} - 4,\\
G^{\#}_{2}={}& \vecMbBackground_{a} \vecMbBackground_{b}(g^{\#}{}^{ab} - \mathring{g}^{ab})
= \vecMbBackground_{a} \vecMbBackground_{b} g^{\#}{}^{ab},\\
G^{\#}_{1}={}&\mathring{l}_{a} \vecMbBackground_{b}(g^{\#}{}^{ab} - \mathring{g}^{ab}) 
= \mathring{l}_{a} \vecMbBackground_{b} g^{\#}{}^{ab},\\
G^{\#}_{0}={}&\mathring{l}_{a} \mathring{l}_{b}(g^{\#}{}^{ab} - \mathring{g}^{ab})
= \mathring{l}_{a} \mathring{l}_{b} g^{\#}{}^{ab},\\
\slashed{G}^{\#}={}&\mathring{g}_{ab}(g^{\#}{}^{ab} - \mathring{g}^{ab})
=\mathring{g}_{ab} g^{\#}{}^{ab} - 4.
\end{align}
\end{subequations}
\end{definition}
Observe that they vanish if the perturbation vanishes. They are all properly weighted with respect to background boost and spin transformations. The remaining metric coefficients vanish by the \NLnhCondition. The set $(G^{\#}_{2}, G^{\#}_{1}, G^{\#}_{0}, \slashed{G}^{\#})$ can be algebraically computed from the set $(G_{2}, G_{1}, G_{0}, \slashed{G}_{})$ and vice versa via
\begin{subequations}
\label{eq:GToInvG}
\begin{align}
G^{\#}_{2}={}&- \frac{G_{2}}{(1 + \tfrac{1}{2} \slashed{G}_{})^2 - |G_{2}|^2},\\
G^{\#}_{1}={}&- \frac{(1 + \tfrac{1}{2} \slashed{G}_{}) G_{1} + \overline{G_{1}} G_{2}}{(1 + \tfrac{1}{2} \slashed{G}_{})^2 - |G_{2}|^2},\\
G^{\#}_{0}={}&- G_{0}
 -  \frac{2 (1 + \tfrac{1}{2} \slashed{G}_{}) \overline{G_{1}} G_{1} + \overline{G_{2}} G_{1}{}^2 + \overline{G_{1}}{}^2 G_{2}}{(1 + \tfrac{1}{2} \slashed{G}_{})^2 - |G_{2}|^2},\\
\slashed{G}^{\#}={}&
 \frac{1}{1 -  |G_{2}| + \tfrac{1}{2} \slashed{G}_{}}
 + \frac{1}{1 + |G_{2}| + \tfrac{1}{2} \slashed{G}_{}}
 - 2,
\end{align}
\end{subequations}
\begin{subequations}
\label{eq:InvGToG}
\begin{align}
G_{2}={}&- \frac{G^{\#}_{2}}{(1 + \tfrac{1}{2} \slashed{G}^{\#})^2- |G^{\#}_{2}|^2},\\
G_{1}={}&- \frac{(1 + \tfrac{1}{2} \slashed{G}^{\#}) G^{\#}_{1} + \overline{G^{\#}_{1}} G^{\#}_{2}}{(1 + \tfrac{1}{2} \slashed{G}^{\#})^2 - |G^{\#}_{2}|^2},\\
G_{0}={}&- G^{\#}_{0}
 -  \frac{2 (1 + \tfrac{1}{2} \slashed{G}^{\#}) \overline{G^{\#}_{1}} G^{\#}_{1} + \overline{G^{\#}_{2}} G^{\#}_{1}{}^2 + \overline{G^{\#}_{1}}{}^2 G^{\#}_{2}}{(1 + \tfrac{1}{2} \slashed{G}^{\#})^2- |G^{\#}_{2}|^2},\\
\slashed{G}_{}={}&
 \frac{1}{1 -  |G^{\#}_{2}| + \tfrac{1}{2} \slashed{G}^{\#}}
 + \frac{1}{1 + |G^{\#}_{2}| + \tfrac{1}{2} \slashed{G}^{\#}} 
 -2.
\end{align}
\end{subequations}

\subsection{Frame choice}
Given the set $(G^{\#}_{2}, G^{\#}_{1}, G^{\#}_{0}, \slashed{G}^{\#})$ of background frame components of the inverse foreground metric $g^{\#}{}^{ab}$, we can construct a null tetrad for the foreground metric. However, due to Lorentz gauge freedom this frame is not unique. Due to the fact that the \NLnhCondition{} singles out $\mathring{n}^{a}$, we choose to use it also in the foreground tetrad, i.e. $n^a=\mathring{n}^{a}$.
With this leg fixed, the remaining group of Lorentz transformations are described by one real differential spin rotation parameter $\nu$ with $(p,q)$-weight $(0,0)$ and a complex parameter $\eta$ with $(p,q)$-weight $(2,0)$. 
\begin{remark}
In principle one could instead demand that $n^a$ is merely proportional to $\mathring{n}^{a}$. Doing this would introduce a real differential boost parameter $\mu$ to the group of Lorentz transformations, so that $n^{a}=\mu^{-1}\mathring{n}^{a}$. However, as we later would like to set the  $\tilde{\epsilon}'=\epsilon ' -  \mu^{-1}\mathring{\epsilon}'$ to zero, and we find that $\tilde{\epsilon}' + \overline{\tilde{\epsilon}'}=- \mu^{-1}\thop \mu$, we conclude that $\mu=1$, i.e. $n^a=\mathring{n}^{a}$ is sensible. 
\end{remark}

\begin{definition}[Foreground frame]
\label{def:ForegroundFrame}
\assumeRadiationHypotheses. 

A choice of \defn{differential Lorentz transformation variables} is a choice of 
$(\nu,\eta)$ with $(p,q)$-weights $(0,0)$ and $(2,0)$ respectively. 

Assuming a choice of differential Lorentz transformation variables, 
define the \defn{foreground frame} to be
\begin{subequations}
\begin{align}
l^{a}={}&\mathring{l}^{a}
 + (\eta \bar{\eta} + \tfrac{1}{2} G^{\#}_{0}) \mathring{n}^{a}
 - \Bigl( G^{\#}_{1} +  \frac{\eta G^{\#}_{2}}{2 e^{i \nu} \varsigma^{\#}{}} - e^{i \nu} \bar{\eta} \varsigma^{\#}{}\Bigr) \vecMBackground^{a}
 - \Bigl( \overline{G^{\#}_{1}} +  \frac{e^{i \nu} \bar{\eta} \overline{G^{\#}_{2}}}{2 \varsigma^{\#}{}} - \frac{\eta \varsigma^{\#}{}}{e^{i \nu}}\Bigr) \vecMbBackground^{a} ,\\
n^{a}={}&\mathring{n}^{a},\\
m^{a}={}&\eta \mathring{n}^{a}
+e^{i \nu} \varsigma^{\#}{} \vecMBackground^{a}
 -  \frac{e^{i \nu} \overline{G^{\#}_{2}} \vecMbBackground^{a}}{2 \varsigma^{\#}{}},
\end{align}
\end{subequations}
along with the auxiliary variables
\begin{subequations}
\label{eq:defVarsigma}
\begin{align}
\varsigma ={}&\tfrac{1}{2} \sqrt{1 -  |G_{2}| + \tfrac{1}{2} \slashed{G}_{}}
 + \tfrac{1}{2} \sqrt{1 + |G_{2}| + \tfrac{1}{2} \slashed{G}_{}},\\
\varsigma^{\#}{}={}&\tfrac{1}{2} \sqrt{1 - |G^{\#}_{2}| + \tfrac{1}{2}\slashed{G}^{\#}}
 + \tfrac{1}{2} \sqrt{1 + |G^{\#}_{2}| + \tfrac{1}{2}\slashed{G}^{\#}}.
\end{align}
\end{subequations}
\end{definition}

\begin{lemma}
\assumeRadiationHypothesesAndLorentzTransformVariables. 

The foreground frame is a null tetrad for the foreground metric $g_{ab}$, i.e.
\begin{align}
g^{\#}{}^{ab}={}&2 l^{(a}n^{b)} - 2 m^{(a}\bar{m}^{b)}.
\end{align}
The corresponding co-frame is
\begin{subequations}
\begin{align}
l_{a}={}&\mathring{l}_{a}
 + \Bigl(\eta \bar{\eta} -  \tfrac{1}{2} G^{\#}_{0} -  e^{i \nu} \bar{\eta} \overline{G^{\#}_{1}} \varsigma -  \frac{\eta G^{\#}_{1} \varsigma}{e^{i \nu}} -  \frac{e^{i \nu} \bar{\eta} \overline{G^{\#}_{2}} G^{\#}_{1} \varsigma}{2 \varsigma^{\#}{}^2} -  \frac{\eta \overline{G^{\#}_{1}} G^{\#}_{2} \varsigma}{2 e^{i \nu} \varsigma^{\#}{}^2}\Bigr)\mathring{n}_{a}\nonumber\\
&  +  \Bigl(e^{i \nu} \bar{\eta} \varsigma +  \frac{\eta G^{\#}_{2} \varsigma}{2 e^{i \nu} \varsigma^{\#}{}^2}\Bigr)\vecMBackground_{a}
+ \Bigl( \frac{\eta \varsigma}{e^{i \nu}} +  \frac{e^{i \nu} \bar{\eta} \overline{G^{\#}_{2}} \varsigma}{2 \varsigma^{\#}{}^2}\Bigr)\vecMbBackground_{a},\\
n_{a}={}&\mathring{n}_{a},\\
m_{a}={}& \Bigl( \eta - e^{i \nu} \overline{G^{\#}_{1}} \varsigma - \frac{e^{i \nu} \overline{G^{\#}_{2}} G^{\#}_{1} \varsigma}{2 \varsigma^{\#}{}^2}\Bigr)\mathring{n}_{a}
 + e^{i \nu}  \varsigma \vecMBackground_{a}
 + \frac{e^{i \nu} \overline{G^{\#}_{2}} \varsigma}{2 \varsigma^{\#}{}^2}\vecMbBackground_{a} .
\end{align}
\end{subequations}
\end{lemma}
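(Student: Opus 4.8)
The plan is to prove both assertions by a direct expansion in the background null tetrad $(\vecLBackground^{a},\vecNBackground^{a},\vecMBackground^{a},\vecMbBackground^{a})$, matching the left‑hand sides against the foreground frame of definition \ref{def:ForegroundFrame} and using only the \NLnhCondition{}, the definitions of $G^{\#}_{0},G^{\#}_{1},G^{\#}_{2},\slashed{G}^{\#}$, the conversion formulas \eqref{eq:GToInvG}--\eqref{eq:InvGToG}, and the defining formulas \eqref{eq:defVarsigma} for $\varsigma,\varsigma^{\#}$. \textbf{Step 1 (frame expansions).} From $\vecNBackground^{b}(g_{ab}-\mathring{g}_{ab})=0$ and $\mathring{g}_{ab}\vecNBackground^{b}=\vecNBackground_{a}$ one gets $g_{ab}\vecNBackground^{b}=\vecNBackground_{a}$, and, contracting with $g^{\#}{}$, the dual identity $g^{\#}{}^{ab}\vecNBackground_{b}=\vecNBackground^{a}$. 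Hence in the background frame all ``$\vecNBackground$‑row'' components of $g^{\#}{}^{ab}$ vanish except the $\vecLBackground\vecNBackground$ one, which equals $1$, so the only surviving components are $G^{\#}_{0},G^{\#}_{1},G^{\#}_{2}$ (by their definitions) together with the $\vecMBackground\vecMbBackground$ one, which $\slashed{G}^{\#}$ forces to be $-(1+\tfrac12\slashed{G}^{\#})$:
\begin{align*}
g^{\#}{}^{ab}={}&2\vecLBackground^{(a}\vecNBackground^{b)}-2(1+\tfrac12\slashed{G}^{\#})\vecMBackground^{(a}\vecMbBackground^{b)}+G^{\#}_{0}\vecNBackground^{a}\vecNBackground^{b}\\
&-2G^{\#}_{1}\vecNBackground^{(a}\vecMBackground^{b)}-2\overline{G^{\#}_{1}}\vecNBackground^{(a}\vecMbBackground^{b)}+G^{\#}_{2}\vecMBackground^{a}\vecMBackground^{b}+\overline{G^{\#}_{2}}\vecMbBackground^{a}\vecMbBackground^{b},
\end{align*}
with the analogous covariant expansion of $g_{ab}$ in terms of $G_{0},G_{1},G_{2},\slashed{G}$, which also gives the matrix of inner products $g(\vecLBackground,\vecLBackground),g(\vecLBackground,\vecNBackground),\dots,g(\vecMBackground,\vecMbBackground)$ of the background legs with respect to $g$.

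\textbf{Step 2 ($\varsigma$‑identities).} Squaring \eqref{eq:defVarsigma} and simplifying the cross term gives, besides $(\varsigma^{\#})^{2}-\tfrac{|G^{\#}_{2}|^{2}}{4(\varsigma^{\#})^{2}}=\sqrt{(1+\tfrac12\slashed{G}^{\#})^{2}-|G^{\#}_{2}|^{2}}$, the key identity
\begin{align*}
(\varsigma^{\#})^{2}+\frac{|G^{\#}_{2}|^{2}}{4(\varsigma^{\#})^{2}}=1+\tfrac12\slashed{G}^{\#},
\end{align*}
and the same two identities for $\varsigma,G_{2},\slashed{G}$; the radicands are positive (so $\varsigma,\varsigma^{\#}>0$) by the Lorentzian non‑degeneracy built into the \radiationGaugeHypotheses{}. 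Writing $p=\sqrt{1-|G_{2}|+\tfrac12\slashed{G}}$, $q=\sqrt{1+|G_{2}|+\tfrac12\slashed{G}}$ and inserting \eqref{eq:GToInvG} into \eqref{eq:defVarsigma} gives moreover $\varsigma=\tfrac12(p+q)$, $\varsigma^{\#}=\tfrac{p+q}{2pq}$, hence $\varsigma/\varsigma^{\#}=pq$; these are the only nontrivial relations needed to combine $\varsigma$‑ and $\varsigma^{\#}$‑terms later.

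\textbf{Step 3 (tetrad and co‑frame).} Substitute $l^{a}$, $n^{a}=\vecNBackground^{a}$, $m^{a}$ from definition \ref{def:ForegroundFrame} into $2l^{(a}n^{b)}-2m^{(a}\bar{m}^{b)}$ and collect the coefficient of each background dyad: the coefficient of $2\vecLBackground^{(a}\vecNBackground^{b)}$ is $1$ and every dyad with a second $\vecLBackground$ slot has coefficient $0$ by inspection; the coefficients of $\vecNBackground^{a}\vecNBackground^{b}$ and $\vecMBackground^{a}\vecMBackground^{b}$ collapse to $G^{\#}_{0}$ and $G^{\#}_{2}$ once the $\eta\bar{\eta}$ and $\varsigma^{\#}$ factors cancel; those of $\vecNBackground^{(a}\vecMBackground^{b)}$ and $\vecNBackground^{(a}\vecMbBackground^{b)}$ collapse to $-2G^{\#}_{1}$ and $-2\overline{G^{\#}_{1}}$ using only $e^{i\nu}e^{-i\nu}=1$; and the coefficient of $\vecMBackground^{(a}\vecMbBackground^{b)}$ collapses to $-2(1+\tfrac12\slashed{G}^{\#})$ precisely via the key identity of Step 2. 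Matching against Step 1 gives $g^{\#}{}^{ab}=2l^{(a}n^{b)}-2m^{(a}\bar{m}^{b)}$ (equivalently, one checks the ten inner products against the Gram matrix of Step 1, the four involving $n=\vecNBackground$ being immediate from $g_{ab}\vecNBackground^{b}=\vecNBackground_{a}$). Since the foreground frame is then a null tetrad, its metric co‑frame is the unique system of covectors with $l_{a}n^{a}=n_{a}l^{a}=1$, $m_{a}\bar{m}^{a}=\bar{m}_{a}m^{a}=-1$ and all remaining pairings zero, so the stated $l_{a},n_{a},m_{a}$ are verified by substituting them, together with $l^{a},n^{a},m^{a}$, into these pairings and reducing with the background pairings of the tetrad: $n_{a}=\vecNBackground_{a}$, the pairings against $n^{a}=\vecNBackground^{a}$, and the conjugate symmetry of $l_{a},m_{a}$ are immediate, and the remaining few are short computations in $p,q,\eta,\bar{\eta},\nu$ using the Step‑2 identities (notably $\varsigma/\varsigma^{\#}=pq$).

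\textbf{Main obstacle.} There is no conceptual difficulty beyond Steps 1--2; the obstacle is the volume and fragility of the bookkeeping in Step 3, where the foreground legs mix $\vecNBackground,\vecMBackground,\vecMbBackground$ with coefficients built from $\varsigma^{\#}$, $G^{\#}_{i}$ and $\eta$, so that $2l^{(a}n^{b)}-2m^{(a}\bar{m}^{b)}$ and the co‑frame pairings each produce many terms whose cancellation rests on simultaneous use of the $\varsigma$‑identities and the nonlinear relations \eqref{eq:GToInvG}--\eqref{eq:InvGToG}. This is exactly the kind of identity best verified — and, as the paper records, actually verified — with computer algebra; in a hand write‑up the one genuine choice is to stay in the $G^{\#}_{i},\varsigma^{\#}$ variables as long as possible so that \eqref{eq:GToInvG}--\eqref{eq:InvGToG} enter only at the end.
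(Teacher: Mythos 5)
Your proposal is correct, and it is essentially the paper's (unwritten) argument: the lemma is stated without proof and rests on the direct computer-algebra verification with xAct/SpinFrames mentioned in section \ref{s:FOSH}, which is exactly the expansion-and-collection you carry out by hand. Your two key ingredients check out — the radiation gauge gives $g^{\#}{}^{ab}\vecNBackground_{b}=\vecNBackground^{a}$ and hence the stated background-frame expansion of $g^{\#}{}^{ab}$, and the identities $(\varsigma^{\#})^{2}+|G^{\#}_{2}|^{2}/(4(\varsigma^{\#})^{2})=1+\tfrac12\slashed{G}^{\#}$ together with $\varsigma/\varsigma^{\#}=pq$ (equivalently \eqref{eq:GslashRelations}) are precisely what make the $\vecMBackground^{(a}\vecMbBackground^{b)}$ coefficient and the co-frame pairings such as $m_{a}\bar{m}^{a}=-1$ close up.
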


\begin{definition}
\assumeRadiationHypothesesAndLorentzTransformVariables. 

Define the \defn{foreground metric coefficients} to be 
\begin{subequations}
\begin{align}
\tilde{\slashed{G}}={}&\mathring{g}_{ab} g^{\#}{}^{ab} - 4,&
\tilde{G}_{2}{}={}&\mathring{g}_{ab} \bar{m}^{a} \bar{m}^{b},&
\tilde{G}_{1}{}={}&\mathring{g}_{ab} l^{a} \bar{m}^{b},&
\tilde{G}_{0}{}={}&\mathring{g}_{ab} l^{a} l^{b},\\
\InvGTrTilde={}&\mathring{g}^{ab} g_{ab} - 4,&
\tilde{G}^{\#}_2{}={}&\mathring{g}^{ab} \bar{m}_{a} \bar{m}_{b},&
\tilde{G}^{\#}_1{}={}&\mathring{g}^{ab} l_{a} \bar{m}_{b},&
\tilde{G}^{\#}_0{}={}&\mathring{g}^{ab} l_{a} l_{b}.
\end{align}
\end{subequations}
Unless otherwise specified, define \defn{metric coefficients} to be the foreground metric coefficients. 
\end{definition}

Note that the background metric coefficients are the components of the foreground metric with respect to the background tetrad, and, conversely, the (foreground) metric coefficients are the components of the background metric with respect to the foreground frame. 

We have the following useful relations 
\begin{subequations}
\label{eq:GslashRelations}
\begin{align}
\slashed{G}{}^{\#}={}&\tilde{\slashed{G}}
=-2 + 4 \varsigma^{\#}{}^2 -  \frac{2 \varsigma^{\#}{}}{\varsigma},&
| G^{\#}_{2}|^2={}&|\tilde{G}_{2}{}|^2
=4 \varsigma^{\#}{}^4 -  \frac{4 \varsigma^{\#}{}^3}{\varsigma},\\
\slashed{G}_{}={}&\InvGTrTilde
=-2 + 4 \varsigma^2 -  \frac{2 \varsigma}{\varsigma^{\#}{}},&
|G_{2}|^2={}&|\tilde{G}^{\#}_2{}|^2
=4 \varsigma^4 - \frac{4 \varsigma^3}{\varsigma^{\#}{}}.
\end{align}
\end{subequations}
The relations between $(\tilde{\slashed{G}}, \tilde{G}_{2}{}, \tilde{G}_{1}{}, \tilde{G}_{0}{})$, $(\InvGTrTilde\negthinspace, \tilde{G}^{\#}_2{}, \tilde{G}^{\#}_1{}, \tilde{G}^{\#}_0{})$ follows the pattern \eqref{eq:GToInvG}. 
Given $\nu$ and $\eta$ we can express the sets $(\tilde{\slashed{G}}, \tilde{G}_{2}{}, \tilde{G}_{1}{}, \tilde{G}_{0}{})$, $(\InvGTrTilde\negthinspace, \tilde{G}^{\#}_2{}, \tilde{G}^{\#}_1{}, \tilde{G}^{\#}_0{})$, $(G^{\#}_{2}, G^{\#}_{1}, G^{\#}_{0}, \slashed{G}^{\#}\hspace{-1pt})$ and $(G_{2}, G_{1}, G_{0}, \slashed{G}_{})$ in terms of each other. For instance 
\begin{subequations}
\label{eq:InvGTildeInvGRelations}
\begin{align}
\tilde{G}^{\#}_2{}={}&\frac{G_{2}}{e^{2i \nu}} ,\\
\tilde{G}^{\#}_1{}={}&
  \frac{G_{1} \varsigma^{\#}{}}{e^{i \nu}}
 - \tfrac{1}{2} \bar{\eta} \slashed{G}_{}
 + \frac{\eta G_{2}}{e^{2i \nu}}
 + \frac{\overline{G_{1}} G_{2} \varsigma^{\#}{}}{2 e^{i \nu} \varsigma^2} ,\\
\tilde{G}^{\#}_0{}={}&
 G_{0}
 - \eta \bar{\eta} \slashed{G}_{}
 + e^{2i \nu} \bar{\eta}^2 \overline{G_{2}}
 + \frac{\eta^2 G_{2}}{e^{2i \nu}}
 + e^{i \nu} \bar{\eta} \varsigma^{\#}{} \Bigl(2 \overline{G_{1}} + \frac{\overline{G_{2}} G_{1}}{\varsigma^2}\Bigr)
 + \frac{\eta \varsigma^{\#}{}}{e^{i \nu}} \Bigl(2 G_{1} + \frac{\overline{G_{1}} G_{2}}{\varsigma^2}\Bigr)\nonumber\\
& + \frac{\varsigma^{\#}{}^2}{\varsigma^2} \bigl((2 + \slashed{G}_{}) \overline{G_{1}} G_{1} + \overline{G_{2}} G_{1}{}^2 + \overline{G_{1}}{}^2 G_{2}\bigr),\\
G^{\#}_{2}={}&- \frac{e^{2i \nu} \tilde{G}^{\#}_2{} \varsigma^{\#}{}^2}{\varsigma^2},\\
G^{\#}_{1}={}&- e^{i \nu} \tilde{G}^{\#}_1{} \varsigma^{\#}{}
 + e^{i \nu} \bar{\eta} (\varsigma^{\#}{} -  \varsigma)
 -  \frac{e^{i \nu} \overline{\tilde{G}}{}^{\#}_1{} \tilde{G}^{\#}_2{} \varsigma^{\#}{}}{2 \varsigma^2}
 + \frac{e^{i \nu} \eta \tilde{G}^{\#}_2{} (\varsigma^{\#}{} + \varsigma)}{2 \varsigma^2},\\
G^{\#}_{0}={}&- \tilde{G}^{\#}_0{}
 + 2 \eta \tilde{G}^{\#}_1{}
 + 2 \bar{\eta} \overline{\tilde{G}}{}^{\#}_1{}
 -  \eta^2 \tilde{G}^{\#}_2{}
 -  \bar{\eta}^2 \overline{\tilde{G}}{}^{\#}_2{}
 + \eta \bar{\eta} \InvGTrTilde.
\end{align}
\end{subequations}

\subsection{Geometric variables and operators}
In this section, we define differential spin coefficients and differential curvature components. The foreground spin coefficients carry all the information about the connection. However, several are not small for a small metric perturbation, because several of the background components are non-vanishing. Furthermore, not all of them are properly weighted with respect to spin and boost transformations of the background frame. Our choice of differential spin coefficients compensate for both of these issues. While the foreground curvature components are properly weighted, the middle curvature component is not small, since the middle curvature component in the background is non-vanishing. Our choice of differential curvature components compensates for this problem. 

\begin{definition}
\label{def:DiffSpinCoeff}
\label{def:geometricVariables}
\assumeRadiationHypothesesAndLorentzTransformVariables. 

Define the \defn{differential spin coefficients}
\begin{subequations}
\begin{align}
\tilde{\beta}={}&\beta
- e^{i \nu} \varsigma^{\#}{} \mathring{\beta}
 -  \frac{e^{i \nu} \overline{G^{\#}_{2}} \mathring{\beta}'}{2 \varsigma^{\#}{}}
 + \eta \mathring{\epsilon}',\\
\tilde{\beta}'={}&\beta '
- \frac{G^{\#}_{2} \mathring{\beta}}{2 e^{i \nu} \varsigma^{\#}{}}
 -  \frac{\varsigma^{\#}{} \mathring{\beta}'}{e^{i \nu}}
 -  \bar{\eta} \mathring{\epsilon}',\\
\tilde{\epsilon}={}&\epsilon -  \mathring{\epsilon}
 + \Bigl(G^{\#}_{1} + \frac{\eta G^{\#}_{2}}{2 e^{i \nu} \varsigma^{\#}{}} -  e^{i \nu} \bar{\eta} \varsigma^{\#}{}\Bigr) \mathring{\beta}
 - \Bigl( \overline{G^{\#}_{1}} +  \frac{e^{i \nu} \bar{\eta} \overline{G^{\#}_{2}}}{2 \varsigma^{\#}{}} - \frac{\eta \varsigma^{\#}{}}{e^{i \nu}}\Bigr) \mathring{\beta}'
 + (\eta \bar{\eta} + \tfrac{1}{2} G^{\#}_{0}) \mathring{\epsilon}',\\
\tilde{\epsilon}'={}&\epsilon ' - \mathring{\epsilon}',\\
\tilde{\kappa}={}&\kappa ,\\
\tilde{\kappa}'={}&\kappa ',\\
\tilde{\rho}={}&\rho - \mathring{\rho},\\
\tilde{\rho}'={}&\rho ' - \mathring{\rho}',\\
\tilde{\sigma}={}&\sigma ,\\
\tilde{\sigma}'={}&\sigma ',\\
\tilde{\tau}={}&\tau - \mathring{\tau},\\
\tilde{\tau}'={}&\tau ' - \mathring{\tau}'.
\end{align}
\end{subequations}
Define the \defn{differential curvature coefficients} as
\begin{align}
\tilde{\Psi}_{0}{}={}&\Psi_{0}{},&
\tilde{\Psi}_{1}{}={}&\Psi_{1}{},&
\tilde{\Psi}_{2}{}={}&\Psi_{2}{} - \mathring{\Psi}_{2}{},&
\tilde{\Psi}_{3}{}={}&\Psi_{3}{},&
\tilde{\Psi}_{4}{}={}&\Psi_{4}{}.
\end{align}
The \defn{geometric variables} are defined to be
\begin{align*}
\geometricVariables=&{}(
\eta,\nu,
\tilde{G}^{\#}_2{}, \InvGTrTilde, \tilde{G}^{\#}_1{}, \tilde{G}^{\#}_0{}, 
\tilde{\sigma}', \tilde{\rho}', \tilde{\tau}', \tilde{\beta},\tilde{\beta}', \tilde{\epsilon}, \tilde{\rho}, \tilde{\sigma}, \tilde{\kappa}, 
\tilde{\Psi}_{0},\tilde{\Psi}_{1},\tilde{\Psi}_{2},\tilde{\Psi}_{3},\tilde{\Psi}_{4}
)^T.
\end{align*}
\end{definition}

The differential variables are chosen so that they are properly weighted with respect to the background tetrad. This may initially seem surprising, since $\mathring{\beta},\mathring{\epsilon},\mathring{\beta}',\mathring{\epsilon}'$ are not. It may be helpful to recall this is similar to the fact that the Christoffel symbols for a connection do not transform as a tensor, but the difference between the Christoffel symbols for two different connections does transform as a tensor. 

This choice of variables is not unique, and not all of them are properly weighted under differential Lorentz transformations. However, they are the simplest choices of variables that are properly weighted under spin and boost transformations of the background tetrad. We are going to use the differential Lorentz transformations to eliminate some of the differential spin coefficients. This would have been impossible if they were properly weighted under the differential Lorentz transformations. 

\begin{definition}
\assumeRadiationHypothesesAndLorentzTransformVariables. 

Define the \defn{foreground GHP operators} acting on a $(p,q)$-weighted scalar $\varphi$ to be 
\begin{subequations}
\begin{align}
\tho \varphi ={}&\thoBG \varphi
 + (\eta \bar{\eta}
 + \tfrac{1}{2} G^{\#}_{0}) \thopBG \varphi
 - \Bigl(G^{\#}_{1}
 +  \frac{\eta G^{\#}_{2}}{2 e^{i \nu} \varsigma^{\#}{}}
 - e^{i \nu} \bar{\eta} \varsigma^{\#}{}\Bigr) \edtBG \varphi
 - \Bigl(\overline{G^{\#}_{1}}
 +  \frac{e^{i \nu} \bar{\eta} \overline{G^{\#}_{2}}}{2 \varsigma^{\#}{}}
 - \frac{\eta \varsigma^{\#}{}}{e^{i \nu}}\Bigr) \edtpBG \varphi\nonumber\\
& -  p \tilde{\epsilon} \varphi
 -  q \overline{\tilde{\epsilon}} \varphi ,\\
\thop \varphi ={}&\thopBG \varphi
 + p \tilde{\epsilon}' \varphi
 + q \overline{\tilde{\epsilon}'} \varphi ,\\
\edt \varphi ={}&\eta \thopBG \varphi
 + e^{i \nu} \varsigma^{\#}{} \edtBG \varphi
 -  \frac{e^{i \nu} \overline{G^{\#}_{2}} \edtpBG \varphi}{2 \varsigma^{\#}{}}
 -  p \tilde{\beta} \varphi
 + q \overline{\tilde{\beta}'} \varphi ,\\
\edtp \varphi ={}&\bar{\eta} \thopBG \varphi
 -  \frac{G^{\#}_{2} \edtBG \varphi}{2 e^{i \nu} \varsigma^{\#}{}}
 + \frac{\varsigma^{\#}{} \edtpBG \varphi}{e^{i \nu}}
 -  q \overline{\tilde{\beta}} \varphi
 + p \tilde{\beta}' \varphi,
\end{align}
\end{subequations}
where $\thoBG$, $\thopBG$, $\edtBG$ and $\edtpBG$ are the classical GHP operators as defined in \cite{GHP} with respect to the background tetrad.
\end{definition}

\begin{remark}
Observe that we define weight to be with respect to the background tetrad. Any background spin and boost transformation will induce the same spin and boost transformation on the foreground tetrad. Hence, any quantity which is properly weighted with respect to the foreground tetrad will become properly weighted with the same weights with respect to the background tetrad, when we have tied the frames together as in definition~\ref{def:ForegroundFrame}. 
For any quantity which is properly weighted with respect to the foreground tetrad, our definition corresponds to the classical definition of GHP operators. Our definition can therefore be seen as an extension to quantities which are weighted only in terms of the background tetrad.
\end{remark}
These GHP operators satisfy the commutator relations in appendix \ref{sec:commutators}.

\subsection{Structure equations}
We now choose $\eta$ and $\nu$ so that two differential spin coefficients are eliminated (in addition to $\kappa'$, which vanishes as a result of $\vecN$ being tangent to null geodesics in the \NLnhCondition{}) and so that the remaining connection coefficients satisfy transport equations. 

\begin{definition}[The \frameGaugeHypothesesNoRef]
\label{def:frameGaugeHypotheses}
\assumeRadiationHypotheses. 

The \defn{\frameGaugeHypothesesNoRef} are defined to hold if there is a choice of differential Lorentz transformation variables satisfying
\begin{subequations}
\label{eq:nuetaevoleq}
\begin{align}
\thop \nu ={}&- \frac{i}{2 \varsigma} (\varsigma^{\#}{}
 -  \varsigma) (\mathring{\rho}'
 -  \bar{\mathring{\rho}}')
 + \frac{i \overline{\tilde{G}}{}^{\#}_2{} \tilde{\sigma}'}{4 \varsigma^2}
 -  \frac{i \tilde{G}^{\#}_2{} \overline{\tilde{\sigma}'}}{4 \varsigma^2}
\label{eq:nuevoleq} ,\\
\thop \eta ={}&\tilde{\beta}
 -  \overline{\tilde{\beta}'}
 + \eta (\mathring{\rho}'
 + \tilde{\rho}')
 + \bar{\eta} \overline{\tilde{\sigma}'}
 -  \mathring{\tau}
 + e^{i \nu} \varsigma^{\#}{} \mathring{\tau}
 + \frac{\overline{\tilde{G}}{}^{\#}_2{} \varsigma^{\#}{} \bar{\mathring{\tau}}}{2 e^{i \nu} \varsigma^2}.
\label{eq:etaevoleq}
\end{align}
\end{subequations}
\end{definition}

\begin{lemma}[Structure equations]
\label{lem:structureEquations}
\assumeRadiationAndFrameGaugeHypotheses. 

The structure equations take the form of a transport system for the metric coefficients 
\begin{subequations}
\label{eq:thopmetric}
\begin{align}
(\thop{} + \mathring{\rho}' + 2 \tilde{\rho}' -  \bar{\mathring{\rho}}')\tilde{G}^{\#}_2{}={}&(2 + \InvGTrTilde) \tilde{\sigma}',
\label{eq:ThopInvGTilde2}\\
(\thop{} + \tilde{\rho}' + \overline{\tilde{\rho}'})\InvGTrTilde={}&-2 \tilde{\rho}'
 - 2 \overline{\tilde{\rho}'}
 + 2 \overline{\tilde{G}}{}^{\#}_2{} \tilde{\sigma}'
 + 2 \tilde{G}^{\#}_2{} \overline{\tilde{\sigma}'},
\label{eq:ThopInvGTrTilde}\\
(\thop{} + \tilde{\rho}' + \bar{\mathring{\rho}}')\tilde{G}^{\#}_1{}={}&
 2 \tilde{\tau}'
-\Bigl(\overline{\tilde{G}}{}^{\#}_1{} \tilde{G}^{\#}_2{} \frac{\varsigma^{\#}{}}{\varsigma} + 2 \tilde{G}^{\#}_1{} \varsigma^{\#}{} \varsigma + 2 \bar{\eta} \varsigma^2\Bigr) (\mathring{\rho}' -  \bar{\mathring{\rho}}')
 -  \bar{\eta} (2 + \InvGTrTilde) \bar{\mathring{\rho}}'
  -  \overline{\tilde{G}}{}^{\#}_1{} \tilde{\sigma}'\nonumber\\
& + \eta \tilde{G}^{\#}_2{} (\mathring{\rho}' + \bar{\mathring{\rho}}')
 + \tfrac{1}{2} \InvGTrTilde (\bar{\mathring{\tau}} + \mathring{\tau}' + \tilde{\tau}')
 - \tilde{G}^{\#}_2{} ( \mathring{\tau} +  \overline{\tilde{\tau}'} +  \bar{\mathring{\tau}}')
   +\frac{e^{i \nu} \tilde{G}^{\#}_2{}\bar{\mathring{\tau}}'}{\varsigma}\nonumber\\
& + 2  (1 -  \varsigma e^{-i \nu}) \mathring{\tau}',
\label{eq:ThopInvGTilde1}\\
\thop \tilde{G}^{\#}_0{}={}&-2 \tilde{\epsilon} - 2 \overline{\tilde{\epsilon}}
+ \bigl(\eta^2 \tilde{G}^{\#}_2{} + \bar{\eta}^2 \overline{\tilde{G}}{}^{\#}_2{} -  \eta \bar{\eta} (2 + \InvGTrTilde)\bigr) (\mathring{\rho}' + \bar{\mathring{\rho}}') \nonumber\\
& -  \bigl(2 \eta \overline{\tilde{G}}{}^{\#}_1{} \tilde{G}^{\#}_2{} - 2 \bar{\eta} \tilde{G}^{\#}_1{} \overline{\tilde{G}}{}^{\#}_2{} + (\eta \tilde{G}^{\#}_1{} -  \bar{\eta} \overline{\tilde{G}}{}^{\#}_1{}) (2 + \InvGTrTilde)\bigr) \varsigma^{\#}{} \varsigma^{-1} (\mathring{\rho}' -  \bar{\mathring{\rho}}')\nonumber\\
& + e^{-i \nu} (2 \overline{\tilde{G}}{}^{\#}_1{} \varsigma^{\#}{} + \tilde{G}^{\#}_1{} \overline{\tilde{G}}{}^{\#}_2{} \varsigma^{\#}{} \varsigma^{-2} + \bar{\eta} \overline{\tilde{G}}{}^{\#}_2{} \varsigma^{-1} - 2 \eta \varsigma) \mathring{\tau}'
 - 2 \overline{\tilde{G}}{}^{\#}_1{} (\bar{\mathring{\tau}} + \mathring{\tau}' + \tilde{\tau}')\nonumber\\
& + e^{i \nu} (2 \tilde{G}^{\#}_1{} \varsigma^{\#}{} + \overline{\tilde{G}}{}^{\#}_1{} \tilde{G}^{\#}_2{} \varsigma^{\#}{} \varsigma^{-2} + \eta \tilde{G}^{\#}_2{} \varsigma^{-1} - 2 \bar{\eta} \varsigma) \bar{\mathring{\tau}}'
 - 2 \tilde{G}^{\#}_1{} (\mathring{\tau} + \overline{\tilde{\tau}'} + \bar{\mathring{\tau}}'),
\label{eq:ThopInvGTilde0}
\end{align}
\end{subequations}
algebraic relations for the spin coefficients
\begin{subequations}
\label{eq:AlgebraicSpincoeff1}
\begin{align}
\tilde{\kappa}'={}&0,\qquad
\tilde{\epsilon}'={}0,\qquad
\tilde{\tau}={}0,\\
\tilde{\rho}' -  \overline{\tilde{\rho}'}={}&\frac{(\varsigma^{\#}{} -  \varsigma) (\mathring{\rho}' -  \bar{\mathring{\rho}}')}{\varsigma},
\label{eq:algebraicrho1}\\
\tilde{\beta}-\overline{\tilde{\beta}'}={}&
 -  \frac{(\overline{\tilde{G}}{}^{\#}_1{} \varsigma^{\#}{} -  \bar{\eta} \overline{\tilde{G}}{}^{\#}_2{} \varsigma^{\#}{} -  \eta \varsigma + 2 \eta \varsigma^{\#}{} \varsigma^2) (\mathring{\rho}' -  \bar{\mathring{\rho}}')}{\varsigma}
 + \frac{\overline{\tilde{G}}{}^{\#}_2{} \varsigma^{\#}{} \mathring{\tau}'}{2 e^{i \nu} \varsigma^2}
 -  \overline{\tilde{\tau}'}
 + (e^{i \nu} \varsigma^{\#}{} - 1) \bar{\mathring{\tau}}',
\label{eq:algebraicbeta1}
\end{align}
\end{subequations}
and a supplementary set of equations displayed in \eqref{eq:StructureSpincoeff1}.
\end{lemma}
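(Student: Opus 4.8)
The plan is to derive the structure equations by specializing the standard Geroch--Held--Penrose (GHP) structure (Ricci) identities, valid for any null tetrad of any metric, to the foreground tetrad of definition \ref{def:ForegroundFrame} and the foreground metric $\metric$, and then re-expressing everything through the geometric variables $\geometricVariables$. First I would use definition \ref{def:ForegroundFrame} and its coframe lemma to write each foreground (co)frame leg as a background-tetrad combination whose coefficients are algebraic in $(\nu,\eta,G^{\#}_2,G^{\#}_1,G^{\#}_0,\slashed{G}^{\#},\varsigma^{\#})$, and use definition \ref{def:DiffSpinCoeff} together with the definition of the foreground GHP operators to write the foreground spin coefficients and GHP operators as their background counterparts plus the differential spin coefficients plus these same algebraic coefficients. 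Then I would substitute the explicit background Kerr data: Kerr is Petrov type D, so $\mathring{\Psi}_0=\mathring{\Psi}_1=\mathring{\Psi}_3=\mathring{\Psi}_4=0$ and, by Goldberg--Sachs, $\mathring{\kappa}=\mathring{\sigma}=\mathring{\kappa}'=\mathring{\sigma}'=0$, while the vacuum Einstein equation on $\metric$ kills all Ricci-tensor terms in the GHP identities. What remains is a long but mechanical, \emph{xAct}-assisted verification that the identities collapse into exactly the three stated groups.

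For the algebraic relations \eqref{eq:AlgebraicSpincoeff1} the decisive structural input is that $\vecN_a=\vecNBackground_a$ identically, so the two-form $\di(\vecN_a\,\di x^a)$ is connection-independent: it is the same object whether built from $\nabla$ or from $\mathring{\nabla}$. Consequently the antisymmetrized combinations of foreground spin coefficients that enter $\rho'-\overline{\rho'}$ and $\beta-\overline{\beta'}$, together with $\kappa'$ and $\sigma'$, are computable purely algebraically from the background spin coefficients and the frame transformation; subtracting the corresponding background expressions and simplifying produces \eqref{eq:algebraicrho1} and \eqref{eq:algebraicbeta1}. For the remaining relations: the \NLnhCondition{} forces $\metric_{rb}=\metricBackground_{rb}$, which is $r$-independent, whence $\Gamma^c{}_{rr}=0$, i.e. $\vecN=-\partial_r$ is tangent to an affinely parameterized null geodesic of $\metric$ (as in the proof of theorem \ref{thm:NLnhEnforceability}), so $\nabla_\vecN\vecN=0$. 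This gives $\tilde{\kappa}'=\kappa'=\vecMb^a(\nabla_\vecN\vecN)_a=0$ and, since $\metric(\vecM,\vecMb)=-1$ is constant, $\epsilon'+\overline{\epsilon'}=\tfrac{1}{2}\nabla_\vecN\bigl(\metric(\vecM,\vecMb)\bigr)=0$ (and likewise in the background), i.e. $\mathrm{Re}(\tilde{\epsilon}')=0$. Finally I would compute $\epsilon'$ and $\tau$ directly from definition \ref{def:ForegroundFrame}: each depends linearly on $\thop\nu$, respectively $\thop\eta$, through the $\nu$- and $\eta$-dependence of the foreground legs, so substituting the frame gauge hypotheses \eqref{eq:nuevoleq}, \eqref{eq:etaevoleq} forces $\epsilon'=\mathring{\epsilon}'$ and $\tau=\mathring{\tau}$, that is $\tilde{\epsilon}'=0$ and $\tilde{\tau}=0$. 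Thus \eqref{eq:nuetaevoleq} is precisely the choice that eliminates these two differential spin coefficients.

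For the transport system \eqref{eq:thopmetric} I would apply $\thop$ (that is $\nabla_\vecN$ with the appropriate spin/boost weight correction) to the defining contractions of $\tilde{G}^{\#}_2$, $\InvGTrTilde$, $\tilde{G}^{\#}_1$, $\tilde{G}^{\#}_0$. Once the foreground (co)frame is expanded in the background (co)frame, each such contraction is an algebraic expression in the geometric variables and the constant null-tetrad inner products, so $\thop$ of it reduces to $\thop$ of those variables: $\thop\nu$ and $\thop\eta$ are algebraic by the frame gauge hypotheses \eqref{eq:nuetaevoleq}, $\mathring{\nabla}_\vecN$ of the background legs is algebraic by the background GHP relations in the $\thop$-direction, and $\thop$ of the $G^{\#}_i$ closes against the unknowns $\thop\tilde{G}^{\#}_j$ via the algebraic change of variables \eqref{eq:InvGTildeInvGRelations}. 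Crucially no transverse ($\edt,\edtp$) derivatives ever appear, so one gets a genuine linear transport system for the $\thop\tilde{G}^{\#}_j$ whose solution is \eqref{eq:ThopInvGTilde2}--\eqref{eq:ThopInvGTilde0}; morally these are the Sachs (optical) equations of the $\vecN$-congruence written in the metric variables, with transport coefficients built from $\rho'$ and its background value and sources from the shear $\sigma'$. Feeding $\tilde{\kappa}'=\tilde{\epsilon}'=\tilde{\tau}=0$ into the remaining GHP Ricci identities and simplifying in the same way yields the supplementary transport system \eqref{eq:StructureSpincoeff1} for the remaining differential spin coefficients.

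The main obstacle is organizational rather than conceptual: keeping the GHP weights straight --- recall, per the remark following the definition of the foreground GHP operators, that weights are taken with respect to the \emph{background} tetrad, so $\beta,\epsilon,\beta',\epsilon'$ and the associated differential quantities behave inhomogeneously --- and carrying the large substitution-and-simplification through faithfully. The one genuinely delicate point is the consistency of the frame gauge, namely that \eqref{eq:nuetaevoleq} is simultaneously (i) solvable as a transport system along $\thop$ and (ii) exactly strong enough to force $\tilde{\epsilon}'=\tilde{\tau}=0$ while leaving every remaining Ricci identity in transport (rather than differential-constraint) form. This is precisely what allows the structure equations, together with the Bianchi identities for $\tilde{\Psi}_0,\dots,\tilde{\Psi}_4$, to close into the first-order symmetric hyperbolic form asserted in theorem \ref{thm:NLnhIsLWP}.
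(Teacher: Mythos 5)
Your proposal follows essentially the same route as the paper: relate the foreground and background connections through the difference tensor $\widetilde\Gamma^{a}{}_{bc}=\tfrac12 g^{\#}{}^{ad}(\mathring{\nabla}_b g_{cd}+\mathring{\nabla}_c g_{bd}-\mathring{\nabla}_d g_{bc})$, expand the foreground tetrad in the background one, observe that the radiation gauge leaves only $\thopBG$-derivatives of the metric components in the relevant $\widetilde\Gamma$-components (so the system stays in transport form), and then invert the resulting relations between $\{\tilde{\rho}',\tilde{\sigma}',\tilde{\tau}',\tilde{\epsilon},\dots\}$ and $\{\thopBG G_2,\thopBG\slashed{G},\dots,\thopBG\nu,\thopBG\eta\}$; the frame gauge hypotheses \eqref{eq:nuetaevoleq} are, exactly as you say, the conditions $\tilde{\epsilon}'=0$ and $\tilde{\tau}=0$ read as evolution equations for $\nu$ and $\eta$. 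Your observation that $n_a=\mathring{n}_a$ makes the two-form $\di(n_a\,\di x^a)$ connection-independent, so that $\kappa'$, $\rho'-\overline{\rho'}$ and $\beta-\overline{\beta'}+\overline{\tau'}$ are contractions of one fixed two-form with algebraically related frames, is a clean structural explanation of why \eqref{eq:algebraicrho1}--\eqref{eq:algebraicbeta1} come out as constraints rather than transport equations; the paper obtains the same relations by direct computation without making this point explicit. One slip to correct: $\sigma'=\bar{m}^a\bar{m}^b\nabla_{(b}n_{a)}$ is the \emph{symmetric} part of $\nabla n$ and is not determined algebraically by the background data and the frame transformation --- it is a genuine dynamical variable, sourced by $\tilde{\Psi}_4$ in \eqref{eq:thopRicci}, and equation \eqref{eq:ThopInvGTilde2} is precisely the statement that it is instead tied to $\thop\tilde{G}^{\#}_2{}$. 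This does not damage the argument, since the lemma asserts no algebraic relation for $\sigma'$, but it should be struck from your list of algebraically determined quantities.
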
 
\begin{proof}
The foreground Levi-Civita connection $\nabla$ and background $\mathring{\nabla}$ connections are related via
\begin{align}
\widetilde\Gamma^{a}{}_{bc}={}&\tfrac{1}{2} g^{\#}{}^{ad} (\mathring{\nabla}_{b}g_{cd} + \mathring{\nabla}_{c}g_{bd} -  \mathring{\nabla}_{d}g_{bc}).
\end{align}
Definition~\ref{def:spincoeff} lets us express the foreground spin coefficients in terms of the foreground $\nabla$ acting on the foreground tetrad. We can re-express this in terms of the background $\mathring{\nabla}$ as
\begin{subequations}
\begin{align}
\beta ={}&- \tfrac{1}{2} m^{a} \bar{m}^{b} (- \widetilde\Gamma^{c}{}_{ab} m_{c} + \mathring{\nabla}_{a}m_{b})
 -  \tfrac{1}{2} l^{a} m^{b} (- \widetilde\Gamma^{c}{}_{ba} n_{c} + \mathring{\nabla}_{b}n_{a}),\\
\beta '={}&\tfrac{1}{2} \bar{m}^{a} \bar{m}^{b} (- \widetilde\Gamma^{c}{}_{ba} m_{c} + \mathring{\nabla}_{b}m_{a})
 + \tfrac{1}{2} l^{a} \bar{m}^{b} (- \widetilde\Gamma^{c}{}_{ba} n_{c} + \mathring{\nabla}_{b}n_{a}),\\
\epsilon ={}&- \tfrac{1}{2} l^{a} \bar{m}^{b} (- \widetilde\Gamma^{c}{}_{ab} m_{c} + \mathring{\nabla}_{a}m_{b})
 -  \tfrac{1}{2} l^{a} l^{b} (- \widetilde\Gamma^{c}{}_{ba} n_{c} + \mathring{\nabla}_{b}n_{a}),\\
\epsilon '={}&\tfrac{1}{2} \bar{m}^{a} n^{b} (- \widetilde\Gamma^{c}{}_{ba} m_{c} + \mathring{\nabla}_{b}m_{a})
 + \tfrac{1}{2} l^{a} n^{b} (- \widetilde\Gamma^{c}{}_{ba} n_{c} + \mathring{\nabla}_{b}n_{a}),\\
\kappa ={}&l^{a} m^{b} (- \widetilde\Gamma^{c}{}_{ab} l_{c} + \mathring{\nabla}_{a}l_{b}),\\
\kappa '={}&\bar{m}^{a} n^{b} (- \widetilde\Gamma^{c}{}_{ba} n_{c} + \mathring{\nabla}_{b}n_{a}),\\
\rho ={}&m^{a} \bar{m}^{b} (- \widetilde\Gamma^{c}{}_{ba} l_{c} + \mathring{\nabla}_{b}l_{a}),\\
\rho '={}&m^{a} \bar{m}^{b} (- \widetilde\Gamma^{c}{}_{ab} n_{c} + \mathring{\nabla}_{a}n_{b}),\\
\sigma ={}&m^{a} m^{b} (- \widetilde\Gamma^{c}{}_{ab} l_{c} + \mathring{\nabla}_{a}l_{b}),\\
\sigma '={}&\bar{m}^{a} \bar{m}^{b} (- \widetilde\Gamma^{c}{}_{ba} n_{c} + \mathring{\nabla}_{b}n_{a}),\\
\tau ={}&m^{a} n^{b} (- \widetilde\Gamma^{c}{}_{ba} l_{c} + \mathring{\nabla}_{b}l_{a}),\\
\tau '={}&l^{a} \bar{m}^{b} (- \widetilde\Gamma^{c}{}_{ab} n_{c} + \mathring{\nabla}_{a}n_{b}).
\end{align}
\end{subequations}
Using the relation between the background and foreground tetrads, and expressing all background derivatives of background frame components in terms of background spin coefficients, we get expressions of all background tetrad components of $\widetilde\Gamma^{a}{}_{bc}$ in terms of the metric components $(G_{2}, G_{1}, G_{0}, \slashed{G}_{})$ or $(\InvGTrTilde, \tilde{G}^{\#}_2{}, \tilde{G}^{\#}_1{}, \tilde{G}^{\#}_0{})$. See \eqref{eq:GammaComponents} below for explicit expressions. Putting it all together, we can express all differential spin coefficients in terms of background spin coefficients and GHP derivatives of the above mentioned metric components.

For instance, we get
\begin{subequations}
\label{eq:rhoTildePrimeSigmaTildePrime}
\begin{align}
\tilde{\rho}'={}&\frac{1}{2 \varsigma} (\varsigma^{\#}{} -  \varsigma) (\mathring{\rho}' -  \bar{\mathring{\rho}}')
 -  \frac{\varsigma^{\#}{} \thopBG \slashed{G}_{}}{4 \varsigma} (2 \varsigma^{\#}{} \varsigma - 1)
 + \frac{G_{2} \varsigma^{\#}{}^2 \thopBG \overline{G_{2}}}{4 \varsigma^2}
 + \frac{\overline{G_{2}} \varsigma^{\#}{}^2 \thopBG G_{2}}{4 \varsigma^2} ,\\
\tilde{\sigma}'={}&\frac{G_{2} \varsigma^{\#}{}}{2 e^{2i \nu} \varsigma} (\mathring{\rho}' -  \bar{\mathring{\rho}}')
 -  \frac{G_{2} \varsigma^{\#}{}^2 \thopBG \slashed{G}_{}}{4 e^{2i \nu} \varsigma^2}
 + \frac{G_{2}{}^2 \varsigma^{\#}{}^2 \thopBG \overline{G_{2}}}{8 e^{2i \nu} \varsigma^4}
 + \frac{\varsigma^{\#}{}^2 \thopBG G_{2}}{2 e^{2i \nu}} .
\end{align}
\end{subequations}
This is equivalent to 
\begin{subequations}
\label{eq:ThopBGG2GTr}
\begin{align}
\thopBG \slashed{G}_{}={}&- (2 + \slashed{G}_{}) (\tilde{\rho}' + \overline{\tilde{\rho}'})
 + 2 e^{2i \nu} \overline{G_{2}} \tilde{\sigma}'
 + \frac{2 G_{2} \overline{\tilde{\sigma}'}}{e^{2i \nu}} ,\\
\thopBG G_{2}={}&G_{2} (\bar{\mathring{\rho}}' - \mathring{\rho}' -  \tilde{\rho}' -  \overline{\tilde{\rho}'})
 + 2 e^{2i \nu} \varsigma^2 \tilde{\sigma}'
 + \frac{G_{2}{}^2 \overline{\tilde{\sigma}'}}{2 e^{2i \nu} \varsigma^2} ,\\
\overline{\tilde{\rho}'}={}&\tilde{\rho}'
 -  \frac{1}{\varsigma} (\varsigma^{\#}{} -  \varsigma) (\mathring{\rho}' -  \bar{\mathring{\rho}}').
\end{align}
\end{subequations}
Similarly, we get
\begin{subequations}
\begin{align}
\tilde{\epsilon}'={}&- \frac{\slashed{G}_{} \varsigma^{\#}{} \mathring{\rho}'}{8 \varsigma}
 + \frac{\slashed{G}_{} \varsigma^{\#}{} \bar{\mathring{\rho}}'}{8 \varsigma}
 -  \tfrac{1}{2} i \thopBG \nu
 + \frac{G_{2} \varsigma^{\#}{} \thopBG \overline{G_{2}}}{16 \varsigma^3}
 -  \frac{\overline{G_{2}} \varsigma^{\#}{} \thopBG G_{2}}{16 \varsigma^3} \\
={}&\frac{1}{4 \varsigma} (\varsigma^{\#}{} -  \varsigma) (\mathring{\rho}' -  \bar{\mathring{\rho}}')
 -  \frac{e^{2i \nu} \overline{G_{2}} \tilde{\sigma}'}{8 \varsigma^2}
 + \frac{G_{2} \overline{\tilde{\sigma}'}}{8 e^{2i \nu} \varsigma^2}
 -  \tfrac{1}{2} i \thopBG \nu .
\end{align}
\end{subequations}
As we have not yet fixed the differential spin rotation parameter $\nu$, we can use it to set $\tilde{\epsilon}'=0$. Translated into the foreground operators and $\tilde{G}^{\#}_2$, this condition is equivalent to the evolution equation \eqref{eq:nuevoleq} in the \frameGaugeHypothesesNoRef. 
Using this relation one can express the system \eqref{eq:ThopBGG2GTr} in terms of $\tilde{G}^{\#}_2$ and $\InvGTrTilde$ to get \eqref{eq:ThopInvGTilde2}, \eqref{eq:ThopInvGTrTilde} and \eqref{eq:algebraicrho1}.
A similar calculation for $\tilde{\tau}'$ yields the following after reduction with \eqref{eq:ThopBGG2GTr}
\begin{align}
\tilde{\tau}'={}&\frac{\mathring{\rho}'}{4 e^{i \nu} \varsigma^2} (\overline{G_{1}} G_{2} \varsigma^{\#}{} + 4 e^{i \nu} \bar{\eta} \varsigma^2 + 4 G_{1} \varsigma^{\#}{} \varsigma^2)
 + \frac{\tilde{\rho}'}{2 e^{i \nu} \varsigma^2} (\overline{G_{1}} G_{2} \varsigma^{\#}{} + 2 e^{i \nu} \bar{\eta} \varsigma^2 + 2 G_{1} \varsigma^{\#}{} \varsigma^2)
 -  \frac{G_{1} \varsigma^{\#}{} \bar{\mathring{\rho}}'}{2 e^{i \nu}}\nonumber\\
& + \frac{\tilde{\sigma}'}{2 \varsigma^2} (e^{i \nu} \overline{G_{2}} G_{1} \varsigma^{\#}{} + 2 \eta \varsigma^2 + 2 e^{i \nu} \overline{G_{1}} \varsigma^{\#}{} \varsigma^2)
 + \frac{G_{2} \mathring{\tau}}{4 e^{i \nu} \varsigma^2} (\varsigma^{\#}{} + \varsigma)
 + \frac{\bar{\mathring{\tau}}}{2 e^{i \nu}} (\varsigma^{\#}{} -  \varsigma)\nonumber\\
& -  \frac{\mathring{\tau}'}{2 e^{i \nu}} (2 e^{i \nu} -  \varsigma^{\#}{} -  \varsigma)
 + \frac{G_{2} \bar{\mathring{\tau}}'}{4 e^{i \nu} \varsigma^2} (\varsigma^{\#}{} -  \varsigma)
 + \frac{G_{2} \varsigma^{\#}{} \thopBG \overline{G_{1}}}{4 e^{i \nu} \varsigma^2}
 + \frac{\varsigma^{\#}{} \thopBG G_{1}}{2 e^{i \nu}} .
\end{align}
This equation can be used to solve for $\thopBG G_{1}$.

Similarly $\tilde{\tau}$ can be expressed as follows after substitution of the expressions for $\thopBG \slashed{G}_{}$, $\thopBG G_{2}$, $\thopBG G_{1}$ and $\thopBG \nu$ above
\begin{align}
\tilde{\tau}={}&- \frac{e^{i \nu} \varsigma^{\#}{}^2}{2 \varsigma^3} (\overline{G_{2}} G_{1} + 2 \overline{G_{1}} \varsigma^2) (\mathring{\rho}' -  \bar{\mathring{\rho}}')
 -  \eta (2 \tilde{\epsilon}' -  \overline{\tilde{\rho}'} -  \bar{\mathring{\rho}}')
 + \bar{\eta} \overline{\tilde{\sigma}'}
 + \frac{e^{i \nu} \overline{G_{2}} \varsigma^{\#}{}}{2 \varsigma^2} (\bar{\mathring{\tau}} + \mathring{\tau}')
 -  \overline{\tilde{\tau}'}\nonumber\\
& + (e^{i \nu} \varsigma^{\#}{} - 1) (\mathring{\tau} + \bar{\mathring{\tau}}')
 -  \thopBG \eta .
\label{eq:tauexpr1}
\end{align}
In the \frameGaugeHypotheses, equation \eqref{eq:etaevoleq} was chosen so that $\tilde{\tau}=0$. This gives an expression for $\thopBG \eta$. Using this in the expression for $\tilde{G}{}^{\#}_1$, we can derive the evolution equation for $\tilde{G}{}^{\#}_1$, i.e. \eqref{eq:ThopInvGTilde1}.

Using all the previous relations, one can express $\tilde{\beta}-\overline{\tilde{\beta}'}$ as 
\begin{align}
\tilde{\beta} -  \overline{\tilde{\beta}'}={}&- \frac{\varsigma^{\#}{}}{2 \varsigma^3} (e^{i \nu} \overline{G_{2}} G_{1} \varsigma^{\#}{} + 2 \eta \varsigma^2 + 2 e^{i \nu} \overline{G_{1}} \varsigma^{\#}{} \varsigma^2) (\mathring{\rho}' -  \bar{\mathring{\rho}}')
 + \frac{e^{i \nu} \overline{G_{2}} \varsigma^{\#}{} \mathring{\tau}'}{2 \varsigma^2}
 -  \overline{\tilde{\tau}'}
 -  \bar{\mathring{\tau}}'
 + e^{i \nu} \varsigma^{\#}{} \bar{\mathring{\tau}}'.
\end{align}
Using this relation, we can eliminate $G_{1}$ from \eqref{eq:tauexpr1} to obtain the evolution \eqref{eq:etaevoleq}. Translating to the $\tilde{G}{}^{\#}_i$ variables, we also get \eqref{eq:algebraicbeta1}.

Similarly, using the previous relations, we get a long expression 
\begin{align}
\tilde{\epsilon} + \overline{\tilde{\epsilon}}={}&- \tfrac{1}{2} \thopBG G_{0} + \dots
\end{align}
where the dots indicates an expression depending on $G_{1}, G_{2}, \slashed{G}_{}, \tilde{\sigma }', \tilde{\tau }', \tilde{\rho }'$ and the background spin coefficients. Translating this to the $\tilde{G}{}^{\#}_i$ variables, we get \eqref{eq:ThopInvGTilde0}.

Similarly, one can express $\tilde{\beta} + \overline{\tilde{\beta}'}$ and $\tilde{\epsilon} - \overline{\tilde{\epsilon}}$ to obtain \eqref{eq:ethnuTobeta} and \eqref{eq:thoetaToepsilon}. Here however, the $\tilde{G}{}_i$ variables turned out to give shorter expressions, so we used them instead.
The remaining equations in \eqref{eq:StructureSpincoeff1} were derived in the same way using the expressions for $\tilde{\rho}$, $\tilde{\sigma}$ and $\tilde{\kappa}$. As the direct expressions for these spin coefficients became long and complicated, we found that solving for the left hand sides of \eqref{eq:StructureSpincoeff1} gave us shorter expressions. The expressions can be inverted though, so all spin coefficients are expressible in terms of derivatives of $\nu$, $\eta$ and the metric components.
\end{proof}

\subsection{Ricci relations}
\begin{lemma}[Ricci relations]
\assumeRadiationAndFrameGaugeHypotheses. 

The Ricci relations take the form
\begin{subequations}
\label{eq:thopRicci}
\begin{align}
(\thop{} -  \mathring{\rho}' -  \tilde{\rho}' -  \overline{\tilde{\rho}'} -  \bar{\mathring{\rho}}')\tilde{\sigma}'={}&\tilde{\Psi}_{4}{},\\
(\thop{} - 2 \mathring{\rho}' -  \tilde{\rho}')\tilde{\rho}'={}&\tilde{\sigma}' \overline{\tilde{\sigma}'},
\label{eq:transport:rhoTildePrime}\\
(\thop{} -  \mathring{\rho}' -  \tilde{\rho}')\tilde{\tau}'={}&\tilde{\Psi}_{3}{}
 + (\overline{\tilde{\tau}'} -  \mathring{\tau} + \bar{\mathring{\tau}}') \tilde{\sigma}'
 + \tilde{\rho}' (- \bar{\mathring{\tau}} + \mathring{\tau}'),\\
(\thop{} -  \mathring{\rho}' -  \tilde{\rho}')\tilde{\beta}={}&- \tilde{\beta}' \overline{\tilde{\sigma}'}
 + (1 -  e^{i \nu} \varsigma^{\#}{}) \mathring{\rho}' \mathring{\tau}
 + \tilde{\rho}' \mathring{\tau},
\label{eq:thopbeta}\\
(\thop{} -  \overline{\tilde{\rho}'} -  \bar{\mathring{\rho}}')\tilde{\beta}'={}&\tilde{\Psi}_{3}{}
 -  \tilde{\beta} \tilde{\sigma}'
 -  \frac{G^{\#}_{2} \mathring{\rho}' \mathring{\tau}}{2 e^{i \nu} \varsigma^{\#}{}}
 -  \tilde{\sigma}' \mathring{\tau},
\label{eq:thopbetap}\\
\thop \tilde{\epsilon}={}&- \tilde{\Psi}_{2}{}
 + (\tilde{\tau}' -  \bar{\mathring{\tau}} + \mathring{\tau}') \tilde{\beta}
 + \Bigl(G^{\#}_{1} + \frac{\eta G^{\#}_{2}}{2 e^{i \nu} \varsigma^{\#}{}} -  e^{i \nu} \bar{\eta} \varsigma^{\#}{}\Bigr) \mathring{\rho}' \mathring{\tau}
 + \mathring{\tau} \tilde{\tau}'\nonumber\\
& + \tilde{\beta}' (\mathring{\tau} -  \overline{\tilde{\tau}'} -  \bar{\mathring{\tau}}'),\\
(\thop{} -  \overline{\tilde{\rho}'} -  \bar{\mathring{\rho}}')\tilde{\rho}={}&- \tilde{\Psi}_{2}{}
 + \mathring{\rho} \overline{\tilde{\rho}'}
 + \tilde{\sigma} \tilde{\sigma}'
 + (\overline{\tilde{\beta}} + \tilde{\beta}') \mathring{\tau}
 + 2 \bar{\eta} \mathring{\rho}' \mathring{\tau}
 -  \frac{G^{\#}_{2} \mathring{\tau}^2}{2 e^{i \nu} \varsigma^{\#}{}}
 -  \Bigl(1 -  \frac{\varsigma^{\#}{}}{e^{i \nu}}\Bigr) \edtpBG \mathring{\tau},\\
(\thop{} -  \mathring{\rho}' -  \tilde{\rho}')\tilde{\sigma}={}&(\mathring{\rho} + \tilde{\rho}) \overline{\tilde{\sigma}'}
 -  (\tilde{\beta} + \overline{\tilde{\beta}'}) \mathring{\tau}
 + 2 \eta \mathring{\rho}' \mathring{\tau}
 -  (1 -  e^{i \nu} \varsigma^{\#}{}) \mathring{\tau}^2
 -  \frac{e^{i \nu} \overline{G^{\#}_{2}} \edtpBG \mathring{\tau}}{2 \varsigma^{\#}{}} ,\\
\thop \tilde{\kappa}={}&- \tilde{\Psi}_{1}{}
 + (\overline{\tilde{\tau}'} -  \mathring{\tau} + \bar{\mathring{\tau}}') \tilde{\rho}
 + (\tilde{\tau}' -  \bar{\mathring{\tau}} + \mathring{\tau}') \tilde{\sigma}
 -  (\tilde{\epsilon} -  \overline{\tilde{\epsilon}}) \mathring{\tau}
 + (2 \eta \bar{\eta} + G^{\#}_{0}) \mathring{\rho}' \mathring{\tau}\nonumber\\
& -  \Bigl(G^{\#}_{1} + \frac{\eta G^{\#}_{2}}{2 e^{i \nu} \varsigma^{\#}{}} -  e^{i \nu} \bar{\eta} \varsigma^{\#}{}\Bigr) \mathring{\tau}^2
 + \mathring{\rho} \overline{\tilde{\tau}'}
 -  \Bigl(\overline{G^{\#}_{1}} + \frac{e^{i \nu} \bar{\eta} \overline{G^{\#}_{2}}}{2 \varsigma^{\#}{}} -  \frac{\eta \varsigma^{\#}{}}{e^{i \nu}}\Bigr) \edtpBG \mathring{\tau},
\end{align}
\end{subequations}
together with the supplementary relations \eqref{eq:ExtraRicci}. Here $G^{\#}_{i}$ can be interpreted in terms of $\tilde{G}{}^{\#}_{i}$ via \eqref{eq:InvGTildeInvGRelations} and there is the background formula
\begin{align}
\label{eq:edtpBGtauBG}
\edtpBG \mathring{\tau}={}&\tfrac{1}{2} \mathring{\Psi}_{2}{}
 -  \frac{\bar{\mathring{\Psi}}_{2}{} \bar{\kappa}_{1'}{}}{2 \kappa_{1}{}}
 + \mathring{\rho} \mathring{\rho}'
 -  \mathring{\rho} \bar{\mathring{\rho}}'
 + \mathring{\tau} \mathring{\tau}', 
\end{align}
where
\begin{align}
\label{eq:defKappa1}
\kappa_{1}={}&-\frac13(r-ia\cos\theta) .
\end{align}
\end{lemma}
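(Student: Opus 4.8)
The plan is to start from the classical Ricci (Newman--Penrose) identities written in GHP form for the \emph{foreground} tetrad $(l^a,n^a,m^a,\bar m^a)$, as in \cite{GHP,PenroseRindler}, and to translate each one into the differential variables. The dictionary needed for this translation is already available: the foreground spin coefficients are background-plus-tilde quantities by definition \ref{def:DiffSpinCoeff}; the foreground GHP operators $\tho,\thop,\edt,\edtp$ are expressed through the background operators $\thoBG,\thopBG,\edtBG,\edtpBG$, the metric coefficients and $\eta,\nu$ by the definition of the foreground GHP operators; and the two conventions for the metric coefficients are interchangeable via \eqref{eq:InvGTildeInvGRelations} and \eqref{eq:GslashRelations}. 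Since the left-hand sides of \eqref{eq:thopRicci} all have $\thop$ as principal part, for each listed differential coefficient I would take the Ricci identity in the foreground frame whose principal part matches (for the non-GHP-weighted $\beta,\beta',\epsilon$, an appropriate combination of Ricci and commutator identities), substitute the dictionary, and subtract the corresponding background identity. The cleanest instance is $\thop\rho'=(\rho')^2+\sigma'\bar\sigma'$, valid because $\kappa'=0$ under the \NLnhCondition{}: writing $\rho'=\mathring\rho'+\tilde\rho'$, $\sigma'=\tilde\sigma'$ and subtracting $\thopBG\mathring\rho'=(\mathring\rho')^2$ collapses it directly to \eqref{eq:transport:rhoTildePrime}.

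In each identity, whenever a $\thopBG$ lands on a metric coefficient, on $\nu$ or on $\eta$, I would eliminate it using the frame-gauge evolution equations \eqref{eq:nuetaevoleq} and the transport system of lemma \ref{lem:structureEquations}; whenever a purely background quantity survives on the right, I would cancel it with the background GHP Ricci identities for Kerr. Because the chosen tetrad is principal, the Kerr background is algebraically special, with $\mathring\kappa=\mathring\sigma=\mathring\kappa'=\mathring\sigma'=0$ and $\mathring\Psi_0=\mathring\Psi_1=\mathring\Psi_3=\mathring\Psi_4=0$, so those background identities reduce to a short list of relations among $\mathring\rho,\mathring\rho',\mathring\tau,\mathring\tau',\mathring\beta,\mathring\beta',\mathring\epsilon,\mathring\epsilon'$ and $\mathring\Psi_2$. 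The one further background relation that is needed here but is not literally one of these, namely \eqref{eq:edtpBGtauBG} for $\edtpBG\mathring\tau$, I would obtain by combining a background Ricci identity with the explicit Kerr metric, using \eqref{eq:defKappa1} to repackage the resulting curvature terms; this is the point at which the precise structure of Kerr, rather than merely ``vacuum, type D'', enters. The remaining Ricci identities --- those whose principal part is $\tho$, $\edt$ or $\edtp$ of a spin coefficient --- are processed the same way, and, after solving for whichever linear combination yields the shortest right-hand side, produce the supplementary relations \eqref{eq:ExtraRicci}.

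The computation is long but mechanical once the dictionary and the structure equations are in hand, and in practice it is carried out with the \emph{xAct} suite and the \emph{SpinFrames} package \cite{xAct,SpinframesPackage}. I expect the main obstacle to be organizational rather than conceptual: one must verify that, after inserting the frame-gauge conditions and the structure equations, every $\thopBG$ acting on a metric coefficient or on $\nu,\eta$ has genuinely been removed, so that \eqref{eq:thopRicci} closes as a first-order transport system for $\geometricVariables$ with no hidden derivatives on its right-hand side. Systematically tracking the $(p,q)$-weights is what keeps this manageable, since the differential operators and differential coefficients were built in definition \ref{def:DiffSpinCoeff} precisely so as to be properly weighted with respect to the background tetrad; a secondary nuisance is the constant need to pass between the $(\tilde G^{\#}_i,\InvGTrTilde)$ and $(\tilde G_i,\slashed G)$ presentations, as in the proof of lemma \ref{lem:structureEquations}, choosing for each equation the variables in which it is shortest.
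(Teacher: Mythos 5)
Your proposal is correct and follows essentially the same route as the paper: start from the foreground Newman--Penrose Ricci identities, rewrite in the differential spin coefficients, express foreground derivatives of background quantities via the background operators, eliminate background derivatives with the background (vacuum type D) Ricci relations, reduce with the frame-gauge and structure equations, and only then pass to the foreground GHP operators, with the whole computation carried out in \emph{xAct}/\emph{SpinFrames}. The only cosmetic difference is that the paper works with the NP (not GHP) form throughout precisely because $\tilde\beta,\tilde\beta',\tilde\epsilon$ are built from non-properly-weighted coefficients, a point you already acknowledge.
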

\begin{proof}
To prove these relations, we begin with the Newman-Penrose (NP) version of the Ricci relations equations (4.11.12) in \cite{PenroseRindler} for both the foreground spin coefficients and operators. The foreground spin coefficients can then be written in terms of the differential spin coefficients from definition \ref{def:DiffSpinCoeff}. 
When the foreground NP operators acts on the background spin coefficients, we express the operators in terms of the background operators, via the relations in definition~\ref{def:ForegroundFrame}. The resulting background NP operators acting on background spin coefficients can then be eliminated using the background Ricci relations. After this procedure, all non-properly weighted quantities have been eliminated, and the operators can be translated into the foreground GHP operators yielding \eqref{eq:thopRicci} and \eqref{eq:ExtraRicci} after reduction with the structure equations \eqref{eq:nuetaevoleq}, \eqref{eq:thopmetric}, \eqref{eq:AlgebraicSpincoeff1}, \eqref{eq:StructureSpincoeff1}. Here some background derivatives of background spin coefficients have been simplified due to the vacuum Bianchi type D property of the Kerr spacetime. 

As an example we derive \eqref{eq:thopbetap} starting with the foreground NP-Ricci relation
\begin{align}
\Delta \beta '={}&\Psi_{3}{}
 + \bar{\beta} \epsilon '
 -  \beta ' \overline{\epsilon '}
 + \beta ' \overline{\rho '}
 -  \beta \sigma '
 -  \sigma ' \tau
 -  \epsilon ' \bar{\tau}
 + \bar\delta \epsilon '.
\end{align}
Translating to the differential spin coefficients and expressing the foreground NP derivatives in terms of the background NP derivatives when acting on background spin coefficients, we get
\begin{align}
\Delta \tilde{\beta}'={}&\tilde{\Psi}_{3}{}
 + \overline{\tilde{\beta}} \tilde{\epsilon}'
 + e^{-i \nu} \varsigma^{\#}{} \bar{\mathring{\beta}} (\mathring{\epsilon}'
 + \tilde{\epsilon}')
 + \tfrac{1}{2} e^{-i \nu} G^{\#}_{2} \varsigma^{\#}{}^{-1} \bar{\mathring{\beta}}' (\mathring{\epsilon}'
 + \tilde{\epsilon}')
 -  \tilde{\beta}' \overline{\tilde{\epsilon}'}
 - ( \tilde{\beta}'
 +  \bar{\eta} \tilde{\epsilon}') \bar{\mathring{\epsilon}}'
 + \tilde{\beta}' \overline{\tilde{\rho}'}
 + \bar\delta \tilde{\epsilon}'\nonumber\\
& + \tilde{\beta}' \bar{\mathring{\rho}}'
 -  \tilde{\beta} \tilde{\sigma}'
 -  \tilde{\sigma}' \mathring{\tau}
 -  \tilde{\sigma}' \tilde{\tau}
 -  \tilde{\epsilon}' \overline{\tilde{\tau}}
 -  \tilde{\epsilon}' \bar{\mathring{\tau}}
 - \tfrac{1}{2} e^{-i \nu} G^{\#}_{2} \varsigma^{\#}{}^{-1} (\mathring{\Delta} \mathring{\beta}
 +  \mathring{\delta} \mathring{\epsilon}')
 - e^{-i \nu} \varsigma^{\#}{} ( \mathring{\Delta} \mathring{\beta}'
 - \mathring{\bar\delta} \mathring{\epsilon}')\nonumber\\
&- \mathring{\beta} \bigl( e^{i \nu} \varsigma^{\#}{} \tilde{\sigma}'
 +  \tfrac{1}{2} e^{-i \nu} \varsigma^{\#}{}^{-1} \Delta G^{\#}_{2}
  - \tfrac{1}{2} e^{-i \nu} G^{\#}_{2} \varsigma^{\#}{}^{-1} (\varsigma^{\#}{}^{-1} \Delta \varsigma^{\#}{} + i \Delta \nu   - \overline{\tilde{\epsilon}'} -  \bar{\mathring{\epsilon}}' + \overline{\tilde{\rho}'} + \bar{\mathring{\rho}}')\bigr)
 \nonumber\\
&+ \mathring{\beta}' \bigl(
   e^{-i \nu} \varsigma^{\#}{} (i \Delta \nu - \overline{\tilde{\epsilon}'} -  \bar{\mathring{\epsilon}}' + \overline{\tilde{\rho}'} + \bar{\mathring{\rho}}' )
  - \tfrac{1}{2} e^{i \nu} \overline{G^{\#}_{2}} \varsigma^{\#}{}^{-1} \tilde{\sigma}'
 -  e^{-i \nu} \Delta \varsigma^{\#}{}\bigr)
 \nonumber\\
& + \mathring{\epsilon}' (\overline{\tilde{\beta}}
 -  \bar{\eta} \overline{\tilde{\epsilon}'}
 - 2 \bar{\eta} \bar{\mathring{\epsilon}}'
 + \bar{\eta} \overline{\tilde{\rho}'}
 + \bar{\eta} \bar{\mathring{\rho}}'
 + \eta \tilde{\sigma}'
 -  \overline{\tilde{\tau}}
 -  \bar{\mathring{\tau}}
 -  \Delta \bar{\eta})  .
\end{align}
Using the background Ricci relations, transforming the foreground NP derivatives into foreground GHP operators, and translating the metric coefficients to the $\tilde{G}{}^{\#}_i$ variables yield
\begin{align}
\thop \tilde{\beta}'={}&\tilde{\Psi}_{3}{}
 + \tilde{\beta}' \overline{\tilde{\rho}'}
 + \tilde{\beta}' \bar{\mathring{\rho}}'
 -  \tilde{\beta} \tilde{\sigma}'
 + \tfrac{1}{2} e^{i \nu} \tilde{G}^{\#}_2{} \varsigma^{\#}{} \varsigma^{-2} \mathring{\rho}' \mathring{\tau}
 -  \tilde{\sigma}' \mathring{\tau}
 -  \tilde{\sigma}' \tilde{\tau}
 -  \tilde{\epsilon}' \overline{\tilde{\tau}}
 -  \tilde{\epsilon}' \bar{\mathring{\tau}}
 + \edtp \tilde{\epsilon}'
 \nonumber\\
&+ \mathring{\epsilon}' (\overline{\tilde{\beta}}
 -  \tilde{\beta}'
 + \bar{\eta} \tilde{\epsilon}'
 + \bar{\eta} \overline{\tilde{\epsilon}'}
  + \bar{\eta} \overline{\tilde{\rho}'}
 + \bar{\eta} \bar{\mathring{\rho}}'
 + \eta \tilde{\sigma}'
 + \tfrac{1}{2} e^{i \nu} \tilde{G}^{\#}_2{} \varsigma^{\#}{} \varsigma^{-2} \mathring{\tau}
 -  \overline{\tilde{\tau}}
 -  \bar{\mathring{\tau}}
 + e^{-i \nu} \varsigma^{\#}{} \bar{\mathring{\tau}}
 -  \thop \bar{\eta})
\nonumber\\
& + \mathring{\beta}' (e^{-i \nu} \varsigma^{\#}{} \tilde{\epsilon}'
 -  e^{-i \nu} \varsigma^{\#}{} \overline{\tilde{\epsilon}'}
  + e^{-i \nu} \varsigma^{\#}{} \overline{\tilde{\rho}'}
 + \tfrac{1}{2} e^{-i \nu} \overline{\tilde{G}}^{\#}_2{} \varsigma^{\#}{} \varsigma^{-2} \tilde{\sigma}'
 + i e^{-i \nu} \varsigma^{\#}{} \thop \nu
 -  e^{-i \nu} \thop \varsigma^{\#}{})
\nonumber\\
& + \mathring{\beta} \bigl(- e^{i \nu} \varsigma^{\#}{} \tilde{\sigma}'
 + \tfrac{1}{2} e^{i \nu} \varsigma^{\#}{} \varsigma^{-2} \thop \tilde{G}^{\#}_2{}
 + \tfrac{1}{2} e^{i \nu} \tilde{G}^{\#}_2{} \varsigma^{-2} \thop \varsigma^{\#}{}
 + \tfrac{1}{2} e^{i \nu} \tilde{G}^{\#}_2{} \varsigma^{\#}{} \varsigma^{-2} (\tilde{\epsilon}'
 -  \overline{\tilde{\epsilon}'}
 + \mathring{\rho}'
 -  \bar{\mathring{\rho}}'
\nonumber\\
&\qquad 
 -  \overline{\tilde{\rho}'}
+ i \thop \nu
  - 2 \varsigma^{-1} \thop \varsigma)\bigr).
\end{align}
The evolution equations \eqref{eq:nuetaevoleq} together with the structure equations \eqref{eq:thopmetric} and \eqref{eq:AlgebraicSpincoeff1} will reduce this to \eqref{eq:thopbetap}. Observe that the equation \eqref{eq:thopbetap} is properly weighted even though we started with a non-properly weighted equation. 
 The other equations can be derived in the same way.
\end{proof}

\subsection{Bianchi system}
\begin{lemma}[Bianchi identities]
\assumeRadiationAndFrameGaugeHypotheses. 

The Bianchi identities take the form
\begin{subequations}
\label{eq:Bianchi}
\begin{align}
0={}&(\tho{} - 4 \mathring{\rho} - 4 \tilde{\rho})\tilde{\Psi}_{1}{}
 -  (\edtp{} -  \mathring{\tau}' -  \tilde{\tau}')\tilde{\Psi}_{0}{}
 + 3 (\mathring{\Psi}_{2}{} + \tilde{\Psi}_{2}{}) \tilde{\kappa},\\
0={}&(\tho{} - 3 \mathring{\rho} - 3 \tilde{\rho})\tilde{\Psi}_{2}{}
 -  (\edtp{} - 2 \mathring{\tau}' - 2 \tilde{\tau}')\tilde{\Psi}_{1}{}
 + 2 \tilde{\Psi}_{3}{} \tilde{\kappa}
 - 3 \mathring{\Psi}_{2}{} \tilde{\rho}
 + \tfrac{3}{2} (2 \eta \bar{\eta} + G^{\#}_{0}) \mathring{\Psi}_{2}{} \mathring{\rho}'
 -  \tilde{\Psi}_{0}{} \tilde{\sigma}'\nonumber\\
& - 3 \Bigl(G^{\#}_{1} + \frac{\eta G^{\#}_{2}}{2 e^{i \nu} \varsigma^{\#}{}} -  e^{i \nu} \bar{\eta} \varsigma^{\#}{}\Bigr) \mathring{\Psi}_{2}{} \mathring{\tau}
 - 3 \Bigl(\overline{G^{\#}_{1}} + \frac{e^{i \nu} \bar{\eta} \overline{G^{\#}_{2}}}{2 \varsigma^{\#}{}} -  \frac{\eta \varsigma^{\#}{}}{e^{i \nu}}\Bigr) \mathring{\Psi}_{2}{} \mathring{\tau}',\\
0={}&(\tho{} - 2 \mathring{\rho} - 2 \tilde{\rho})\tilde{\Psi}_{3}{}
 -  (\edtp{} - 3 \mathring{\tau}' - 3 \tilde{\tau}')\tilde{\Psi}_{2}{}
 + \tilde{\Psi}_{4}{} \tilde{\kappa}
 - 3 \bar{\eta} \mathring{\Psi}_{2}{} \mathring{\rho}'
 - 2 \tilde{\Psi}_{1}{} \tilde{\sigma}'
 + \frac{3 G^{\#}_{2} \mathring{\Psi}_{2}{} \mathring{\tau}}{2 e^{i \nu} \varsigma^{\#}{}}\nonumber\\
& + 3 \Bigl(1 -  \frac{\varsigma^{\#}{}}{e^{i \nu}}\Bigr) \mathring{\Psi}_{2}{} \mathring{\tau}'
  + 3 \mathring{\Psi}_{2}{} \tilde{\tau}',\\
0={}&(\tho{} -  \mathring{\rho} -  \tilde{\rho})\tilde{\Psi}_{4}{}
 -  (\edtp{} - 4 \mathring{\tau}' - 4 \tilde{\tau}')\tilde{\Psi}_{3}{}
 - 3 (\mathring{\Psi}_{2}{} + \tilde{\Psi}_{2}{}) \tilde{\sigma}',\\
0={}&(\thop{} -  \mathring{\rho}' -  \tilde{\rho}')\tilde{\Psi}_{0}{}
 -  (\edt{} - 4 \mathring{\tau})\tilde{\Psi}_{1}{}
 - 3 (\mathring{\Psi}_{2}{} + \tilde{\Psi}_{2}{}) \tilde{\sigma},\\
0={}&(\thop{} - 2 \mathring{\rho}' - 2 \tilde{\rho}')\tilde{\Psi}_{1}{}
 -  (\edt{} - 3 \mathring{\tau})\tilde{\Psi}_{2}{}
 - 3 \eta \mathring{\Psi}_{2}{} \mathring{\rho}'
 - 2 \tilde{\Psi}_{3}{} \tilde{\sigma}
 + 3 (1 -  e^{i \nu} \varsigma^{\#}{}) \mathring{\Psi}_{2}{} \mathring{\tau}
 + \frac{3 e^{i \nu} \overline{G^{\#}_{2}} \mathring{\Psi}_{2}{} \mathring{\tau}'}{2 \varsigma^{\#}{}} ,\\
0={}&(\thop{} - 3 \mathring{\rho}' - 3 \tilde{\rho}')\tilde{\Psi}_{2}{}
 -  (\edt{} - 2 \mathring{\tau})\tilde{\Psi}_{3}{}
 - 3 \mathring{\Psi}_{2}{} \tilde{\rho}'
 -  \tilde{\Psi}_{4}{} \tilde{\sigma},\\
0={}&(\thop{} - 4 \mathring{\rho}' - 4 \tilde{\rho}')\tilde{\Psi}_{3}{}
 -  (\edt{} -  \mathring{\tau})\tilde{\Psi}_{4}{}.
\end{align}
\end{subequations}
Here $G^{\#}_{i}$ can be interpreted in terms of $\tilde{G}{}^{\#}_{i}$ via \eqref{eq:InvGTildeInvGRelations}.
\end{lemma}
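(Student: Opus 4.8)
The plan is to follow the same strategy used for the structure equations and the Ricci relations: start from the classical Newman--Penrose form of the Bianchi identities and systematically convert all foreground quantities into background quantities plus properly weighted differential corrections.

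First I would write down the (vacuum) Newman--Penrose Bianchi identities (cf. \cite{PenroseRindler}) for the \emph{foreground} tetrad $(l^a,n^a,m^a,\bar m^a)$, so that only the five Weyl scalars $\Psi_0,\dots,\Psi_4$ and the spin coefficients and NP operators of the foreground geometry appear. In each identity I replace every foreground spin coefficient by the corresponding background spin coefficient plus its differential correction from definition \ref{def:DiffSpinCoeff}, and each foreground Weyl scalar using $\Psi_i=\tilde\Psi_i$ for $i\in\{0,1,3,4\}$ together with $\Psi_2=\tilde\Psi_2+\mathring\Psi_2$. Whenever a foreground NP operator acts on a background spin coefficient or on $\mathring\Psi_2$, I expand the operator into background NP operators by means of the frame relations of definition \ref{def:ForegroundFrame}, so that the only residual ``background operator on background scalar'' terms are derivatives of $\mathring\Psi_2$ (and of $\mathring\rho,\mathring\tau$ and their primes).

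Next I would eliminate those residual terms using the background Bianchi and Ricci relations: for the Petrov type D, vacuum Kerr geometry the only nonvanishing background Weyl scalar is $\mathring\Psi_2$, and in GHP form $\thoBG\mathring\Psi_2=3\mathring\rho\,\mathring\Psi_2$, $\edtBG\mathring\Psi_2=3\mathring\tau\,\mathring\Psi_2$, with the analogous primed identities, supplemented by the auxiliary formula \eqref{eq:edtpBGtauBG} already recorded. After these substitutions all background-operator terms are gone, and what remains is an identity relating foreground NP operators acting on properly weighted differential quantities to algebraic expressions in the differential spin coefficients, the metric coefficients, $\eta$, $\nu$, and $\mathring\Psi_2,\mathring\rho,\mathring\tau$ and their primes. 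I would then reduce further with the frame-gauge evolution equations \eqref{eq:nuetaevoleq}, the structure equations \eqref{eq:thopmetric}--\eqref{eq:StructureSpincoeff1}, and the Ricci relations of the previous lemma (to trade, where needed, terms such as $\thop\tilde\kappa$ or $\thop\tilde\sigma$ for their algebraic right-hand sides when they occur inside operators), and finally convert all foreground NP operators into the foreground GHP operators defined above, which absorbs the remaining $\tilde\beta,\tilde\beta',\tilde\epsilon,\tilde\epsilon'$ connection pieces. This yields the system \eqref{eq:Bianchi}, with $G^{\#}_i$ and $\tilde G^{\#}_i$ used interchangeably via \eqref{eq:InvGTildeInvGRelations} according to which gives the shorter expression.

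The main obstacle is not conceptual but organisational and computational: once all substitutions are made, the foreground NP Bianchi identities become enormous, and the cancellation of every non-properly-weighted term (those built from $\mathring\beta,\mathring\epsilon,\mathring\beta',\mathring\epsilon'$ and from bare $\nu$- and $\varsigma^{\#}$-derivatives) occurs only after the frame-gauge equations and structure equations are invoked. This step is carried out with \emph{xAct}, and the decisive internal consistency check---which also validates the final formulas---is that each equation in \eqref{eq:Bianchi} comes out manifestly properly weighted with respect to the background spin and boost transformations, even though the NP identities one starts from are not. A secondary point requiring care is applying the background Bianchi and Ricci simplifications consistently (the ``vacuum Bianchi type D property'' already used in the proof of the Ricci relations), so that no spurious curvature contributions beyond $\mathring\Psi_2$ are generated.
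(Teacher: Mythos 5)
Your proposal is correct and follows essentially the same route as the paper: translate the Bianchi identities to the differential variables and eliminate the derivatives of $\mathring{\Psi}_{2}$ using the background type D Bianchi identities. The only (immaterial) difference is that the paper starts directly from the GHP form of the Bianchi identities, which is already built from properly weighted quantities, so the bookkeeping of the $\mathring{\beta},\mathring{\epsilon},\mathring{\beta}',\mathring{\epsilon}'$ cancellations that you describe for the NP starting point is not needed there, whereas it is the route the paper does take for the Ricci relations.
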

\begin{proof}
A direct translation of the standard GHP Bianchi identities in \cite{GHP} to our differential variables gives the relations \eqref{eq:Bianchi}. Here we have also used the background type D Bianchi identities to handle the derivatives of $\mathring{\Psi}_{2}$.
\end{proof}

\begin{remark}
It is important to note that the full set of equations, i.e. the evolution equations for the differential Lorentz transformation variables \eqref{eq:nuetaevoleq}, the structure equations \eqref{eq:thopmetric}\eqref{eq:AlgebraicSpincoeff1}\eqref{eq:StructureSpincoeff1}, the Ricci relations \eqref{eq:thopRicci}\eqref{eq:ExtraRicci} and the Bianchi identities \eqref{eq:Bianchi}, are all properly weighted equations. This means that they make sense for the entire family of background principal null tetrads. 
\end{remark}

\begin{remark}
From the Bianchi and Ricci equations, one can derive non-linear versions of the Teukolsky master equations (TME) \cite{Teukolsky}
\begin{subequations}
\begin{align}
\hspace{6ex}&\hspace{-6ex}\bigl((\thop{} - 4 \mathring{\rho}' - 4 \tilde{\rho}' -  \overline{\tilde{\rho}'} -  \bar{\mathring{\rho}}')(\tho{} -  \mathring{\rho} -  \tilde{\rho})
 -  (\edtp{} -  \bar{\mathring{\tau}} - 4 \mathring{\tau}' - 4 \tilde{\tau}')(\edt{} -  \mathring{\tau})
 - 3 (\mathring{\Psi}_{2}{} + \tilde{\Psi}_{2}{} + \tilde{\sigma} \tilde{\sigma}')\bigr)\tilde{\Psi}_{4}{}\nonumber\\
={}&4 \tilde{\Psi}_{3}{} (\edt{} -  \overline{\tilde{\tau}'} -  \bar{\mathring{\tau}}')\tilde{\sigma}'
 + 4 \tilde{\sigma}' \edt \tilde{\Psi}_{3}{}
 - 10 \tilde{\Psi}_{3}{}^2,\\
\hspace{6ex}&\hspace{-6ex}\bigl((\tho{} - 4 \mathring{\rho} - 4 \tilde{\rho} -  \overline{\tilde{\rho}} -  \bar{\mathring{\rho}})(\thop{} -  \mathring{\rho}' -  \tilde{\rho}')
 -  (\edt{} - 4 \mathring{\tau} -  \overline{\tilde{\tau}'} -  \bar{\mathring{\tau}}')(\edtp{} -  \mathring{\tau}' -  \tilde{\tau}')
 - 3 (\mathring{\Psi}_{2}{} + \tilde{\Psi}_{2}{} + \tilde{\sigma} \tilde{\sigma}')\bigr)\tilde{\Psi}_{0}{}\nonumber\\
={}&-4 \tilde{\Psi}_{1}{} (\thop{} -  \overline{\tilde{\rho}'} -  \bar{\mathring{\rho}}')\tilde{\kappa}
 + 4 \tilde{\Psi}_{1}{} (\edtp{} -  \bar{\mathring{\tau}})\tilde{\sigma}
 - 4 \tilde{\kappa} \thop \tilde{\Psi}_{1}{}
 + 4 \tilde{\sigma} \edtp \tilde{\Psi}_{1}{}
 - 10 \tilde{\Psi}_{1}{}^2.
\end{align}
\end{subequations}
\end{remark}

From the Bianchi equations, it follows that the differential curvatures satisfy the evolution system given in the following corollary. 

\begin{corollary}[Evolution system for the differential curvature components]
\label{cor:evolsys}
\assumeRadiationAndFrameGaugeHypotheses. 

Let $(t, x, y, z)$ be a real coordinate system such that constant $t$ hypersurfaces are spacelike. 

The differential curvature components satisfy
\begin{subequations}
\label{eq:HyperbolicSystemForCurvature}
\begin{align}
B^t
\partial_t
\begin{pmatrix}
\tilde{\Psi}_{0}\\\tilde{\Psi}_{1}\\\tilde{\Psi}_{2}\\\tilde{\Psi}_{3}\\\tilde{\Psi}_{4}
\end{pmatrix}
={}&
-\sum_{i\in\{x,y,z\}}B^i\partial_i\begin{pmatrix}
\tilde{\Psi}_{0}\\\tilde{\Psi}_{1}\\\tilde{\Psi}_{2}\\\tilde{\Psi}_{3}\\\tilde{\Psi}_{4}
\end{pmatrix}
+
F,
\end{align}
where $F=F(\geometricVariables)$ is a function of the geometric variables and
\begin{align}
B^i={}&
\begin{pmatrix}
n^t{}^3 n^i & - n^t{}^3 m^i  & 0 & 0 & 0\\
- n^t{}^3\bar{m}^i  &  n^t{}^3l^i + l^t n^t{}^2 n^i & - l^t  n^t{}^2m^i & 0 & 0\\
0 & - l^t  n^t{}^2\bar{m}^i & l^t  n^t{}^2l^i + l^t{}^2 n^t n^i & - l^t{}^2  n^t m^i & 0\\
0 & 0 & - l^t{}^2 n^t \bar{m}^i  & l^t{}^2 n^t l^i  + l^t{}^3 n^i & - l^t{}^3 m^i\\
0 & 0 & 0 & - l^t{}^3 \bar{m}^i & l^t{}^3 l^i
\end{pmatrix}  \;\forall\;  i\in \{t,x,y,z\} .
\end{align}
\end{subequations}
\end{corollary}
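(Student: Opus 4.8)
The plan is to read \eqref{eq:HyperbolicSystemForCurvature} off the eight first-order Bianchi identities \eqref{eq:Bianchi}: rewrite each foreground GHP operator in the coordinates $(t,x,y,z)$ as a directional derivative along a tetrad leg plus a zeroth-order term, then take five specific linear combinations of the identities, weighted by powers of the contractions $l^{t}$ and $n^{t}$ of the foreground null legs with $\di t$, so that the principal parts assemble into the stated matrices $B^{i}$ and all remaining terms collect into $F$.

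The first ingredient is that, acting on a $(p,q)$-weighted scalar $\varphi$ and working in the coordinates $(t,x,y,z)$ with $\mu$ ranging over $t,x,y,z$, the foreground operators satisfy $\tho\varphi = l^{\mu}\partial_{\mu}\varphi + (\dots)\varphi$, $\thop\varphi = n^{\mu}\partial_{\mu}\varphi + (\dots)\varphi$, $\edt\varphi = m^{\mu}\partial_{\mu}\varphi + (\dots)\varphi$, $\edtp\varphi = \bar{m}^{\mu}\partial_{\mu}\varphi + (\dots)\varphi$, where $(l^{\mu},n^{\mu},m^{\mu},\bar{m}^{\mu})$ are the coordinate components of the foreground frame of definition~\ref{def:ForegroundFrame} — indeed the operators were defined precisely by combining $\thoBG,\thopBG,\edtBG,\edtpBG$ with the same coefficients that build that frame out of $\mathring{l},\mathring{n},\mathring{m}$. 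The frame components are algebraic in $(\eta,\nu,\InvGTrTilde,\tilde{G}^{\#}_2{},\tilde{G}^{\#}_1{},\tilde{G}^{\#}_0{})$, and each zeroth-order coefficient $(\dots)$ is algebraic in $p,q$, the differential spin coefficients $\tilde{\epsilon},\tilde{\beta},\tilde{\beta}'$ and the explicit background connection data; in particular each is a function of $\geometricVariables$ (and of position). Hence every identity in \eqref{eq:Bianchi} has the schematic form $A^{\mu}\partial_{\mu}\tilde{\Psi} = (\text{function of }\geometricVariables)$ with $A^{\mu}$ a single tetrad leg and with no surviving derivatives of metric coefficients or of $\eta,\nu$, those having already been eliminated in putting \eqref{eq:Bianchi} into its stated form by means of the structure equations, the Ricci relations, and \eqref{eq:edtpBGtauBG}.

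Next I would assemble the five rows of the system, indexed by $0,\dots,4$ to match $\tilde{\Psi}_0,\dots,\tilde{\Psi}_4$. For $k\in\{1,2,3\}$, row $k$ is $(n^{t})^{4-k}(l^{t})^{k-1}$ times the Bianchi identity carrying $\tho\tilde{\Psi}_{k}$ added to $(n^{t})^{3-k}(l^{t})^{k}$ times the identity carrying $\thop\tilde{\Psi}_{k}$; row $0$ is $(n^{t})^{3}$ times the identity carrying $\thop\tilde{\Psi}_{0}$, and row $4$ is $(l^{t})^{3}$ times the identity carrying $\tho\tilde{\Psi}_{4}$ (there being no first-order Bianchi identities carrying $\tho\tilde{\Psi}_{0}$ or $\thop\tilde{\Psi}_{4}$). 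Substituting $\tho=l^{\mu}\partial_{\mu}+\dots$, $\thop=n^{\mu}\partial_{\mu}+\dots$, $\edt=m^{\mu}\partial_{\mu}+\dots$, $\edtp=\bar{m}^{\mu}\partial_{\mu}+\dots$ and collecting the coefficient of $\partial_{\mu}\tilde{\Psi}_{j}$: the diagonal entry of row $k$ is $(n^{t})^{3}n^{\mu}$ for $k=0$, $(l^{t})^{k-1}(n^{t})^{3-k}\bigl(n^{t}l^{\mu}+l^{t}n^{\mu}\bigr)$ for $1\le k\le 3$, and $(l^{t})^{3}l^{\mu}$ for $k=4$; the entry coupling rows $k$ and $k+1$ is $-(l^{t})^{k}(n^{t})^{3-k}m^{\mu}$ above the diagonal (from $-\edt\tilde{\Psi}_{k+1}$) and $-(l^{t})^{k}(n^{t})^{3-k}\bar{m}^{\mu}$ below it (from $-\edtp\tilde{\Psi}_{k}$); note the two weights match because the identity carrying $\thop\tilde{\Psi}_{k}$ and the one carrying $\tho\tilde{\Psi}_{k+1}$ receive the same weight $(n^{t})^{3-k}(l^{t})^{k}$. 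With $\mu=i\in\{x,y,z\}$ this is precisely the displayed $B^{i}$, with $\mu=t$ it is $B^{t}$, and all remaining lower-order terms collect into a single $F=F(\geometricVariables)$.

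I do not expect a deep obstacle here; the content is the choice of weighting together with a symbol bookkeeping. The one point to be careful about is the assertion that $F$ is a function of $\geometricVariables$ alone, carrying no hidden derivatives — which is exactly why \eqref{eq:Bianchi} had to be pre-reduced with the structure and Ricci relations and \eqref{eq:edtpBGtauBG}, and why the zeroth-order coefficients of the GHP operators above are purely algebraic. For orientation I would also record, although the corollary does not assert it, that this particular weighting is dictated by making the system symmetric hyperbolic — the reduction needed for the well-posedness theorem~\ref{thm:NLnhIsLWP}: each $B^{i}$ is Hermitian because $l^{i},n^{i}$ are real and $\bar{m}^{i}=\overline{m^{i}}$, and since the constant-$t$ hypersurfaces are spacelike, $\di t$ is timelike and future-directed, so $l^{t},n^{t}>0$ and $2l^{t}n^{t}-2|m^{t}|^{2}=g^{\#}{}^{ab}\partial_{a}t\,\partial_{b}t>0$, whence $B^{t}$ is positive definite.
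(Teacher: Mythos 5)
Your proposal is correct and follows essentially the same route as the paper: the paper writes the eight Bianchi identities as an $8\times 5$ first-order system with tetrad-component principal part and then multiplies by a $5\times 8$ matrix of weights $(n^t)^{3-k}(l^t)^{k}$, which is exactly the row-by-row weighted combination you describe. Your verification of the entries of $B^i$ and the remark on Hermiticity and positivity of $B^t$ match the paper's computation (the latter belonging to Theorem \ref{thm:FOSH}).
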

\begin{proof}
Consider the components of the foreground frame in terms of the coordinate co-frame, i.e. $l^t=l^a(dt)_a$ etc. The spacelike nature of the hypersurfaces means that the co-normal $(dt)_a$ is time-like, i.e. $0<g^{\#}{}^{ab}(dt)_a(dt)_b=2l^tn^t-2m^t\bar{m}^t$. In particular, we get $l^tn^t > m^t\bar{m}^t\geq 0$. Furthermore as we assume that $l^a$ and $n^a$ are future pointing, we get that $l^t>0$ and $n^t>0$. 

We can write \eqref{eq:Bianchi} in the form 
\begin{align*}
&\begin{pmatrix}
- \bar{m}^t & l^t & 0 & 0 & 0\\
0 & - \bar{m}^t & l^t & 0 & 0\\
0 & 0 & - \bar{m}^t & l^t & 0\\
0 & 0 & 0 & - \bar{m}^t & l^t\\
n^t & - m^t & 0 & 0 & 0\\
0 & n^t & - m^t & 0 & 0\\
0 & 0 & n^t & - m^t & 0\\
0 & 0 & 0 & n^t & - m^t
\end{pmatrix}\partial_t
\begin{pmatrix}
\tilde{\Psi}_{0}\\\tilde{\Psi}_{1}\\\tilde{\Psi}_{2}\\\tilde{\Psi}_{3}\\\tilde{\Psi}_{4}
\end{pmatrix}
\\&={}\sum_{i\in\{x,y,z\}}\begin{pmatrix}
\bar{m}^i & - l^i & 0 & 0 & 0\\
0 & \bar{m}^i & - l^i & 0 & 0\\
0 & 0 & \bar{m}^i & - l^i & 0\\
0 & 0 & 0 & \bar{m}^i & - l^i\\
- n^i & m^i & 0 & 0 & 0\\
0 & - n^i & m^i & 0 & 0\\
0 & 0 & - n^i & m^i & 0\\
0 & 0 & 0 & - n^i & m^i
\end{pmatrix}\partial_i\begin{pmatrix}
\tilde{\Psi}_{0}\\\tilde{\Psi}_{1}\\\tilde{\Psi}_{2}\\\tilde{\Psi}_{3}\\\tilde{\Psi}_{4}
\end{pmatrix}
+
l.o.
\end{align*}
where $l.o.$ denotes a function of the geometric variables $\geometricVariables$ but not their derivatives. The corollary follows from multiplying this by
\begin{align*}
\begin{pmatrix}
0 & 0 & 0 & 0 & n^t{}^3 & 0 & 0 & 0\\
n^t{}^3 & 0 & 0 & 0 & 0 & l^t n^t{}^2 & 0 & 0\\
0 & l^t n^t{}^2 & 0 & 0 & 0 & 0 & l^t{}^2 n^t & 0\\
0 & 0 & l^t{}^2 n^t & 0 & 0 & 0 & 0 & l^t{}^3\\
0 & 0 & 0 & l^t{}^3 & 0 & 0 & 0 & 0
\end{pmatrix} .
\end{align*}
\end{proof}

\subsection{First-order symmetric-hyperbolicity}

\begin{theorem}[First-order symmetric-hyperbolic system]
\label{thm:FOSH}
\assumeRadiationAndFrameGaugeHypotheses. 

Assume $\metric$ is a solution of the vacuum Einstein equation. 
Let $(t, x, y, z)$ be a real coordinate system such that constant $t$ hypersurfaces are spacelike. 

The system \eqref{eq:nuetaevoleq},
\eqref{eq:thopmetric},
\eqref{eq:thopRicci}, and
\eqref{eq:HyperbolicSystemForCurvature} 
forms a first-order symmetric-hyperbolic system for the geometric variables $\geometricVariables$, 
and where $\tilde{G}_i$ and $\tilde{\slashed{G}}$ are given in terms of the geometric variables by equations \eqref{eq:InvGTildeInvGRelations} and \eqref{eq:GslashRelations} and where $\varsigma$ and $\varsigma^\#$ are given by equation \eqref{eq:defVarsigma}. 
\end{theorem}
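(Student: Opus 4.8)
The plan is to read the four indicated families of equations as a single quasilinear first-order system $\mathcal{A}^{\mu}(\geometricVariables,\cdot)\,\partial_{\mu}\geometricVariables=\mathcal{F}(\geometricVariables,\cdot)$ for the components of $\geometricVariables$, and to verify that every $\mathcal{A}^{\mu}$ is Hermitian while $\mathcal{A}^{t}$ is positive definite; symmetric hyperbolicity is then the standard criterion. The decisive structural point is that the principal part is block diagonal: equations \eqref{eq:nuetaevoleq}, \eqref{eq:thopmetric}, \eqref{eq:thopRicci} make up a \emph{transport block} for $(\eta,\nu,\tilde{G}^{\#}_{2},\InvGTrTilde,\tilde{G}^{\#}_{1},\tilde{G}^{\#}_{0},\tilde{\sigma}',\tilde{\rho}',\tilde{\tau}',\tilde{\beta},\tilde{\beta}',\tilde{\epsilon},\tilde{\rho},\tilde{\sigma},\tilde{\kappa})$, and \eqref{eq:HyperbolicSystemForCurvature} makes up a \emph{curvature block} for $(\tilde{\Psi}_{0},\dots,\tilde{\Psi}_{4})$; the two blocks are coupled only through undifferentiated terms, which does not affect symmetric hyperbolicity.

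For the transport block, note first that $\thop$ agrees with $\thopBG$ on each of the listed variables because $\tilde{\epsilon}'=0$ by \eqref{eq:AlgebraicSpincoeff1}, and that, in any local choice of background tetrad, $\thopBG\varphi=\mathring{n}^{\mu}\partial_{\mu}\varphi$ plus a position-dependent smooth multiple of $\varphi$ coming from the background spin coefficient $\mathring{\epsilon}'$. Since $\mathring{n}^{a}=\vecN^{a}$ is the fixed ingoing principal null vector of $\metricBackground$, the left-hand sides of \eqref{eq:nuetaevoleq}, \eqref{eq:thopmetric}, \eqref{eq:thopRicci} contribute exactly $\mathring{n}^{\mu}\partial_{\mu}$ to the principal part of each of those equations; hence $\mathcal{A}^{\mu}=\mathring{n}^{\mu}\,\mathrm{Id}$ on this block, which is manifestly Hermitian, and $\mathcal{A}^{t}=\mathring{n}^{t}\,\mathrm{Id}$ is positive definite because $\mathring{n}^{a}$ is future-directed and the level sets of $t$ are spacelike, exactly the sign argument used in the proof of Corollary \ref{cor:evolsys}. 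What remains for this block is the purely bookkeeping check that the right-hand sides are derivative-free functions of $\geometricVariables$ and the spacetime point: substituting for $\tilde{G}_{i},\tilde{\slashed{G}},G^{\#}_{i},\varsigma,\varsigma^{\#}$ via \eqref{eq:InvGTildeInvGRelations}, \eqref{eq:GslashRelations}, \eqref{eq:defVarsigma} and for $\edtpBG\mathring{\tau}$ via \eqref{eq:edtpBGtauBG} makes this evident, with the resulting coefficients smooth in $\geometricVariables$ on the open region where the square roots in \eqref{eq:defVarsigma} have positive arguments and the denominators in \eqref{eq:InvGTildeInvGRelations} do not vanish (in particular for metrics close to $\metricBackground$).

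For the curvature block I would simply invoke Corollary \ref{cor:evolsys}, which already rewrites the Bianchi identities as \eqref{eq:HyperbolicSystemForCurvature} with explicit matrices $B^{\mu}$ and with $F=F(\geometricVariables)$ derivative-free; these $B^{\mu}$ depend on $\geometricVariables$ through $l^{a},m^{a}$ and $\varsigma^{\#}$, so the block, like the transport block, has coefficients smooth in $\geometricVariables$ on the same region. Each $B^{\mu}$ is Hermitian by inspection: its diagonal entries are real combinations of the real numbers $l^{\mu},n^{\mu}$, and every mirror pair of off-diagonal entries differs only by $m^{\mu}\leftrightarrow\bar{m}^{\mu}$ and is therefore a complex-conjugate pair. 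The one substantive step is positivity of $B^{t}$. Writing $a=n^{t}>0$, $b=l^{t}>0$, $c=m^{t}$, the spacelikeness of the hypersurfaces gives $ab-|c|^{2}=\tfrac12 g^{\#}{}^{ab}(dt)_{a}(dt)_{b}>0$, and a direct computation yields the factorization $B^{t}=D\,\tilde{B}\,D$ with $D=\mathrm{diag}\!\left(a^{2},\,a^{3/2}b^{1/2},\,ab,\,a^{1/2}b^{3/2},\,b^{2}\right)$ and $\tilde{B}$ the Hermitian tridiagonal matrix with diagonal $(1,2,2,2,1)$ and off-diagonal entries $\mp c/\sqrt{ab}$. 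Since $|c|/\sqrt{ab}<1$, completing the square on each off-diagonal pair gives, for every $w\in\Complex^{5}$, the bound $w^{*}\tilde{B}w\ge(1-|c|/\sqrt{ab})\bigl(|w_{1}|^{2}+2|w_{2}|^{2}+2|w_{3}|^{2}+2|w_{4}|^{2}+|w_{5}|^{2}\bigr)$, so $\tilde{B}$, and hence $B^{t}$, is positive definite.

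Putting the blocks together, $\mathcal{A}^{\mu}$ is Hermitian for all $\mu$ and $\mathcal{A}^{t}$ is positive definite, so \eqref{eq:nuetaevoleq}, \eqref{eq:thopmetric}, \eqref{eq:thopRicci}, \eqref{eq:HyperbolicSystemForCurvature} constitute a quasilinear first-order symmetric-hyperbolic system for $\geometricVariables$, with $\tilde{G}_{i}$, $\tilde{\slashed{G}}$, $\varsigma$, $\varsigma^{\#}$ expressed through $\geometricVariables$ as in the statement. Most of the substance — deriving \eqref{eq:thopmetric}, \eqref{eq:thopRicci}, \eqref{eq:HyperbolicSystemForCurvature} — is already carried out in Lemma \ref{lem:structureEquations}, the lemma on Ricci relations, and Corollary \ref{cor:evolsys}; I expect the only real obstacle in this theorem to be the positivity of $B^{t}$, everything else being identification of principal symbols and confirming that the right-hand sides carry no derivatives of $\geometricVariables$.
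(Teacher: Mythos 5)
Your proposal is correct and follows the same overall architecture as the paper's proof: a block-diagonal principal part consisting of a transport block with principal symbol $n^{\mu}\,\mathrm{Id}$ (Hermitian, with $n^{t}>0$ from the spacelike slicing) and the curvature block already cast in the form \eqref{eq:HyperbolicSystemForCurvature} by corollary \ref{cor:evolsys}, coupled only through zeroth-order terms. The one place where you genuinely diverge is the positive-definiteness of $B^{t}$. The paper verifies it by Sylvester's criterion, computing the five leading principal minors explicitly (e.g.\ $n^{t}{}^{4}$, $n^{t}{}^{6}(2l^{t}n^{t}-m^{t}\bar{m}^{t})$, \dots) and checking each is positive. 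You instead exhibit the congruence $B^{t}=D\tilde{B}D$ with $D=\mathrm{diag}(a^{2},a^{3/2}b^{1/2},ab,a^{1/2}b^{3/2},b^{2})$ and $\tilde{B}$ tridiagonal with diagonal $(1,2,2,2,1)$ and off-diagonal entries $-m^{t}/\sqrt{l^{t}n^{t}}$ (I checked the factorization; it is exact), and then use $|m^{t}|^{2}<l^{t}n^{t}$ together with $2|uv|\le|u|^{2}+|v|^{2}$ to get $w^{*}\tilde{B}w\ge(1-|m^{t}|/\sqrt{l^{t}n^{t}})\,w^{*}\mathrm{diag}(1,2,2,2,1)w$. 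Both arguments are complete; yours has the mild advantage of producing an explicit quantitative lower bound on $B^{t}$ (useful if one later needs uniform coercivity of the symmetrizer), while the paper's minor computation is a one-line verification once the expressions are in hand. Your side remark that $\thop=\thopBG$ on the transport block because $\tilde{\epsilon}'=0$ is true under the frame-gauge hypotheses but not needed: the difference is zeroth order in any case, which is all the symmetric-hyperbolicity argument requires.
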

\begin{proof} 
The goal is to show that, using the algebraic relations for $\tilde{G}_i$ and $\tilde{\slashed{G}}$ in \eqref{eq:InvGTildeInvGRelations} and for $\varsigma$ and $\varsigma^{\#}$ in \eqref{eq:defVarsigma}, 
the system \eqref{eq:nuetaevoleq},
\eqref{eq:thopmetric},
\eqref{eq:thopRicci}, and
\eqref{eq:HyperbolicSystemForCurvature} can be written in the form 
\begin{align}
A(\geometricVariables)^t \partial_t \geometricVariables ={}& \sum_{i\in\{x,y,z\}} A(\geometricVariables)^i \partial_i \geometricVariables + F(\geometricVariables),
\label{eq:FullFOSH} 
\end{align} 
where $A^t$ and $A^i$ are Hermitian matrices, where $A^t$ is positive definite, and where $A^t$, each $A^i$, and $F$ are functions of $\geometricVariables$ and $\overline\geometricVariables$. Note also that in this equation $F(0)=0$, so $\geometricVariables=0$ is a solution of this system. 

Since $\vecN$ is future-directed, for any $\varphi\in\geometricVariables$, any transport equation of the form $\thop\varphi=f_1(\geometricVariables)$ can be written in coordinates as $n^t\partial_t\varphi=-\sum_{i\in\{x,y,z\}}n^i\partial_i\varphi +f_2(\geometricVariables)$, where $f_2$ is constructed from $f_1$ and from products of the connection coefficients appearing in $\thop$ and of $\varphi$. The equations \eqref{eq:nuetaevoleq},
\eqref{eq:thopmetric}, and \eqref{eq:thopRicci} are all of the form $\thop\varphi=f(\geometricVariables)$. Therefore, the right hand side of the entire transport system has a diagonal principal part. Since $\vecN$ is real, these diagonal parts are trivially Hermitian. The spacelike nature of the slice implies $n^t>0$, so the left hand side matrix is diagonal and positive definite. This gives equations for differential Lorentzian transformations, the metric components, and the spin components. 

It remains to obtain equations for the curvature components. In equation \eqref{eq:HyperbolicSystemForCurvature}, the $B^i$ are clearly symmetric. It remains to show $B^t$ is positive definite. The determinant and sub-determinants of $B^t$ are $4 l^t{}^8 n^t{}^8 (l^t n^t -  m^t \bar{m}^t) (2l^t n^t -  m^t \bar{m}^t)>0$, $l^t{}^4 n^t{}^8 \bigl(l^t{}^2 n^t{}^2 + 6 l^t n^t (l^t n^t -  m^t \bar{m}^t) + (l^t n^t -  m^t \bar{m}^t)^2\bigr)>0$, $l^t{}^2 n^t{}^8 \bigl(l^t n^t + 3 (l^t n^t -  m^t \bar{m}^t)\bigr)>0$, $n^t{}^6 (2 l^t n^t - m^t \bar{m}^t)>0$, $n^t{}^4>0$. Hence, $B^t$ is positive definite. This gives equations for the curvature components and hence all components of $\geometricVariables$. Thus, equations \eqref{eq:nuetaevoleq},
\eqref{eq:thopmetric},
\eqref{eq:thopRicci}, and
\eqref{eq:HyperbolicSystemForCurvature} form a first-order symmetric-hyperbolic system for $\geometricVariables$. 
\end{proof}

\subsection{Completing the proof of theorem \ref{thm:NLnhIsLWP}}
\begin{proof}[Proof of theorem \ref{thm:NLnhIsLWP}]
The symmetric hyperbolicity in point \ref{pt:FOSH} is proved in theorem \ref{thm:FOSH}. 

The geometric variables $\geometricVariables$ include the foreground metric coefficients $\tilde{G}^{\#}_i$ and $\tilde{\slashed{G}}^{\#}$ and the differential Lorentz transformations $\nu$ and $\eta$. From these, the components with respect to the background metric $G^{\#}$ and $\slashed{G}$ can be calculated using equations \eqref{eq:GslashRelations} and \eqref{eq:InvGTildeInvGRelations}. From these and the background tetrad $(\vecLBackground,\vecNBackground,\vecMBackground,\vecMbBackground)$, the original metric $\metric_{ab}$ can be calculated. This completes the proof of point \ref{pt:systemDeterminesMetric}. 

Finally, suppose one has a set of initial data for the vacuum Einstein equation. Choose also a set of initial data for $\nu$ and $\eta$. On the one hand, the initial data for the Einstein equation launches a unique solution of the vacuum Einstein equation. From the results in section \ref{s:ProofOfEnforecabilityOfTheNLORG}, coordinates can be chosen so that this metric satisfies the \NLnhCondition. Let $\nu$ and $\eta$ satisfy the evolution equations \eqref{eq:nuetaevoleq} from the \frameGaugeHypothesesNoRef. The lemmas from this section give that the geometric variables constructed from differential Lorentz transforms, the foreground metric, its connection coefficients, and its curvature (and from the background quantities) satisfy the system \eqref{eq:nuetaevoleq},
\eqref{eq:thopmetric},
\eqref{eq:thopRicci}, and
\eqref{eq:HyperbolicSystemForCurvature}. Call this solution $\geometricVariables_1$. On the other hand, the initial data for the vacuum Einstein equation, together with the choice of initial data for $\nu$ and $\eta$, launch a unique solution of the system \eqref{eq:nuetaevoleq},
\eqref{eq:thopmetric},
\eqref{eq:thopRicci}, and
\eqref{eq:HyperbolicSystemForCurvature}. Call this solution $\geometricVariables_2$. Since $\geometricVariables_1$ and $\geometricVariables_2$ have the same initial data, since they satisfy the same system, and since there is uniqueness of solutions to first-order symmetric-hyperbolic  systems, it follows that $\geometricVariables_1$ and $\geometricVariables_2$ are the same. This means that the metric components coincide. In particular, the metric constructed from the solution of the first-order symmetric-hyperbolic system $\geometricVariables_2$ coincides with the solution of the vacuum Einstein equation launched from the corresponding initial data. In particular, the solution of the first-order symmetric-hyperbolic system determines a metric which satisfies the vacuum Einstein equation. This completes the proof of the final point in the theorem. 
\end{proof}

\subsection{Initial data and residual gauge}
Before concluding this section, we make a few remarks about the initial data and the residual gauge. 

\begin{remark}[Propagation of constraints]
With a coordinate system as in corollary~\ref{cor:evolsys}, one can interpret the equations \eqref{eq:AlgebraicSpincoeff1}, \eqref{eq:StructureSpincoeff1}, \eqref{eq:ExtraRicci} and the remaining Bianchi identities as a set of constraint equations, by expressing the derivatives in terms of coordinate derivatives and eliminating the time derivatives with \eqref{eq:FullFOSH}. By applying a $\thop$ derivative to this set of equations, commuting the $\thop$ inside, using the evolution equations, and again the constraints, one finds that the constraints propagate. 
\end{remark}

\begin{remark}[Initial data for spin coefficients]
If one is given initial data only for the metric coefficients, $\nu$, and $\eta$, one can construct initial data for the differential spin coefficients via the full set of structure equations. Initial data for the curvature can be constructed from a subset of the Ricci relations. Note that the values for $\nu$ and $\eta$ on the initial slice are not constrained if we interpret \eqref{eq:StructureSpincoeff1} as equations giving initial data for differential spin coefficients. The initial data for the metric coefficients are constrained due to the fact that we are only considering vacuum perturbations. 
\end{remark}

\begin{remark}[Residual gauge]
In this section, we use the \NLnhCondition{} in the open set on which we construct solutions. From the perspective of naive function counting, these specify the four free functions that can be specified by a gauge choice in an open set. This gives a unique solution for each choice of initial data. However, there remains a residual gauge freedom that can be treated as a diffeomorphism of the initial data. 
In the next section we will see that the diffeomorphism part of the initial data gauge freedom can be partially fixed by making $\InvGTrTilde$ small in an appropriate sense. The initial data part of the differential frame gauge can be fixed by choosing the initial data for $\nu$ and $\eta$. 
As discussed above, this can be done in an arbitrary way, but it is convenient to choose the initial data for $\nu$ to be $0$. As we will see below, $\nu$ will then stay quadratically small. 
To also set the initial data for $\eta$ to zero is also possible, but it will not stay quadratically small during the evolution. 
An alternative is to set the initial data for $\eta$ so that the initial data for $\tilde{\beta}$ is quadratically small. This has the advantage that $\tilde{\beta}$ will stay quadratically small. For details see section~\ref{sec:SmallnessBeta}. 
\end{remark}

%%%%%%%%%%%%%%%%%%%%%%%%%%%%%%%%%%%%%%%%
\section{Imposing the trace condition}
\label{s:traceCondition} 
This section can be summarized as follows: Price-Shankar-Whiting \cite{Price:2006ke} have shown that, for the linearized Einstein equation, a linearized gauge transformation that satisfies the \LnhCondition{} can be further transformed to satisfy the \linearTraceCondition{} and hence the \ClassicalORG; we show that the same result can be shown to quadratic order for the full Einstein equation.

The main result of this section is the following refinement of theorem \ref{thm:NLnhEnforceability}. 

\begin{theorem}[Enforceability of the trace condition to quadratic order]
\label{thm:NLORGEnforceability}
\assumeRadiationHypotheses. Let $\gaugeRegularityOut$ be a sufficiently large integer and let $(X,Y,I,J,h,U,V)$ be as in definition \ref{def:diffeomorphismGauge} for a diffeomorphism. 

There exist $\varepsilon_0>0$, $\gaugeRegularityIn>\gaugeRegularityOut$, and $K>0$ such that if $\metric_{ab}$ is a symmetric $(0,2)$ tensor $\metric_{ab}$ satisfying the vacuum Einstein equation and $|\metric-\metricBackground|_{C^{\gaugeRegularityIn}(U)}<\varepsilon_0$, then:
\begin{enumerate}
\item 
\label{pt:NLnhEnforceability}
There is a $C^{\gaugeRegularityOut}$ diffeomorphism gauge transform $(U,V,\Phi)$ such that
$|\Phi^{-1}_*\metric-\metricBackground|_{C^{\gaugeRegularityOut}(h(Y))}$ $\leq K|\metric-\metricBackground|_{C^{\gaugeRegularityIn}(h(X))}$, and $\Phi^{-1}_*\metric$ satisfies the \NLnhCondition{} on $V$. 
\item 
\label{pt:NLORGEnforceability}
Furthermore, $\Phi$ can be chosen such that 
\begin{align}
\label{eq:NLTraceConditionToQuadraticOrder}
|\metricBackground^{ab}(\Phi^{-1}_*\metric)_{ab}-\metricBackground^{ab}\metricBackground_{ab}|_{C^{\gaugeRegularityOut}(V)}
\leq K |\metric-\metricBackground|_{C^{\gaugeRegularityIn}(U)}^2 .
\end{align}
\end{enumerate}
\end{theorem}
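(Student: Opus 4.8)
Item \ref{pt:NLnhEnforceability} is the content of Theorem \ref{thm:NLnhEnforceability} (together with the observation, made in its proof, that the transformed metric on $h(Y)$ is controlled by the original metric on $h(X)$ and that continuity of solutions of ODEs propagates closeness from the initial graph into the slab), so it remains to prove item \ref{pt:NLORGEnforceability}. The plan, following Price--Shankar--Whiting \cite{Price:2006ke}, is to start from a metric already satisfying the \NLnhCondition{} and then apply one further \emph{residual} diffeomorphism gauge -- one that preserves the \NLnhCondition{} -- tuned so that the part of $\metricBackground^{ab}\metric_{ab}-4$ that is linear in $\metric-\metricBackground$ is cancelled, leaving only a quadratic remainder.

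First I would apply Theorem \ref{thm:NLnhEnforceability} to obtain a $C^{\gaugeRegularityOut}$ diffeomorphism $\Phi_0$, compatible with the relevant sets, with $g^{(0)}:=(\Phi_0)^{-1}_*\metric$ satisfying the \NLnhCondition{} on an open set containing $V$ and $|g^{(0)}-\metricBackground|_{C^{\gaugeRegularityOut}(V)}\le K\eps$, where $\eps:=|\metric-\metricBackground|_{C^{\gaugeRegularityIn}(U)}$. Writing $h^{(0)}:=g^{(0)}-\metricBackground$, this symmetric tensor satisfies the \LnhCondition{}, $\vecN^ah^{(0)}_{ab}=0$, exactly. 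The construction in Section \ref{s:ProofOfEnforecabilityOfTheNLORG} realises the \NLnhCondition{} by launching $g^{(0)}$-geodesics in the $\vecN=-\partial_r$ direction from the graph $h(X)$, normalised by \eqref{eq:surfaceInnerProducts:imposed}; its only inputs are the initial graph and its parameterisation, so the residual diffeomorphisms preserving the \NLnhCondition{} for $g^{(0)}$ are parameterised, to leading order, by four functions on the initial hypersurface and are generated by vector fields $\vecX$ solving the transport system $\vecN^a(\mathcal L_{\vecX}g^{(0)})_{ab}=0$.

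Next I would invoke \cite{Price:2006ke}: since $h^{(0)}$ satisfies the \LnhCondition{}, there is a residual linearised gauge field $\vecX_{\mathrm{PSW}}$ solving $\vecN^a(\mathcal L_{\vecX}\metricBackground)_{ab}=0$ -- which agrees to leading order with the transport system above -- such that $2\mathring{\nabla}_{a}\vecX_{\mathrm{PSW}}^{a}=-\metricBackground^{ab}h^{(0)}_{ab}$ on all of $V$; moreover $\vecX_{\mathrm{PSW}}$ is reconstructed from data on the initial hypersurface by solving transport equations along $\vecN$ together with the Price--Shankar--Whiting identities, with a bounded solution operator that loses a fixed, finite number $c$ of derivatives, so that $|\vecX_{\mathrm{PSW}}|_{C^{\gaugeRegularityOut+1}(V)}\lesssim|\metricBackground^{ab}h^{(0)}_{ab}|_{C^{\gaugeRegularityOut+c}(V)}\lesssim\eps$. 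I would then feed the initial data of $\vecX_{\mathrm{PSW}}$ back into the exact geodesic construction of Section \ref{s:ProofOfEnforecabilityOfTheNLORG} applied to $g^{(0)}$, obtaining a $C^{\gaugeRegularityOut}$ diffeomorphism $\Psi$ for which $g^{(1)}:=\Psi^{-1}_*g^{(0)}$ satisfies the \NLnhCondition{} \emph{exactly}; by construction the generator $\vecX$ of $\Psi$ differs from $\vecX_{\mathrm{PSW}}$ only through the replacement of $\metricBackground$ by $g^{(0)}$ in the transport system, so $|\vecX-\vecX_{\mathrm{PSW}}|_{C^{\gaugeRegularityOut+1}(V)}\lesssim\eps^2$. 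Expanding gives
\begin{align}
\metricBackground^{ab}g^{(1)}_{ab}-\metricBackground^{ab}\metricBackground_{ab}
={}&\metricBackground^{ab}h^{(0)}_{ab}+2\mathring{\nabla}_{a}\vecX^{a}+R ,
\end{align}
where $R$ collects the terms of order at least $\eps^2$ -- the higher terms of the flow, the cross term $\mathcal L_{\vecX}h^{(0)}$, and the nonlinear corrections of the re-run construction -- so $|R|_{C^{\gaugeRegularityOut}(V)}\lesssim\eps^2$. Since $2\mathring{\nabla}_{a}\vecX^{a}=-\metricBackground^{ab}h^{(0)}_{ab}+O(\eps^2)$, this yields $|\metricBackground^{ab}g^{(1)}_{ab}-\metricBackground^{ab}\metricBackground_{ab}|_{C^{\gaugeRegularityOut}(V)}\lesssim\eps^2$. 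Setting $\Phi:=\Phi_0\circ\Psi$ (so $\Phi^{-1}_*\metric=g^{(1)}$) and tracking the finitely many derivatives lost -- in Theorem \ref{thm:NLnhEnforceability}, in the Price--Shankar--Whiting solution operator, and in the flow and composition -- which is what forces $\gaugeRegularityIn>\gaugeRegularityOut$, produces the required $\Phi$, with item \ref{pt:NLnhEnforceability} inherited exactly as in Theorem \ref{thm:NLnhEnforceability}.

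The main obstacle I anticipate is keeping the \NLnhCondition{} \emph{exact} while paying only quadratically in the trace. Literally flowing $g^{(0)}$ along $\vecX_{\mathrm{PSW}}$ would spoil the \NLnhCondition{} at order $\eps^2$, which is not permitted since item \ref{pt:NLnhEnforceability} demands it exactly; this is why $\Psi$ must be produced through the exact geodesic construction of Section \ref{s:ProofOfEnforecabilityOfTheNLORG} rather than as the flow of a fixed vector field. Once that is arranged, every departure from linear theory -- nonlinearity of $\mathcal L$, of the gauge construction, and the mismatch between $\vecX$ and $\vecX_{\mathrm{PSW}}$ -- is genuinely $O(\eps^2)$ and is absorbed into $R$, so that the remaining work is the bookkeeping of these remainders and of the derivative losses; the hard algebraic input, namely that the residual gauge freedom can cancel the linearised trace throughout $V$ and not merely on the initial slice, is imported from \cite{Price:2006ke} and rests on the special structure of the Kerr background -- in particular that $\vecN$ is shear-free and on the type D Bianchi and Ricci identities.
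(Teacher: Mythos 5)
Your proposal is correct and follows essentially the same route as the paper: point \ref{pt:NLnhEnforceability} is theorem \ref{thm:NLnhEnforceability}, and point \ref{pt:NLORGEnforceability} is obtained by choosing the Price--Shankar--Whiting residual gauge field $\vecPSW$ from initial data that kills the trace and its $\thopBG$-derivative on $h(X)$, propagating the cancellation through $V$ via the second-order transport ODE for the trace, and absorbing every departure from linear theory into an $O(\eps^2)$ remainder. The only packaging difference is that the paper realizes the diffeomorphism as the flow along $\vecPSW$ followed by an explicit regauging map $\Phi_2$ (shown to be an $O(\eps^2)$ perturbation because it depends only on the size of $\vecN^a(\Phi_1^*\metric)_{ab}$), whereas you fold the two steps into a single modified geodesic construction; the estimates required are the same.
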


\begin{remark}
In the proof of the above theorem, we use only the diffeomorphism gauge and the background operators,  independent of the choice of foreground frame gauge, thus leaving the freedom of choosing a frame gauge. In particular, the statements in both the well-posedness theorem \ref{thm:NLnhIsLWP} and the above theorem \ref{thm:NLORGEnforceability} can  simultaneously hold by the above diffeomorphism gauge choice and the  frame gauge choice in definition \ref{def:frameGaugeHypotheses}.
\end{remark}

\subsection{Review of the \LnhCondition{} from Price-Shankar-Whiting \texorpdfstring{\cite{Price:2006ke}}{}}
In this subsection, we review the results of \cite{Price:2006ke} on the linear radiation gauge and linear trace conditions as well as the ORG, which appear in definition \ref{def:linearORG}. The radiation gauge in \cite{Price:2006ke} is based on the vector field $\vecL$, while ours is based on $\vecN$. Therefore, many of the formulas interchange primed and unprimed. We state the following result for a Kerr background, although \cite{Price:2006ke} show these results hold in the wider class of metrics.

To explain the linear theory, following \cite{Price:2006ke}, we introduce the Held integration technique first described in \cite{Held1974}. 
We have re-derived all equations and made slight modifications to make sure that all expressions are properly weighted. 
A spinor $\alpha$ is defined to be a Held spinor if $\thop\alpha=0$. For a spinor $\alpha$, the notation $\Held{\alpha}$ indicates that $\alpha$ is a Held spinor. For a vector field $\vecX$ and a point $p$, define $\Flow{\vecX}{s}(p)$ to be the flow along $\vecX$, i.e. such that for any $p$, the function $\Flow{\vecX}{s}(p)$ is the solution of $\frac{d}{ds}\Flow{\vecX}{s}(p)=\vecX$ and $\Flow{\vecX}{0}(p)=p$; for sets $S$ and $P$ of $\Reals$ and the manifold respectively, define $\Flow{\vecX}{S}(P)=\cup_{s\in S,p\in P}\Flow{\vecX}{s}(p)$. For a spinor $\alpha$ defined on a hypersurface $\Sigma$ which is given as the graph of $r$ as a function of $(v,\omega)$, there is a unique extension of $\alpha$ as a Held spinor on $\Flow{\vecN}{\Reals}(\Sigma)$, which we will denote by $\Held{\alpha}$. For Held spinors defined on an open set, the operators $\Heldtho$, $\Heldeth$, and $\Heldethp$ are defined to be
\begin{subequations}
\begin{align}
\Heldtho \varphi ={}&- \frac{p \mathring{\Psi}_{2}{} \varphi}{2 \mathring{\rho}'}
 -  \frac{q \bar{\mathring{\Psi}}_{2}{} \varphi}{2 \bar{\mathring{\rho}}'}
 + \thoBG \varphi
 -  \frac{\mathring{\tau}' \edtBG \varphi}{\mathring{\rho}'}
 -  \frac{\bar{\mathring{\tau}}' \edtpBG \varphi}{\bar{\mathring{\rho}}'} ,\\
\Heldeth \varphi ={}&- \frac{p \bar{\mathring{\tau}}' \varphi}{\bar{\mathring{\rho}}'}
 + \frac{\edtBG \varphi}{\mathring{\rho}'} ,\\
\Heldethp \varphi ={}&- \frac{q \mathring{\tau}' \varphi}{\mathring{\rho}'}
 + \frac{\edtpBG \varphi}{\bar{\mathring{\rho}}'} .
\end{align}
\end{subequations}
For a spinor $\alpha$ defined on $\Sigma$, the operator $\Heldtho$ denotes the operator defined by extending $\alpha$ to $\Held{\alpha}$, applying $\Heldtho$, and then restricting to $\Sigma$ again. For a spinor $\alpha$ defined on $\Sigma$, the operators $\Heldeth$ and $\Heldethp$ are defined analogously. Note that, when acting on Held spinors, the operator $\thop$ commutes with $\Heldtho$, $\Heldeth$, and $\Heldethp$. 

The following lemma encapsulates the key results of \cite{Price:2006ke} regarding the \LnhCondition. Equations \eqref{eq:NullGaugevcondGHP} and \eqref{eq:PSW23} correspond to equations (15) and (23) of \cite{Price:2006ke}. 

\begin{lemma}[The \LnhCondition{}  \cite{Price:2006ke}]
Let $0<r_1<r_2<\infty$ and $v_1<v_2$. Let the \backgroundHypotheses{} hold with  $U=(r_1,r_2)\times(v_1,v_2)\times\Sphere$.

Let $h_{ab}$ be a symmetric $(0,2)$ tensor that satisfies the \LnhCondition{} of definition \ref{def:linearORG}. Let $\vecPSW$ be a vector field. 
\begin{enumerate}
\item The tensor field $h_{ab}+\Lie_\vecPSW\metricBackground_{ab}$ satisfies the \LnhCondition{} if
\begin{subequations}
\label{eq:NullGaugevcondGHP}
\begin{align}
\thopBG \vecPSW_{\mathring{l}}={}&- \vecPSW_{\vecMBackground} (\bar{\mathring{\tau}} + \mathring{\tau}')
 -  \vecPSW_{\vecMbBackground} (\mathring{\tau} + \bar{\mathring{\tau}}')
 -  \thoBG \vecPSW_{\mathring{n}},\\
\thopBG \vecPSW_{\mathring{n}}={}&0,\\
\thopBG \vecPSW_{\vecMBackground}={}&- \vecPSW_{\vecMBackground} \mathring{\rho}'
 -  \vecPSW_{\mathring{n}} \mathring{\tau}
 -  \edtBG \vecPSW_{\mathring{n}},\\
\thopBG \vecPSW_{\vecMbBackground}={}&- \vecPSW_{\vecMbBackground} \bar{\mathring{\rho}}'
 -  \vecPSW_{\mathring{n}} \bar{\mathring{\tau}}
 -  \edtpBG \vecPSW_{\mathring{n}}.
\end{align}
\end{subequations}
\item The general solution of  \eqref{eq:NullGaugevcondGHP} is given in terms of arbitrary Held spinors $\Held{\vecPSW}_{\vecLBackground}$, $\Held{\vecPSW}_{\vecNBackground}$, $\Held{\vecPSW}_{\vecMBackground}$, $\Held{\vecPSW}_{\vecMbBackground}$ by 
\begin{subequations}
\label{eq:PSW23}
\begin{align}
\vecPSW_{\mathring{l}}={}&\Held{\vecPSW}_{\mathring{l}}
 + \frac{\Held{\vecPSW}_{\vecMBackground} \mathring{\tau}'}{\mathring{\rho}'^2}
 + \frac{\Held{\vecPSW}_{\vecMbBackground} \bar{\mathring{\tau}}'}{\bar{\mathring{\rho}}'^2}
 + \Held{\vecPSW}_{\mathring{n}} \Bigl(\frac{\mathring{\Psi}_{2}{}}{2 \mathring{\rho}'^2} + \frac{\bar{\mathring{\Psi}}_{2}{}}{2 \bar{\mathring{\rho}}'^2} + \frac{\mathring{\tau}' \bar{\mathring{\tau}}'}{\mathring{\rho}' \bar{\mathring{\rho}}'}\Bigr)
 + \frac{1}{2}\Bigl(\frac{1}{\mathring{\rho}'} + \frac{1}{\bar{\mathring{\rho}}'}\Bigr) \Heldtho \Held{\vecPSW}_{\mathring{n}}
 -  \frac{\mathring{\tau}' \Heldeth \Held{\vecPSW}_{\mathring{n}}}{\mathring{\rho}'}
 -  \frac{\bar{\mathring{\tau}}' \Heldethp \Held{\vecPSW}_{\mathring{n}}}{\bar{\mathring{\rho}}'} ,\\
\vecPSW_{\mathring{n}}={}&\Held{\vecPSW}_{\mathring{n}},\\
\vecPSW_{\vecMBackground}={}&\frac{\Held{\vecPSW}_{\vecMBackground}}{\mathring{\rho}'}
 + \frac{\Held{\vecPSW}_{\mathring{n}} \bar{\mathring{\tau}}'}{\bar{\mathring{\rho}}'}
 -  \Heldeth \Held{\vecPSW}_{\mathring{n}},\\
\vecPSW_{\vecMbBackground}={}&\frac{\Held{\vecPSW}_{\vecMbBackground}}{\bar{\mathring{\rho}}'}
 + \frac{\Held{\vecPSW}_{\mathring{n}} \mathring{\tau}'}{\mathring{\rho}'}
 -  \Heldethp \Held{\vecPSW}_{\mathring{n}}.
\end{align}
\end{subequations}
\end{enumerate}
\end{lemma}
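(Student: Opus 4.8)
The plan is to prove (i) by a direct GHP computation and (ii) by the Held integration technique of \cite{Held1974,Price:2006ke}, essentially re-deriving the argument of \cite{Price:2006ke} in manifestly properly-weighted form. For (i): since $h_{ab}$ satisfies the \LnhCondition{} \eqref{eq:linearnh}, the transformed tensor $h_{ab}+\Lie_{\vecPSW}\metricBackground_{ab}$ satisfies it exactly when $\mathring{n}^{b}(\Lie_{\vecPSW}\metricBackground)_{ab}=0$, so it suffices (and I would in fact prove the equivalence) to analyse this last condition. Writing $(\Lie_{\vecPSW}\metricBackground)_{ab}=\mathring{\nabla}_{a}\vecPSW_{b}+\mathring{\nabla}_{b}\vecPSW_{a}$ and contracting the one-form $\mathring{n}^{b}(\Lie_{\vecPSW}\metricBackground)_{ab}$ in turn with $\mathring{l}^{a}$, $\mathring{n}^{a}$, $\vecMBackground^{a}$, $\vecMbBackground^{a}$ yields four scalar relations; in each, the Leibniz rule together with the standard formulae expressing $\mathring{\nabla}$ of a background tetrad leg along another leg in terms of background spin coefficients converts the directional derivatives of the components $\vecPSW_{\mathring{l}},\vecPSW_{\mathring{n}},\vecPSW_{\vecMBackground},\vecPSW_{\vecMbBackground}$ into $\thopBG,\thoBG,\edtBG,\edtpBG$ acting on those components. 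Using that the Kerr background is vacuum and Petrov type D aligned with $(\mathring{l},\mathring{n})$, so that $\mathring{\kappa}=\mathring{\sigma}=\mathring{\kappa}'=\mathring{\sigma}'=0$, and that the non-properly-weighted coefficients $\mathring{\epsilon},\mathring{\epsilon}'$ and their conjugates enter only through the weight corrections already built into the GHP operators and therefore cancel, the four relations collapse to precisely \eqref{eq:NullGaugevcondGHP}, proving (i).

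For (ii): the second equation of \eqref{eq:NullGaugevcondGHP} says that $\vecPSW_{\mathring{n}}$ is a Held spinor, which I rename $\Held{\vecPSW}_{\mathring{n}}$ and treat as a free datum. Substituting it into the third and fourth equations turns them into first-order linear $\thopBG$-transport equations for $\vecPSW_{\vecMBackground},\vecPSW_{\vecMbBackground}$ with sources built from $\edtBG\Held{\vecPSW}_{\mathring{n}}$ and $\mathring{\tau}\Held{\vecPSW}_{\mathring{n}}$. Using the Kerr identity $\thopBG\mathring{\rho}'=\mathring{\rho}'^{2}$, equivalently $\thopBG(1/\mathring{\rho}')=-1$, the homogeneous solutions are $\Held{\vecPSW}_{\vecMBackground}/\mathring{\rho}'$ and $\Held{\vecPSW}_{\vecMbBackground}/\bar{\mathring{\rho}}'$ for arbitrary Held spinors $\Held{\vecPSW}_{\vecMBackground},\Held{\vecPSW}_{\vecMbBackground}$, while particular solutions come from Held integration: the $\thopBG$-behaviour of $\mathring{\tau}'$ and $\mathring{\Psi}_{2}{}$ and the commutators $[\thopBG,\edtBG]$, $[\thopBG,\edtpBG]$ acting on Held spinors let one recognise the primitives in terms of $\Heldeth\Held{\vecPSW}_{\mathring{n}}$ and $\Heldethp\Held{\vecPSW}_{\mathring{n}}$, which produces the stated formulas. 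Feeding all three components into the first equation of \eqref{eq:NullGaugevcondGHP}, its homogeneous part is simply $\thopBG\vecPSW_{\mathring{l}}=0$, giving one more arbitrary Held spinor $\Held{\vecPSW}_{\mathring{l}}$, and a particular solution is found by the same procedure, yielding the displayed formula for $\vecPSW_{\mathring{l}}$. Since at each step the full kernel (one free Held spinor per component) is captured alongside a particular solution, \eqref{eq:PSW23} is the general solution.

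The hard part is bookkeeping rather than conceptual. In (i) one must check that, after passing to GHP operators, nothing survives beyond the terms in \eqref{eq:NullGaugevcondGHP}, in particular that all $\mathring{\epsilon},\mathring{\epsilon}'$ contributions cancel; in (ii) one must carry out the Held integrations and then verify by direct substitution that \eqref{eq:PSW23} solves \eqref{eq:NullGaugevcondGHP}. The Kerr-specific and commutator identities needed are all standard consequences of the vacuum type D structure, but the algebra is long, and it is exactly here that the slight modifications relative to \cite{Price:2006ke} required to keep every expression properly weighted must be tracked; this verification is the step I expect to be the main obstacle, and it is naturally carried out with computer algebra.
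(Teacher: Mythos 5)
Your proposal is correct and follows essentially the same route as the paper, which states this lemma without proof as a (properly-weighted re-derivation of) equations (15) and (23) of Price--Shankar--Whiting: part (i) by contracting $\mathring{n}^{b}(\Lie_{\vecPSW}\metricBackground)_{ab}$ with the background tetrad legs and rewriting in GHP form using the vacuum type D relations, and part (ii) by Held integration of the resulting $\thopBG$-transport system using $\thopBG\mathring{\rho}'=\mathring{\rho}'^{2}$ and the related background identities. The only caveat is the one you already flag, namely that the correctness rests on the long but routine algebraic verification that the displayed formulas are exactly the primitives produced by the Held integration.
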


The approach of \cite{Price:2006ke} to the \ClassicalORG{} proceeds as follows. Since the \LnhCondition{} has already been treated, it remains to treat the \linearTraceCondition. In the linearization of the Einstein equation, the trace is the linearization of $\InvGTrTilde$, so it satisfies a linearized version of \eqref{eq:ThopInvGTrTilde}, the linearization of which is a transport equation driven by the linearization of $\tilde{\rho}'$. In turn, $\tilde{\rho}'$ satisfies \eqref{eq:transport:rhoTildePrime}, the linearization of which is a homogeneous transport equation. Thus, the trace satisfies a second-order ordinary differential equation, which has a general solution involving two free parameters, denoted $\Held{a}$ and $\Held{b}$. If a linearized gauge transformation satisfies the linearized radiation gauge, then $\Held{a}$ and $\Held{b}$ can be expressed in terms of $\Held{\vecPSW}_{\vecLBackground}$, $\Held{\vecPSW}_{\vecMBackground}$, $\Held{\vecPSW}_{\vecMbBackground}$, $\Held{\vecPSW}_{\vecNBackground}$. Furthermore, $\Held{\vecPSW}_{\vecLBackground}$, $\Held{\vecPSW}_{\vecMBackground}$, $\Held{\vecPSW}_{\vecMbBackground}$, $\Held{\vecPSW}_{\vecNBackground}$ can be chosen so that the linear trace condition holds. 

It is convenient for us to take a slightly different perspective on imposing the \linearTraceCondition. This is based on considering the initial value problem for the second-order ODE satisfied by the linearized trace, rather than analyzing the general solution in terms of $\Held{a}$ and $\Held{b}$. A linearized gauge transformation takes $h_{ab}$ to $h_{ab}+\Lie_\vecPSW\metricBackground_{ab}$. Thus, to impose the trace condition, it is sufficient to be able to specify $\mathring{g}^{ab} \mathring{\nabla}_{(a}\vecPSW_{b)}$. As noted in \cite{Price:2006ke}, this trace is given by
\begin{subequations}
\begin{align}
\mathring{g}^{ab} \mathring{\nabla}_{(a}\vecPSW_{b)}={}&- \vecPSW_{\mathring{n}} (\mathring{\rho} + \bar{\mathring{\rho}})
 -  \vecPSW_{\mathring{l}} (\mathring{\rho}' + \bar{\mathring{\rho}}')
 -  \edtBG \vecPSW_{\vecMbBackground}
 -  \edtpBG \vecPSW_{\vecMBackground}. 
\end{align}
Assuming that $\vecPSW$ satisfies \eqref{eq:NullGaugevcondGHP}, the derivative along $\vecN$ can be calculated as
\begin{align}
\thopBG (\mathring{g}^{ab} \mathring{\nabla}_{(a}\vecPSW_{b)})={}& \vecPSW_{\mathring{n}} (\mathring{\Psi}_{2}{} + \bar{\mathring{\Psi}}_{2}{} - 2 \mathring{\rho} \bar{\mathring{\rho}}')
 -  \vecPSW_{\mathring{l}} (\mathring{\rho}'^2 + \bar{\mathring{\rho}}'^2)
 + 2 \vecPSW_{\vecMBackground} \mathring{\rho}' \mathring{\tau}'
 + 2 \vecPSW_{\vecMbBackground} \bar{\mathring{\rho}}' \bar{\mathring{\tau}}'\nonumber\\
& +  (\mathring{\rho}' + \bar{\mathring{\rho}}') \thoBG \vecPSW_{\mathring{n}}
 -  (\mathring{\rho}' -  \bar{\mathring{\rho}}') \edtBG \vecPSW_{\vecMbBackground}
 +  \edtBG \edtpBG \vecPSW_{\mathring{n}}
 +  (\mathring{\rho}' -  \bar{\mathring{\rho}}') \edtpBG \vecPSW_{\vecMBackground}
 +  \edtpBG \edtBG \vecPSW_{\mathring{n}}.
\end{align}
\end{subequations}
Applying the general solution of the \LnhCondition{} and further calculation leads to the pair of equations
\begin{subequations}
\label{eq:PSWHeldab}
\begin{align}
\hspace{6ex}&\hspace{-6ex}- \frac{\mathring{g}^{ab} \mathring{\nabla}_{(a}\vecPSW_{b)}}{4 \kappa_{1}{} \bar{\kappa}_{1'}{}} (\kappa_{1}{}^2 + \bar{\kappa}_{1'}{}^2)
 + \frac{\thopBG (\mathring{g}^{ab} \mathring{\nabla}_{(a}\vecPSW_{b)})}{4 \kappa_{1}{} \mathring{\rho}'} (\kappa_{1}{} + \bar{\kappa}_{1'}{})\nonumber\\
={}&\Heldtho \Held{\vecPSW}_{\mathring{n}}
 + \tfrac{1}{2} \Heldeth \Held{\vecPSW}_{\vecMbBackground}
 + \tfrac{1}{2} \Heldethp \Held{\vecPSW}_{\vecMBackground},
\label{eq:PSWHeldab:ForN}\\
\hspace{6ex}&\hspace{-6ex}\frac{ (\kappa_{1}{} -  \bar{\kappa}_{1'}{})^2\mathring{g}^{ab} \mathring{\nabla}_{(a}\vecPSW_{b)}}{4 \kappa_{1}{}^2 \bar{\kappa}_{1'}{} \mathring{\rho}'} (\kappa_{1}{} + \bar{\kappa}_{1'}{})
 -  \frac{\thopBG (\mathring{g}^{ab} \mathring{\nabla}_{(a}\vecPSW_{b)})}{4 \kappa_{1}{}^2 \mathring{\rho}'^2} (\kappa_{1}{}^2 + \bar{\kappa}_{1'}{}^2)\nonumber\\
={}&\Held{\vecPSW}_{\mathring{l}}
 -  \tfrac{1}{2} \Heldeth \Heldethp \Held{\vecPSW}_{\mathring{n}}
  -  \tfrac{1}{2} \Heldethp \Heldeth \Held{\vecPSW}_{\mathring{n}}
   + \frac{(\kappa_{1}{} -  \bar{\kappa}_{1'}{})}{2 \kappa_{1}{} \mathring{\rho}'}( \Heldethp \Held{\vecPSW}_{\vecMBackground}-\Heldeth \Held{\vecPSW}_{\vecMbBackground})
\nonumber\\
& + \frac{\Held{\vecPSW}_{\mathring{n}}}{2 \mathring{\rho}'^2} \Bigl( \frac{\kappa_{1}{}}{\bar{\kappa}_{1'}{}}\mathring{\Psi}_{2}{} + \mathring{\Psi}_{2}{} + 2  \mathring{\rho} \mathring{\rho}' - 2 \mathring{\tau} \mathring{\tau}'\Bigr),
\label{eq:PSWHeldab:ForL}
\end{align}
\end{subequations}
where the Killing spinor coefficient $\kappa_{1}$ is given in equation \eqref{eq:defKappa1}. 
The right-hand sides of these two equations loosely correspond to the quantities $\Held{a}$ and $\Held{b}$ from \cite{Price:2006ke}. 
Set $\Held{\vecPSW}_{\vecMBackground}=0$ and $\Held{\vecPSW}_{\vecMbBackground}=0$. Set $\Held{\vecPSW}_{\vecNBackground}$ to satisfy the analogue of \eqref{eq:PSWHeldab:ForN} where $g^{ab}\mathring{\nabla}_{(a}\vecPSW_{b)}$ and its $\thopBG$ derivative have been replaced by $\frac{1}{2}\metricBackground^{ab}h_{ab}$  and its $\thopBG$ derivative on an initial hypersurface $\Sigma$. In a similar way, set $\Held{\vecPSW}_{\vecLBackground}$ to satisfy the analogue of \eqref{eq:PSWHeldab:ForL}. From this choice of $\Held{\vecPSW}$, set $\vecPSW$ to be the corresponding general solution of the \LnhCondition. This has been chosen so that the trace of $h_{ab}+\Lie_\vecPSW\metricBackground_{ab}$ and the $\thopBG$ derivative of this trace both vanish on the initial hypersurface $\Sigma$. From the second-order, linear ODE that it satisfies, the trace remains zero. This imposes the \linearTraceCondition, and hence the \ClassicalORG.

%%%%%%%%%%%%%%%%%%%%%%%%%%%%%%%%%%%%%%%%%%%%%%%%%%%%%%%%%%%%%%%%%%%%%
\subsection{Proof of theorem \ref{thm:NLORGEnforceability}}

\begin{proof}[Proof of theorem \ref{thm:NLORGEnforceability}]
\begin{steps}
\step{Preliminaries.}
Note that the first point of the theorem is simply a restatement of theorem \ref{thm:NLnhEnforceability}. Thus, we may assume that a diffeomorphism gauge has already been chosen to impose that result. Within the proof, we will impose a pair of further diffeomorphism gauges. The first will impose the trace condition to quadratic order while potentially violating the \NLnhCondition, and the second will reimpose the \NLnhCondition{} while preserving the quadratic smallness of the trace term. 

We assume the hypotheses of the theorem and initially consider what can be uniformly controlled. By taking $\gaugeRegularityIn$ sufficiently large with respect to $\gaugeRegularityOut$, there is a constant $K$ such that $|\Riem[\metric]-\Riem[\metricBackground]|_{C^{\gaugeRegularityOut}(U)}$ $\leq K|\metric-\metricBackground|_{C^{\gaugeRegularityIn}(U)}$. In this case, we can take $\gaugeRegularityIn=\gaugeRegularityOut+2$, but this illustrates that to control any quantity to desired regularity $\gaugeRegularityOut$, we can choose $\gaugeRegularityIn$ sufficiently large. We will use the notation $\varepsilon=|\metric-\metricBackground|_{C^{\gaugeRegularityIn}(U)}$ and, for an exponent $p$, the notation $\alpha=\beta+O(\varepsilon^p)$ to indicate that there is a constant $K$, possibly depending on the open sets and regularity constants $\gaugeRegularityIn$ and $\gaugeRegularityOut$, such that $|\alpha-\beta|_{C^{\gaugeRegularityOut}(V)}\leq K|\metric-\metricBackground|_{C^{\gaugeRegularityIn}(U)}^p$. We use $\alpha$ is $O(\varepsilon^p)$ to mean $\alpha=0+O(\varepsilon^p)$. 

Within this proof, we shall use the ``noncurvature quantities'' to refer to the differential Lorentz transforms, metric, and spin coefficient components. The geometric variables as given before the diffeomorphism gauge is applied are called the geometric variables in the original gauge; the geometric variables after the diffeomorphism gauge has been applied are called the the regauged geometric variables. 

There are three subtleties to address in this proof, all of which are resolved through the use of the smallness of the norms. The first subtlety is that, when constructing the diffeomorphisms, it is necessary that the image of $V$ remains in $U$. 

The first diffeomorphism is generated by the flow along a vector field, and the image property is ensured by the $\varepsilon$ smallness of this vector field. The second diffeomorphism is generated using the argument from the geodesic flow from section \ref{s:ProofOfEnforecabilityOfTheNLORG}, and the image property is ensured by $\varepsilon$ smallness of the perturbation of the initial data in the geodesic flow.

The second subtlety is that the domain $V$ depends on the norm of the geometric variables, but the $C^{\gaugeRegularityOut}(V)$ norm of the geometric variables depends on the choice of $V$. The regauged noncurvature quantities satisfy transport equations that are driven by both the regauged noncurvature quantities and the regauged $\tilde{\Psi}_i$; the regauged noncurvature quantities are determined by this evolution, while the $\tilde{\Psi}_i$ can be viewed as being calculated from the curvatures $\Riem[\metric]$ and $\Riem[\metricBackground]$ in the original diffeomorphism gauge and from the regauged foreground tetrad, which is determined by the regauged differential Lorentz transformation variables. Since the curvatures in the original gauge are already given on $U$, the regauged noncurvature quantities can be determined from the transport equations from their initial data and from the curvature in the original gauge. Since the regauged noncurvature quantities are $\varepsilon$ small on the initial hypersurface, $\heightInNullGaugeEnforceability(X)$, it is possible to pass to a subset $\heightInNullGaugeEnforceability(Y)$ so that both the image under the transport equations remains in $V$ and the regauged noncurvature quantities remain $\varepsilon$ small on $V$, provided that the regauged $\tilde{\Psi}_i$ remain $\varepsilon$ small.

The third subtlety is that, a priori, the $C^{\gaugeRegularityIn}(U)$ norm of $\Riem[\metric]-\Riem[\metricBackground]$ need not control the $C^{\gaugeRegularityIn}(U)$ of the $\tilde{\Psi}_i$ because there is not an a priori bound on the lengths of the foreground tetrad with respect to the reference Riemannian metric used to define the $C^{\gaugeRegularityIn}$ norms. This third subtlety is resolved by observing that as long as the Lorentz transformation variables remain bounded, the norms of the regauged $\tilde{\Psi}_i$ are controlled by the corresponding norms of $\Riem[\metric]-\Riem[\metricBackground]$ and the norms of the Lorentz transformation variables. Since the proof shows that the Lorentz transformation variables remain $\varepsilon$ small, we trivially recover the bootstrap assumption that they are bounded for the third subtlety, which then provides the necessary conditions for the second and first subtlety to be resolved. 

Within this proof, we shall define a Held spinor to be a Held spinor with respect to $\metricBackground$ and again use the notation $\Held{\alpha}$ to denote that $\alpha$ is a Held spinor. 

\step{Define $\vecPSW$.} 
Set $\Held{\vecPSW}_{\vecM}=0$ and $\Held{\vecPSW}_{\vecMb}=0$. Set $\Held{\vecPSW}_{\vecN}$ to satisfy the analogue of \eqref{eq:PSWHeldab:ForN} where, on the initial hypersurface $h(X)$, the quantities $g^{ab}\mathring{\nabla}_{(a}v_{b)}$ and its $\thop$ derivative have been replaced by $\frac{1}{2}\metricBackground^{ab}\metric_{ab}-2$ and its $\thop$ derivative respectively. In a similar way, set $\Held{\vecPSW}_{\vecL}$ to satisfy the analogue of \eqref{eq:PSWHeldab:ForL}. From this choice of $\Held{\vecPSW}$, set $\vecPSW$ to be the corresponding general solution of the \LnhCondition{} given in equation \eqref{eq:PSW23}. This has been chosen so that the trace of $\metric_{ab}+\Lie_\vecPSW\metricBackground_{ab}$ and the $\thop$ derivative of this trace both vanish on the initial hypersurface $h(X)$. Note that from the smallness of $\metric$, the components of $\vecPSW$ are $O(\varepsilon)$.

\step{Construct an initial gauge transformation from the flow along $\vecPSW$.}
Recall $\Flow{\vecPSW}{s}(p)$ denotes the flow along $\vecPSW$, and that this defines a local diffeomorphism. For simplicity, denote by $\Phi_1$ the diffeomorphism such that $\Phi_1(p)=\Flow{\vecPSW}{1}(p)$ for all $p$ for which this is defined. In particular, if $\varepsilon$ is sufficiently small on a scale dictated by $U$ and $V$, then $\Phi_1$ will define a bijection from $V$ to a subset of $U$. Since $\vecPSW$ and $\metric-\metricBackground$ are $O(\varepsilon)$, it follows that $\Phi_1^{*}\metric$ is also $O(\varepsilon)$.

Since $\vecPSW$ and $\metric-\metricBackground$ are both $O(\varepsilon)$, it follows that $\Lie_\vecPSW\metric_{ab}-\Lie_\vecPSW\metricBackground_{ab}$ is $O(\varepsilon^2)$. From the Price-Shankar-Whiting lemma on the linear theory, it follows that $\Lie_\vecPSW\metricBackground_{ab}$ satisfies the \LnhCondition, so $\vecN^a\Lie_\vecPSW\metric_{ab}$ is $O(\varepsilon^2)$. Similarly, on the initial hypersurface $h(X)$, the vector field $\vecPSW$ was chosen so that $\metricBackground^{ab}\Lie_{\vecPSW}(\metric-\metricBackground)_{ab}$ and its $\thop$ derivative vanish. Thus, on the image of $h(X)$, they are $O(\varepsilon^2)$. From the transport equations \eqref{eq:ThopInvGTrTilde} and \eqref{eq:transport:rhoTildePrime} satisfied by $\InvGTrTilde$ and $\tilde{\rho}'$, it follows that the perturbed trace $\InvGTrTilde$ satisfies a second-order ODE in which all the terms that appear are either linear in $\InvGTrTilde$ or of size $O(\varepsilon^2)$. Since the initial data is $O(\varepsilon^2)$ on the image of $h(X)$, this means that $\tilde{\slashed{G}}^{\#}$ remains $O(\varepsilon^2)$. Thus, $\metricBackground^{ab}\Phi_{1}^{*}\metric_{ab}-4$ is also $O(\varepsilon^2)$.

\step{Reimpose the radiation gauge.}
From the enforceability of the \NLnhCondition{} in theorem \ref{thm:NLnhEnforceability}, it follows that there is a local diffeomorphism $\Phi_2$ such that $\Phi_2^{*}(\Phi_{1}^{*}\metric)$ satisfies the \NLnhCondition. From the previous step, we know that $\Phi_{1}^{*}\metric$ is already very close to satisfying the \NLnhCondition. In particular, following the proof of the enforceability of the \NLnhCondition{} in section \ref{s:ProofOfEnforecabilityOfTheNLORG}, one observes that the size of $\Phi_2^{*}(\Phi_{1}^{*}\metric)-\Phi_{1}^{*}\metric$ depends not on the size of all components of $\Phi_{1}^{*}\metric-\metricBackground$, but only upon the size of the components of $\vecN^a(\Phi_{1}^{*}\metric)_{ab}$. From this, it follows that $\Phi_2^{*}(\Phi_{1}^{*}\metric) -\Phi_{1}^{*}\metric$ is $O(\varepsilon^2)$. In particular, $\metricBackground^{ab}(\Phi_2^{*}\Phi_{1}^{*}\metric)_{ab}-4$ is $O(\varepsilon^2)$. Defining $\Phi^{*}=\Phi_2^{*}\circ\Phi_{1}^{*}$, one obtains a $C^{\gaugeRegularityOut}$ diffeomorphism of $V$ to a subset of $U$.
This completes the proof. 
\end{steps}
\end{proof}

%%%%%%%%%%%%%%%%%%%%%%%%%%%%%%%%%%%%%%%%
\section{Linearization}
\label{s:linearization}

In this section, we begin by linearizing the results in this paper in subsection \ref{sec:linearizeThisPaper}, 
then compare with our previous results in \cite{Andersson:2019dwi} in subsection \ref{sec:comparisonOfLinearization}, 
and conclude with some further remarks on how the initial data for the frame gauge can be used to set $\tilde{\beta}=0$ to linear order in subsection \ref{sec:SmallnessBeta}.

%%%%%%%%%%%%%%%%%%%%%%%%%%%%%%%%%%%%%%%%%%%%%%%%%%
\subsection{Linearization of results in this paper} 
\label{sec:linearizeThisPaper}
The linearization of the systems considered in section \ref{s:FOSH} can now be computed. Dropping the nonlinear terms, one obtains the following result. 

\begin{theorem}[Linearization of the equations of section \ref{s:FOSH}]
\label{thm:linearizeSectionFOSH}
Assume the \backgroundHypotheses. 

The linearization of the relations between the different versions of the metric components yield
\begin{subequations}
\label{eq:TildeGToInvTildeG:linearized}
\begin{align}
\dot{\tilde{G}}^{\#}_2{}={}&\dot{G}_2{}
=- \dot{\tilde{G}}_2{}
=- \dot{G}^{\#}_2{},\\
\dot{\tilde{G}}^{\#}_1{}={}&\dot{G}_1{}
=- \dot{\tilde{G}}_1{}
=- \dot{G}^{\#}_1{},\\
\dot{\tilde{G}}^{\#}_0{}={}&\dot{G}_0{}
=- \dot{\tilde{G}}_0{}
=- \dot{G}^{\#}_0{},\\
\dot{\tilde{\slashed{G}}}{}^{\#}{}={}&\dot{\slashed{G}}{}
=- \dot{\tilde{\slashed{G}}}{}
=- \dot{\slashed{G}}{}^{\#}{}.
\end{align}
\end{subequations}
Furthermore, the linearization of the system \eqref{eq:nuetaevoleq}, \eqref{eq:thopmetric}, \eqref{eq:thopRicci}, and \eqref{eq:Bianchi} consists of 
\begin{subequations}
\label{eq:ThopFGDiffSpinLin}
\begin{align}
\thopBG \dot{\nu}={}&\tfrac{1}{4} i \LinGTr (\mathring{\rho}' -  \bar{\mathring{\rho}}') \label{eq:ThopFGDiffSpinLinEq1},\\
\thopBG \dot{\eta}={}&\dot{\tilde{\beta}}
 -  \overline{\dot{\tilde{\beta}}'}
 + \dot{\eta} \mathring{\rho}'
 + i \dot{\nu} \mathring{\tau}
 -  \tfrac{1}{4} \LinGTr \mathring{\tau}
 + \tfrac{1}{2} \LinGTwoDg \bar{\mathring{\tau}} \label{eq:ThopFGDiffLLinEq1},
\end{align}
\end{subequations}
\begin{subequations}
\label{eq:ThopLinG}
\begin{align}
\thopBG \LinGTwo={}&- \LinGTwo \mathring{\rho}'
 + \LinGTwo \bar{\mathring{\rho}}'
 + 2 \dot{\tilde{\sigma}}' \label{eq:ThopFGInvGTilde2LinEq1},\\
\thopBG \LinGTr={}&-2 (\dot{\tilde{\rho}}' + \overline{\dot{\tilde{\rho}}'}) \label{eq:ThopFGInvGTrTildeLinEq1},\\
\thopBG \LinGOne={}&-2 \overline{\dot{\eta}} \mathring{\rho}'
 - 2 \LinGOne \mathring{\rho}'
 + \LinGOne \bar{\mathring{\rho}}'
 -  \LinGTwo \mathring{\tau}
 + \tfrac{1}{2} \LinGTr \bar{\mathring{\tau}}
 + 2i \dot{\nu} \mathring{\tau}'
 + 2 \dot{\tilde{\tau}}' \label{eq:ThopFGInvGTilde1LinEq1},\\
\thopBG \LinGZero={}&-2 \dot{\tilde{\epsilon}}
 - 2 \overline{\dot{\tilde{\epsilon}}}
 - 2 \LinGOne \mathring{\tau}
 - 2 \LinGOneDg \bar{\mathring{\tau}}
 - 2 \dot{\eta} \mathring{\tau}'
 - 2 \overline{\dot{\eta}} \bar{\mathring{\tau}}' \label{eq:ThopFGInvGTilde0LinEq1},
\end{align}
\end{subequations}
\begin{subequations}
\label{eq:ThopLinConnection}
\begin{align}
\thopBG \dot{\tilde{\sigma}}'={}&\dot{\Psi}_{4}{}
 + \mathring{\rho}' \dot{\tilde{\sigma}}'
 + \bar{\mathring{\rho}}' \dot{\tilde{\sigma}}' \label{eq:ThopFGSigmapLinEq1},\\
\thopBG \dot{\tilde{\rho}}'={}&2 \mathring{\rho}' \dot{\tilde{\rho}}' \label{eq:ThopFGRhopLinEq1},\\
\thopBG \dot{\tilde{\tau}}'={}&\dot{\Psi}_{3}{}
 -  \dot{\tilde{\sigma}}' \mathring{\tau}
 -  \dot{\tilde{\rho}}' \bar{\mathring{\tau}}
 + \dot{\tilde{\rho}}' \mathring{\tau}'
 + \mathring{\rho}' \dot{\tilde{\tau}}'
 + \dot{\tilde{\sigma}}' \bar{\mathring{\tau}}' \label{eq:ThopFGTaupLinEq1},\\
\thopBG \dot{\tilde{\beta}}={}&\dot{\tilde{\beta}} \mathring{\rho}'
 - i \dot{\nu} \mathring{\rho}' \mathring{\tau}
 + \tfrac{1}{4} \LinGTr \mathring{\rho}' \mathring{\tau}
 + \dot{\tilde{\rho}}' \mathring{\tau} \label{eq:ThopFGBetaLinEq1},\\
\thopBG \dot{\tilde{\beta}}'={}&\dot{\Psi}_{3}{}
 + \dot{\tilde{\beta}}' \bar{\mathring{\rho}}'
 + \tfrac{1}{2} \LinGTwo \mathring{\rho}' \mathring{\tau}
 -  \dot{\tilde{\sigma}}' \mathring{\tau} \label{eq:ThopFGBetapLinEq1},\\
\thopBG \dot{\tilde{\epsilon}}={}&- \dot{\Psi}_{2}{}
 -  \overline{\dot{\eta}} \mathring{\rho}' \mathring{\tau}
 -  \LinGOne \mathring{\rho}' \mathring{\tau}
 + \dot{\tilde{\beta}} (- \bar{\mathring{\tau}} + \mathring{\tau}')
 + \mathring{\tau} \dot{\tilde{\tau}}'
 + \dot{\tilde{\beta}}' (\mathring{\tau} -  \bar{\mathring{\tau}}') \label{eq:ThopFGEpsilonLinEq1},\\
\thopBG \dot{\tilde{\rho}}={}&- \dot{\Psi}_{2}{}
 + \mathring{\rho} \overline{\dot{\tilde{\rho}}'}
 + \dot{\tilde{\rho}} \bar{\mathring{\rho}}'
 + \overline{\dot{\tilde{\beta}}} \mathring{\tau}
 + \dot{\tilde{\beta}}' \mathring{\tau}
 + 2 \overline{\dot{\eta}} \mathring{\rho}' \mathring{\tau}
 + \tfrac{1}{2} \LinGTwo \mathring{\tau}^2\nonumber\\
& -  \frac{i}{8 \kappa_{1}{}} (4 \dot{\nu} - i \LinGTr) \bigl(\mathring{\Psi}_{2}{} \kappa_{1}{} -  \bar{\mathring{\Psi}}_{2}{} \bar{\kappa}_{1'}{} + 2 \kappa_{1}{} (\mathring{\rho} \mathring{\rho}' -  \mathring{\rho} \bar{\mathring{\rho}}' + \mathring{\tau} \mathring{\tau}')\bigr) \label{eq:ThopFGRhoLinEq1},\\
\thopBG \dot{\tilde{\sigma}}={}&\mathring{\rho}' \dot{\tilde{\sigma}}
 + \mathring{\rho} \overline{\dot{\tilde{\sigma}}'}
 -  \dot{\tilde{\beta}} \mathring{\tau}
 -  \overline{\dot{\tilde{\beta}}'} \mathring{\tau}
 + 2 \dot{\eta} \mathring{\rho}' \mathring{\tau}
 + i \dot{\nu} \mathring{\tau}^2
 -  \tfrac{1}{4} \LinGTr \mathring{\tau}^2\nonumber\\
& + \frac{\LinGTwoDg}{4 \kappa_{1}{}} \bigl(\mathring{\Psi}_{2}{} \kappa_{1}{} -  \bar{\mathring{\Psi}}_{2}{} \bar{\kappa}_{1'}{} + 2 \kappa_{1}{} (\mathring{\rho} \mathring{\rho}' -  \mathring{\rho} \bar{\mathring{\rho}}' + \mathring{\tau} \mathring{\tau}')\bigr) \label{eq:ThopFGSigmaLinEq1},\\
\thopBG \dot{\tilde{\kappa}}={}&- \dot{\Psi}_{1}{}
 -  \dot{\tilde{\epsilon}} \mathring{\tau}
 + \overline{\dot{\tilde{\epsilon}}} \mathring{\tau}
 -  \dot{\tilde{\rho}} \mathring{\tau}
 -  \LinGZero \mathring{\rho}' \mathring{\tau}
 + \overline{\dot{\eta}} \mathring{\tau}^2
 + \LinGOne \mathring{\tau}^2
 + \dot{\tilde{\sigma}} (- \bar{\mathring{\tau}} + \mathring{\tau}')\nonumber\\
& + \frac{1}{2 \kappa_{1}{}} (\dot{\eta} + \LinGOneDg) \bigl(\mathring{\Psi}_{2}{} \kappa_{1}{} -  \bar{\mathring{\Psi}}_{2}{} \bar{\kappa}_{1'}{} + 2 \kappa_{1}{} (\mathring{\rho} \mathring{\rho}' -  \mathring{\rho} \bar{\mathring{\rho}}' + \mathring{\tau} \mathring{\tau}')\bigr)
 + \mathring{\rho} \overline{\dot{\tilde{\tau}}'}
 + \dot{\tilde{\rho}} \bar{\mathring{\tau}}' \label{eq:ThopFGKappaLinEq1},
\end{align}
\end{subequations}
\begin{subequations}
\label{eq:LinBianchi}
\begin{align}
\thoBG \dot{\Psi}_{1}{} -  \edtpBG \dot{\Psi}_{0}{}={}&-3 \mathring{\Psi}_{2}{} \dot{\tilde{\kappa}}
 + 4 \dot{\Psi}_{1}{} \mathring{\rho}
 -  \dot{\Psi}_{0}{} \mathring{\tau}',\\
\thoBG \dot{\Psi}_{2}{} -  \edtpBG \dot{\Psi}_{1}{}={}&3 \dot{\Psi}_{2}{} \mathring{\rho}
 + 3 \mathring{\Psi}_{2}{} \dot{\tilde{\rho}}
 + \tfrac{3}{2} \LinGZero \mathring{\Psi}_{2}{} \mathring{\rho}'
 - 3 \overline{\dot{\eta}} \mathring{\Psi}_{2}{} \mathring{\tau}
 - 3 \LinGOne \mathring{\Psi}_{2}{} \mathring{\tau}
 - 3 \dot{\eta} \mathring{\Psi}_{2}{} \mathring{\tau}'
 - 3 \LinGOneDg \mathring{\Psi}_{2}{} \mathring{\tau}'
 - 2 \dot{\Psi}_{1}{} \mathring{\tau}',\\
\thoBG \dot{\Psi}_{3}{} -  \edtpBG \dot{\Psi}_{2}{}={}&2 \dot{\Psi}_{3}{} \mathring{\rho}
 + 3 \overline{\dot{\eta}} \mathring{\Psi}_{2}{} \mathring{\rho}'
 + \tfrac{3}{2} \LinGTwo \mathring{\Psi}_{2}{} \mathring{\tau}
 - 3i \dot{\nu} \mathring{\Psi}_{2}{} \mathring{\tau}'
 -  \tfrac{3}{4} \LinGTr \mathring{\Psi}_{2}{} \mathring{\tau}'
 - 3 \dot{\Psi}_{2}{} \mathring{\tau}'
 - 3 \mathring{\Psi}_{2}{} \dot{\tilde{\tau}}',\\
\thoBG \dot{\Psi}_{4}{} -  \edtpBG \dot{\Psi}_{3}{}={}&\dot{\Psi}_{4}{} \mathring{\rho}
 + 3 \mathring{\Psi}_{2}{} \dot{\tilde{\sigma}}'
 - 4 \dot{\Psi}_{3}{} \mathring{\tau}',\\
\thopBG \dot{\Psi}_{0}{} -  \edtBG \dot{\Psi}_{1}{}={}&\dot{\Psi}_{0}{} \mathring{\rho}'
 + 3 \mathring{\Psi}_{2}{} \dot{\tilde{\sigma}}
 - 4 \dot{\Psi}_{1}{} \mathring{\tau},\\
\thopBG \dot{\Psi}_{1}{} -  \edtBG \dot{\Psi}_{2}{}={}&3 \dot{\eta} \mathring{\Psi}_{2}{} \mathring{\rho}'
 + 2 \dot{\Psi}_{1}{} \mathring{\rho}'
 + 3i \dot{\nu} \mathring{\Psi}_{2}{} \mathring{\tau}
 -  \tfrac{3}{4} \LinGTr \mathring{\Psi}_{2}{} \mathring{\tau}
 - 3 \dot{\Psi}_{2}{} \mathring{\tau}
 + \tfrac{3}{2} \LinGTwoDg \mathring{\Psi}_{2}{} \mathring{\tau}',\\
\thopBG \dot{\Psi}_{2}{} -  \edtBG \dot{\Psi}_{3}{}={}&3 \dot{\Psi}_{2}{} \mathring{\rho}'
 + 3 \mathring{\Psi}_{2}{} \dot{\tilde{\rho}}'
 - 2 \dot{\Psi}_{3}{} \mathring{\tau},\\
\thopBG \dot{\Psi}_{3}{} -  \edtBG \dot{\Psi}_{4}{}={}&4 \dot{\Psi}_{3}{} \mathring{\rho}'
 -  \dot{\Psi}_{4}{} \mathring{\tau}.
\end{align}
\end{subequations}

Furthermore, the linearization of the constraint equations \eqref{eq:AlgebraicSpincoeff1}, \eqref{eq:StructureSpincoeff1}, and \eqref{eq:ExtraRicci} are the linearized constraint equations \eqref{eq:extraLinearizedStructure} and \eqref{eq:extraLinearizedRicci}. 
\end{theorem}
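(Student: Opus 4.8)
The plan is to obtain every equation in the theorem by straightforward linearization: evaluate each nonlinear relation derived in section \ref{s:FOSH} at the Kerr background $\metric = \metricBackground$, differentiate once in the perturbation parameter, and discard all terms quadratic in the differential variables. The structural fact that makes this work is that at the background every differential quantity vanishes --- $\eta = \nu = 0$, all of $G_i, G^{\#}_i, \tilde{G}_i, \tilde{G}^{\#}_i$ and the four trace variables vanish, $\varsigma = \varsigma^{\#} = 1$, all differential spin coefficients $\tilde{\kappa}, \ldots, \tilde{\tau}'$ vanish, and all differential curvature components $\tilde{\Psi}_0, \ldots, \tilde{\Psi}_4$ vanish (in particular $\tilde{\Psi}_2 = \Psi_2 - \mathring{\Psi}_2$ with $\dot{\mathring{\Psi}}_2 = 0$, so $\dot{\tilde{\Psi}}_i = \dot{\Psi}_i$). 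Hence the foreground frame and the foreground GHP operators coincide with their background counterparts at leading order, so $\tho\varphi = \thoBG\dot{\varphi} + O(\eps^2)$ for $\varphi = O(\eps)$, and likewise for $\thop, \edt, \edtp$. Since Kerr is of Petrov type D, $\mathring{\kappa} = \mathring{\sigma} = \mathring{\kappa}' = \mathring{\sigma}' = 0$ and $\mathring{\Psi}_0 = \mathring{\Psi}_1 = \mathring{\Psi}_3 = \mathring{\Psi}_4 = 0$, which fixes exactly which background factors can survive multiplication against a linearized variable.

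First I would linearize the algebraic relations \eqref{eq:GToInvG}, \eqref{eq:InvGToG}, \eqref{eq:InvGTildeInvGRelations}, \eqref{eq:GslashRelations} at zero to get \eqref{eq:TildeGToInvTildeG:linearized}: the denominators there are all $1 + O(\eps)$, so at first order each $G^{\#}_i$ and $\tilde{G}_i$ reduces to $-G_i$ and each $\tilde{G}^{\#}_i$ to $+G_i$, while every dependence on $\nu, \eta$ drops since it enters multiplied by another $O(\eps)$ factor. At the same time I would record the first variations of the auxiliary quantities in \eqref{eq:defVarsigma}: because the $|G_2|$, $|G^{\#}_2|$ contributions cancel between the two square roots at first order, $\varsigma = 1 + \tfrac14\dot{\slashed{G}} + O(\eps^2)$ and $\varsigma^{\#} = 1 + \tfrac14\dot{\slashed{G}}^{\#} + O(\eps^2)$, whence $\varsigma^{\#} - \varsigma = -\tfrac12\dot{\slashed{G}} + O(\eps^2)$ and $1 - e^{i\nu}\varsigma^{\#} = -i\dot{\nu} + \tfrac14\dot{\slashed{G}} + O(\eps^2)$; these two expansions are the only places where a combination that looks quadratic actually contributes at first order.

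Next I would linearize the frame-gauge evolution equations \eqref{eq:nuetaevoleq}, the transport system \eqref{eq:thopmetric}, the Ricci relations \eqref{eq:thopRicci}, and the Bianchi identities \eqref{eq:Bianchi}, one equation at a time. On the left-hand sides each foreground operator becomes its background version applied to the linearized variable; on the right-hand sides every product of two differential variables is dropped, every occurrence of $\varsigma, \varsigma^{\#}, e^{i\nu}$ is set to $1$ except inside the combinations found above, $\tilde{\Psi}_2$ is replaced by $\dot{\Psi}_2$ and $\mathring{\Psi}_2 + \tilde{\Psi}_2$ by $\mathring{\Psi}_2$, and the background identity \eqref{eq:edtpBGtauBG} is used to rewrite a few terms. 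Collecting terms yields \eqref{eq:ThopFGDiffSpinLin}, \eqref{eq:ThopLinG}, \eqref{eq:ThopLinConnection} and \eqref{eq:LinBianchi}; linearizing \eqref{eq:AlgebraicSpincoeff1}, \eqref{eq:StructureSpincoeff1}, \eqref{eq:ExtraRicci} in the same way produces the linearized constraints \eqref{eq:extraLinearizedStructure} and \eqref{eq:extraLinearizedRicci}.

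The hard part is not conceptual but organizational: there are several dozen equations with many terms each, and the safest route is to push the linearization through the \emph{xAct}/\emph{SpinFrames} setup already used for the nonlinear derivations. The two points that require genuine care, both noted above, are computing the first variations of $\varsigma$ and $\varsigma^{\#}$ correctly --- in particular observing that the $|G_2|$-type terms drop at linear order, so these quantities really are differentiable at the background --- and tracking the sign conventions relating the four sets of metric components, since those are precisely what make the chains of equalities in \eqref{eq:TildeGToInvTildeG:linearized} hold.
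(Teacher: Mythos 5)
Your proposal is correct and is essentially the paper's own argument: the paper gives no proof beyond the remark that one "drops the nonlinear terms," i.e.\ a direct linearization about the Kerr background of the equations of section \ref{s:FOSH} (carried out with the same \emph{xAct}/\emph{SpinFrames} machinery you propose). Your two flagged subtleties — that the $|G_2|$ contributions to $\varsigma,\varsigma^{\#}$ cancel at first order so that $\dot\varsigma=\tfrac14\dot{\slashed{G}}$, $\dot\varsigma^{\#}=-\tfrac14\dot{\slashed{G}}$, and hence $\varsigma^{\#}-\varsigma=-\tfrac12\dot{\slashed{G}}+O(\eps^2)$ and $1-e^{i\nu}\varsigma^{\#}=-i\dot\nu+\tfrac14\dot{\slashed{G}}+O(\eps^2)$ — are exactly the nontrivial expansions needed, and they check out against, e.g., \eqref{eq:ThopFGDiffSpinLinEq1}.
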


One of the central goals for this paper is to construct a gauge condition for the Einstein equation with the property that its linearization is the \LnhCondition{} and, furthermore, to make a more restrictive gauge choice with the property that its linearization is the \ClassicalORG, which has long been studied by, for example, \cite{Chrzanowski, Price:2006ke}. Linearizing the gauge conditions constructed in this paper, one finds they have these desired properties, as stated in the following theorem. 

\begin{theorem}[Linearization of the radiation gauge and the trace condition]
\label{thm:linearizationOfRadiationGauge}
Assume the \backgroundHypotheses. 

\begin{enumerate}
\item The linearization of the \NLnhCondition{} \eqref{eq:NLnh} for a metric $\metric$ is the \LnhCondition{} \eqref{eq:linearnh}. 
\item The linearization of the gauge transformation in theorem \ref{thm:NLORGEnforceability} is the condition $\LinGTr=0$. 
\item The linearization of the combination of the \NLnhCondition{} \eqref{eq:NLnh} and the residual gauge transformation in theorem \ref{thm:NLORGEnforceability} is the \ClassicalORG{} of definition \ref{def:linearORG}. 
\end{enumerate}
\end{theorem}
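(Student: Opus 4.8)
The plan is to derive each of the three assertions by differentiating, at $s=0$, the corresponding nonlinear statement, using Theorems~\ref{thm:NLnhEnforceability} and~\ref{thm:NLORGEnforceability}. Fix a smooth one-parameter family $\metric_s$ with $\metric_0=\metricBackground$, write $\metricPerturbation=\tfrac{\di}{\di s}\metric_s\big|_{s=0}$ for the linearized perturbation, and observe that for $s$ small $|\metric_s-\metricBackground|_{C^{\gaugeRegularityIn}(U)}=O(s)<\varepsilon_0$, so the enforceability theorems apply (for parts~(ii) and~(iii) we take $\metric_s$ also to solve the vacuum Einstein equation, as required by Theorem~\ref{thm:NLORGEnforceability}).

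For part~(i), the essential observation is that in Definition~\ref{def:NLORG} the vector field $\vecN$ is a fixed ingoing principal null vector of the \emph{background} metric $\metricBackground$ and hence is independent of $\metric_s$. Therefore, if $\metric_s$ satisfies the \NLnhCondition{} for all $s$, differentiating \eqref{eq:NLnh}, $\vecN^a(\metric_s)_{ab}=\vecN^a\metricBackground_{ab}$, at $s=0$ yields $\vecN^a\metricPerturbation_{ab}=0$, which is precisely \eqref{eq:linearnh}. For a general family $\metric_s$ one first composes with the diffeomorphisms of Theorem~\ref{thm:NLnhEnforceability}, which are smooth in $s$ and equal the identity at $s=0$ (since $\metricBackground$ itself satisfies the \NLnhCondition{}), and applies the preceding sentence to the regauged family.

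For part~(ii), I would apply to $\metric_s$ the diffeomorphisms $\Phi_s$ provided by Theorem~\ref{thm:NLORGEnforceability} and set $\metric'_s=(\Phi_s^{-1})_*\metric_s$. Inspecting the construction in the proof of that theorem --- where $\Phi_s$ is built by solving transport equations and a geodesic flow whose data depend smoothly on $\metric_s$, and everything is trivial at $s=0$ --- shows $\Phi_s$, and hence $\metric'_s$, may be chosen smooth in $s$ with $\metric'_0=\metricBackground$. By Theorem~\ref{thm:NLORGEnforceability} each $\metric'_s$ satisfies the \NLnhCondition{}, and, by \eqref{eq:NLTraceConditionToQuadraticOrder} with the constant $K$ uniform in $s$, $\bigl|\metricBackground^{ab}(\metric'_s)_{ab}-\metricBackground^{ab}\metricBackground_{ab}\bigr|_{C^{\gaugeRegularityOut}(V)}\leq K\,|\metric_s-\metricBackground|^2_{C^{\gaugeRegularityIn}(U)}=O(s^2)$. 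Since $\LinGTr$ is the $s$-derivative at $s=0$ of $\slashed{G}(\metric'_s)=\metricBackground^{ab}(\metric'_s)_{ab}-4=\metricBackground^{ab}(\metric'_s)_{ab}-\metricBackground^{ab}\metricBackground_{ab}$, and this is $O(s^2)$ on $V$, it follows that $\LinGTr\equiv0$ on $V$. Combined with part~(i) applied to $\metric'_s$, this proves~(ii).

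For part~(iii), Definition~\ref{def:linearORG} defines the \ClassicalORG{} as the conjunction of the \LnhCondition{} \eqref{eq:linearnh} and the \linearTraceCondition{} \eqref{eq:linearTraceCondition}, i.e.\ $\LinGTr=0$; so combining parts~(i) and~(ii) yields~(iii). The step needing the most care --- and the principal obstacle --- is~(ii): one must check that $\Phi_s$ can be taken to depend $C^1$-smoothly on $s$ so that the linearization is well defined, and, crucially, that it is the genuinely \emph{quadratic} bound \eqref{eq:NLTraceConditionToQuadraticOrder}, with its $s$-uniform constant, that is invoked --- mere $C^0$ smallness of $\metricBackground^{ab}(\metric'_s)_{ab}-4$ would not force its $s$-derivative to vanish. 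As an independent check one may instead linearize the construction in the proof of Theorem~\ref{thm:NLORGEnforceability} directly: the flow $\Phi_1$ along $\vecPSW$ linearizes to $\metricPerturbation\mapsto\metricPerturbation+\Lie_{\dot\vecPSW}\metricBackground$ with $\dot\vecPSW$ the Price-Shankar-Whiting vector field of the linear theory recalled in Section~\ref{s:traceCondition}, the re-imposition $\Phi_2$ drops out at first order, and the linear trace ODE then gives $\LinGTr=0$; but the argument via \eqref{eq:NLTraceConditionToQuadraticOrder} is shorter.
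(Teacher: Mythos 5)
Your argument is correct and is essentially the one the paper intends: the paper states this theorem without an explicit proof, treating it as immediate from differentiating the gauge conditions in $s$, which is exactly what you do --- part (i) from the fact that $\vecN$ is a fixed background vector, part (ii) from the quadratic bound \eqref{eq:NLTraceConditionToQuadraticOrder} forcing the first variation of the trace to vanish, and part (iii) by combination. Your added care about the smooth $s$-dependence of $\Phi_s$ and the observation that genuinely \emph{quadratic} (not merely small) control is what kills the linearization are correct and fill in details the paper leaves implicit.
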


From the evolution equations \eqref{eq:ThopFGRhopLinEq1} and \eqref{eq:ThopFGInvGTrTildeLinEq1} for $\dot{\tilde{\rho}}'$ and $\LinGTr$ and also from the evolution equation \eqref{eq:ThopFGDiffSpinLinEq1} for $\dot{\nu}$, one obtains the following result.
\begin{theorem}[Invariant subspaces]
Consider the linear system given in theorem \ref{thm:linearizeSectionFOSH}. 
\begin{enumerate}
\item The set $\{(\LinGTr ,\dot{\tilde{\rho}}')=(0,0)\}$ is an invariant subspace. 
\item The set $\{(\LinGTr ,\dot{\tilde{\rho}}',\dot{\nu})=(0,0,0)\}$ is an invariant subspace. 
\end{enumerate} 
\end{theorem}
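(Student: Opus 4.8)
The plan is to exploit the block-triangular structure of the linearized system in Theorem \ref{thm:linearizeSectionFOSH}: the transport equations governing $\LinGTr$, $\dot{\tilde{\rho}}'$ and $\dot{\nu}$ close up among themselves, so that the vanishing of these quantities is propagated by the evolution. Throughout I work in a coordinate system $(t,x,y,z)$ as in Corollary \ref{cor:evolsys}, so that each $\thopBG$-transport equation becomes a genuine evolution equation $\mathring{n}^{t}\partial_{t}\varphi = -\sum_{i\in\{x,y,z\}}\mathring{n}^{i}\partial_{i}\varphi + (\text{terms linear in }\varphi\text{ and the sources})$ with $\mathring{n}^{t}>0$, so that uniqueness for first-order symmetric-hyperbolic systems applies.

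For the first assertion, I would observe that \eqref{eq:ThopFGRhopLinEq1}, namely $\thopBG\dot{\tilde{\rho}}' = 2\mathring{\rho}'\dot{\tilde{\rho}}'$, is homogeneous in $\dot{\tilde{\rho}}'$ alone, while the right-hand side of \eqref{eq:ThopFGInvGTrTildeLinEq1}, $\thopBG\LinGTr = -2(\dot{\tilde{\rho}}' + \overline{\dot{\tilde{\rho}}'})$, depends only on $\dot{\tilde{\rho}}'$ and its conjugate. Since $\LinGTr$ is real (it is the linearization of the real quantity $\InvGTrTilde$) and $\dot{\tilde{\rho}}'$ enters \eqref{eq:ThopFGInvGTrTildeLinEq1} together with $\overline{\dot{\tilde{\rho}}'}$, the pair $(\LinGTr,\dot{\tilde{\rho}}')$ --- with complex conjugates adjoined --- satisfies a closed sub-system of transport equations. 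Consequently, on the locus $\{\LinGTr = \dot{\tilde{\rho}}' = 0\}$ the right-hand sides of \eqref{eq:ThopFGRhopLinEq1} and \eqref{eq:ThopFGInvGTrTildeLinEq1} vanish, so the time derivatives of $\LinGTr$ and $\dot{\tilde{\rho}}'$ vanish there; equivalently, data in this subspace launches solutions for which $\LinGTr$ and $\dot{\tilde{\rho}}'$ vanish identically, by uniqueness for the diagonal transport sub-system they satisfy. This is exactly the invariance of the subspace.

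For the second assertion, I would adjoin \eqref{eq:ThopFGDiffSpinLinEq1}, $\thopBG\dot{\nu} = \tfrac{1}{4}i\LinGTr(\mathring{\rho}' - \bar{\mathring{\rho}}')$, whose right-hand side involves only $\LinGTr$ (and is real, as it must be, since $\mathring{\rho}' - \bar{\mathring{\rho}}'$ is imaginary). Hence the triple $(\LinGTr,\dot{\tilde{\rho}}',\dot{\nu})$ again satisfies a closed sub-system of transport equations, and the same argument shows that $\{(\LinGTr,\dot{\tilde{\rho}}',\dot{\nu}) = (0,0,0)\}$ is an invariant subspace. The only point needing care --- and the main, if modest, obstacle --- is the bookkeeping verification that no further geometric variable appears on the right-hand sides of \eqref{eq:ThopFGInvGTrTildeLinEq1}, \eqref{eq:ThopFGRhopLinEq1}, \eqref{eq:ThopFGDiffSpinLinEq1}, and that closure is not destroyed by complex conjugation; both are settled by the reality of $\LinGTr$ and by the fact that $\dot{\tilde{\rho}}'$ occurs paired with $\overline{\dot{\tilde{\rho}}'}$. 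Everything else reduces to uniqueness for first-order symmetric-hyperbolic systems, already available from Theorem \ref{thm:FOSH}.
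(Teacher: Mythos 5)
Your argument is correct and is essentially the paper's own: the theorem is stated as following directly from the transport equations \eqref{eq:ThopFGRhopLinEq1}, \eqref{eq:ThopFGInvGTrTildeLinEq1} and \eqref{eq:ThopFGDiffSpinLinEq1}, which form a closed (triangular) sub-system so that vanishing data propagates. Your additional remarks on reality and conjugate closure are consistent with, and merely make explicit, what the paper leaves implicit.
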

\begin{remark}
There is an alternative perspective on the first part of the previous theorem based on linearizing the equations in section \ref{s:FOSH} subject to the gauge transformation \ref{thm:NLORGEnforceability}. The previous theorem states that the set $\{(\LinGTr ,\dot{\tilde{\rho}}')=(0,0)\}$ is an invariant subspace for the linearization of the evolution equations in section \ref{s:FOSH}. Alternatively, instead of first linearizing and then restricting to a subspace, one can first restrict the nonlinear system in section \ref{s:FOSH} via the gauge transformation in theorem \ref{thm:NLORGEnforceability} and then linearize. In the latter case, one obtains from the linearization of the gauge condition in theorem \ref{thm:NLORGEnforceability} that $\LinGTr=0$ and hence, via equations \eqref{eq:ThopFGRhopLinEq1} and \eqref{eq:algebraicrho1}, also $\dot{\tilde{\rho}}'=0$. 
\end{remark}

The invariant subspaces $\{(\LinGTr ,\dot{\tilde{\rho}}')=(0,0)\}$ and $\{(\LinGTr ,\dot{\tilde{\rho}}',\dot{\nu})=(0,0,0)\}$ are also stable, which can be seen by the following argument. 
The transport equations governing $\LinGTr$ and $\dot{\tilde{\rho}}'$ are
\begin{subequations} \label{eq:6.2}
\begin{align}
\thopBG \LinGTr={}&-2 (\dot{\tilde{\rho}}' + \overline{\dot{\tilde{\rho}}'}), \label{eq:5.6a}\\
\thopBG \dot{\tilde{\rho}}'={}&2 \mathring{\rho}' \dot{\tilde{\rho}}' . 
\end{align}
\end{subequations}
One can treat these ODEs using the method for proving decay of solutions of transport equations that was introduced in \cite{Andersson:2019dwi}. Qualitatively, the argument proceeds as follows. One works in Boyer-Lindquist coordinates and wishes to prove decay of a variable in $t$ for fixed $r$, assuming that the solution decays rapidly on the initial surface $\{t=0\}$ as $r\rightarrow\infty$. For a transport equation of the form $\thop\varphi=0$, one has that the value of $\varphi$ at $(t_1,r_1,\omega_1)$ is equal to the value of $\varphi$ at the intersection of the initial hypersurface $\{t=0\}$ with null geodesic tangent to $\vecN$ going through $(t_1,r_1,\omega_1)$, which occurs at $t_0=0$ and $r_0-r_1$ is bounded above and below by positive multiples of $t_1$ for $t_1>1$. Thus, $|\varphi|$ decays in $t$ because the initial data decays in $r$. Similarly, for an equation of the form $\thop\varphi=c_1\mathring{\rho}'\varphi$, one can introduce an integrating factor $\mathring{\rho}'{}^{c_2}$, and the growth or decay arising from the change in value of this integrating factor can be more than compensated for if the decay of the initial data is sufficiently fast. Furthermore, for an inhomogeneous equation of the form $\thop\varphi=c_1\mathring{\rho}'\varphi+\vartheta$, applying the integrating factor and integrating, the contribution from integrating $\vartheta$ is like $t$ (the length of the integration along the geodesic) times the maximum of $|\vartheta|$ (the maximum on the geodesic), but if this also decays in $t+r$, then the additional factor of $t$ from the integration can be dominated by the decay in $t+r$, although the decay of $\varphi$ will be one power worse than that of $\vartheta$. Applying this method schematically, one sees that if $\dot{\tilde{\rho}}'$ decays rapidly as $r\rightarrow\infty$ on the initial hypersurface $\{t=0\}$, then it will also decay rapidly as $t\rightarrow\infty$ at fixed $r$. Integrating the transport equation for $\LinGTr$, one obtains that $\LinGTr$ also decays rapidly (although not quite as rapidly). In future work, we will investigate the quantitative behaviour. In doing so, we note that we will have at our disposal the diffeomorphism that allows us to set $\InvGTrTilde$ to vanish quadratically, which one might plausibly expect to allow one to show that such a diffeomorphism could be chosen so that $\InvGTrTilde$ and $\tilde{\rho}'$ vanish much more rapidly than the other metric and connection coefficients. As shown in \cite{Andersson:2019dwi}, these methods for proving decay apply not only in Boyer-Lindquist coordinates at fixed $r$ as $t\rightarrow\infty$, but also in hyperboloidal coordinates that allow for precise estimates near null infinity; such estimates are likely to be crucial for controlling nonlinear terms in the Einstein equation. The stability of the invariant subspace $\{(\LinGTr ,\dot{\tilde{\rho}}',\dot{\nu})=(0,0,0)\}$ follows from a similar analysis of equation \eqref{eq:ThopFGDiffSpinLinEq1}.

%%%%%%%%%%%%%%%%%%%%%%%%%%%%%%%%%%%%%%%%%%%%%%%%%%%%%%%%%%%%%%%%%%%%%%%%%%%
\subsection{Comparison with previous work}
\label{sec:comparisonOfLinearization}
In this subsection, we compare the linearization of the results in this paper to the results in \cite{Andersson:2019dwi}. 

This comparison is somewhat complicated by the different order in which certain operations are performed. The Einstein equation is inherently a tensorial equation, whereas the techniques we use in this paper to prove symmetric hyperbolicity and the techniques used in \cite{Andersson:2019dwi} to prove decay are both for systems of scalars. Thus, by projecting on tetrads, we convert from tensorial equations to systems of spin-weighted scalar equations, a process which we refer to as scalarization. In this section, we are interested in linearization. We are also interested in imposing either the nonlinear radiation gauge or its linearization, and further imposing the residual gauge of theorem \ref{thm:NLORGEnforceability} or its linearization $\LinGTr=0$. Therefore, we are applying scalarization, linearization, the (nonlinear or linear) radiation gauge, and the (nonlinear or linear) trace condition. So far, in this paper, we have applied the radiation gauge first, scalarized second, and then linearized. (The comments at the end of the previous subsection show that we obtain equivalent results whether we linearize first and then restrict to $\LinGTr=0$ or, alternatively, apply the gauge transformation in theorem \ref{thm:NLORGEnforceability} and then linearize.) In contrast, in \cite{Andersson:2019dwi} we linearize the Einstein equation first, scalarize second, and then impose the linearized radiation gauge and the linearized trace condition. Although it is possible to justify the switching of the order of imposing linearization, scalarization, and the imposition of the gauge conditions, the following theorem provides a more direct comparison. Because \cite{Andersson:2019dwi} linearizes before scalarizing, there is no need to introduce a foreground frame, so the linearized differential Lorentz transformations $(\dot{\eta},\dot{\nu})$ have no analogue in \cite{Andersson:2019dwi}. 

The following theorem provides a relation between the linearized equations in theorem \ref{thm:linearizeSectionFOSH} and those in \cite{Andersson:2019dwi}. 

\begin{theorem}[Comparison with linearization in \cite{Andersson:2019dwi}]
\label{thm:comparisonWithLinearizedGravityPaper}
Assume the \backgroundHypotheses. 

Using table \ref{table:comparison} to identify the linearized variables in \cite{Andersson:2019dwi} with the linear combination of linearized variables from subsection \ref{sec:linearizeThisPaper} restricted to the invariant subspace $(\LinGTr,\dot{\tilde{\rho}}',\dot{\nu})=(0,0,0)$, we have
that the full linearized sytem in theorem \ref{thm:linearizeSectionFOSH} including the constraint equations is equivalent to 
the system in \cite[Lemma A.1]{Andersson:2019dwi} together with the transport equation \eqref{eq:translatedEta} for $\dot{\eta}$. In particular the system in theorem \ref{thm:linearizeSectionFOSH} implies the system in \cite[Lemma A.1]{Andersson:2019dwi}.
\end{theorem}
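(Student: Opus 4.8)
The plan is to reduce the linearized system of Theorem~\ref{thm:linearizeSectionFOSH} to the invariant subspace $(\LinGTr,\dot{\tilde{\rho}}',\dot{\nu})=(0,0,0)$, then perform the explicit linear change of variables recorded in Table~\ref{table:comparison}, and finally check that the resulting system agrees term by term with \cite[Lemma~A.1]{Andersson:2019dwi}.

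First I would justify working on that subspace. By the Invariant subspaces theorem, $\{(\LinGTr,\dot{\tilde{\rho}}',\dot{\nu})=(0,0,0)\}$ is preserved by the linearized evolution; moreover, Theorem~\ref{thm:linearizationOfRadiationGauge} together with the linearization of \eqref{eq:algebraicrho1} and the evolution equations \eqref{eq:ThopFGRhopLinEq1}, \eqref{eq:ThopFGInvGTrTildeLinEq1}, and \eqref{eq:ThopFGDiffSpinLinEq1} shows that imposing the linearization of the trace gauge of theorem~\ref{thm:NLORGEnforceability} and the choice of vanishing initial data for $\nu$ places the solution on exactly this subspace, consistent with the setup of \cite{Andersson:2019dwi}. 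On this subspace every occurrence of $\LinGTr$, $\dot{\tilde{\rho}}'$, $\overline{\dot{\tilde{\rho}}'}$ and $\dot{\nu}$ drops out, leaving a reduced system in $\LinGZero,\LinGOne,\LinGTwo$ (and their conjugates), $\dot{\eta}$, the differential connection coefficients $\dot{\tilde{\sigma}}',\dot{\tilde{\tau}}',\dot{\tilde{\beta}},\dot{\tilde{\beta}}',\dot{\tilde{\epsilon}},\dot{\tilde{\rho}},\dot{\tilde{\sigma}},\dot{\tilde{\kappa}}$, and the curvature $\dot{\Psi}_{0},\dots,\dot{\Psi}_{4}$, together with the reduced constraints coming from \eqref{eq:extraLinearizedStructure} and \eqref{eq:extraLinearizedRicci}.

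Next comes the change of variables. The variables used in \cite{Andersson:2019dwi} are, via Table~\ref{table:comparison}, identified with explicit linear combinations -- with coefficients built from the background spin coefficients, $\mathring{\Psi}_{2}$, and the Killing spinor coefficient $\kappa_{1}$ -- of the reduced variables above. The point is that this dictionary is tailored so that $\dot{\eta}$ enters these combinations in precisely the way needed to cancel the $\dot{\eta}$-terms that appear in the reduced equations \eqref{eq:ThopFGInvGTilde1LinEq1}, \eqref{eq:ThopFGInvGTilde0LinEq1}, \eqref{eq:ThopFGEpsilonLinEq1}, \eqref{eq:ThopFGRhoLinEq1}, \eqref{eq:ThopFGSigmaLinEq1}, \eqref{eq:ThopFGKappaLinEq1}, and in the Bianchi identities \eqref{eq:LinBianchi}. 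After substituting the dictionary and using the background type~D Bianchi identities and background Ricci relations to absorb background derivatives -- exactly as in the proofs in section~\ref{s:FOSH} -- all explicit $\dot{\eta}$-dependence is removed from every equation except the single transport equation \eqref{eq:translatedEta}, which is nothing but \eqref{eq:ThopFGDiffLLinEq1} rewritten in the new variables with $\dot{\nu}=0$. Since the coefficient matrix of the change of variables is block-triangular with nonvanishing diagonal (using $\kappa_{1}\neq0$ and $\mathring{\rho}'\neq0$ on $U$), the transformation is invertible, which upgrades the implication to an equivalence; the ``in particular'' assertion is then the forgetful direction obtained by discarding \eqref{eq:translatedEta}.

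The remaining work is the bookkeeping: verifying that the reduced, $\dot{\eta}$-eliminated system -- evolution equations, the supplementary Bianchi and structure relations, and the translated constraints -- coincides equation by equation with the system of \cite[Lemma~A.1]{Andersson:2019dwi}. I expect this, rather than any conceptual step, to be the main obstacle: it is a finite but large computation, most efficiently performed in \emph{xAct} as were the lemmas of section~\ref{s:FOSH}, and it additionally requires reconciling the differing conventions of the two papers (primed versus unprimed legs, and the slightly different decompositions of the linearized connection and curvature coefficients), which is what makes several equations in the two papers look superficially different. A secondary point requiring care is that the linearized constraint equations \eqref{eq:extraLinearizedStructure} and \eqref{eq:extraLinearizedRicci} must translate, under the same dictionary, to the constraint subsystem of \cite{Andersson:2019dwi}, so that the equivalence holds for the full system and not merely its evolution part.
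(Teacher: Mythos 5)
Your proposal matches the paper's proof: restrict to the invariant subspace $(\LinGTr,\dot{\tilde{\rho}}',\dot{\nu})=(0,0,0)$, substitute the dictionary of table \ref{table:comparison}, use background commutators/identities to clear the resulting second-order and background-derivative terms, and then verify equation-by-equation (in practice with \emph{xAct}) that one lands on the system of \cite[Lemma A.1]{Andersson:2019dwi} plus the transport equation \eqref{eq:translatedEta}, with invertibility of the triangular change of variables giving equivalence. The only minor imprecision is that the substitution alone does not remove all $\dot{\eta}$-dependence outside \eqref{eq:translatedEta} — residual $\thopBG\dot{\eta}$ terms survive in the translated Ricci and Bianchi relations and are eliminated by back-substituting \eqref{eq:translatedEta}, exactly as the paper notes.
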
 

\begin{table}[tb]
\begin{center}
\begin{tabular}{|l|l|}
\hline
\cite{Andersson:2019dwi} & This paper\\
\hline
$G_{00'}$ & $\LinGZero$\\
$G_{10'}$ & $\LinGOne$\\
$G_{20'}$ &$\LinGTwo$\\
$G_{01'}$ & $\LinGOneDg$\\
$G_{02'}$ &$\LinGTwoDg$\\
$\tilde{\beta}$&$\dot{\tilde{\beta}}
 + \dot{\eta} \mathring{\rho}'
 + \tfrac{1}{2} \LinGOneDg \mathring{\rho}'$\\
$\tilde{\beta}'$&$\dot{\tilde{\beta}}'$\\
$\tilde{\epsilon}$&$\dot{\tilde{\epsilon}}
 + \dot{\eta} \mathring{\tau}'
 + \tfrac{1}{2} \LinGOneDg \mathring{\tau}'$\\
$\tilde{\kappa}$&$\dot{\tilde{\kappa}}
 -  \dot{\eta} \mathring{\rho}
 -  \LinGOneDg \mathring{\rho}
 + \tfrac{1}{2} \LinGZero \mathring{\tau}
 + \thoBG \dot{\eta}
 + \tfrac{1}{2} \thoBG \LinGOneDg$\\
$\tilde{\rho}$&$\dot{\tilde{\rho}}
 -  \overline{\dot{\eta}} \mathring{\tau}
 + \edtpBG \dot{\eta}
 + \tfrac{1}{2} \edtpBG \LinGOneDg$\\
$\tilde{\sigma}$&$
  \dot{\tilde{\sigma}}
- \tfrac{1}{2} \LinGTwoDg \mathring{\rho}
 -  \dot{\eta} \mathring{\tau}
 + \edtBG \dot{\eta}
 + \tfrac{1}{2} \edtBG \LinGOneDg$\\
$\tilde{\sigma}'$&$
\dot{\tilde{\sigma}}'
- \tfrac{1}{2} \LinGTwo \mathring{\rho}'$\\
$\tilde{\tau}'$&$
 \dot{\tilde{\tau}}'
- \overline{\dot{\eta}} \mathring{\rho}'
 -  \LinGOne \mathring{\rho}'$\\
%$\vartheta \Psi_i$ & $\dot{\Psi}_i$\\
$\vartheta \Psi_{4}{}$&$\dot{\Psi}_{4}$\\
$\vartheta \Psi_{3}{}$&$\dot{\Psi}_{3}$\\
$\vartheta \Psi_{2}{}$&$\dot{\Psi}_{2}$\\
$\vartheta \Psi_{1}{}$&$\dot{\Psi}_{1} -3 \dot{\eta} \mathring{\Psi}_{2}{}
 - \tfrac{3}{2} \LinGOneDg \mathring{\Psi}_{2}{}$\\
$\vartheta \Psi_{0}{}$&$\dot{\Psi}_{0}$\\
\hline
\end{tabular} 
\end{center}
\caption{Relation between linearized variables in \cite{Andersson:2019dwi} and in section \ref{sec:linearizeThisPaper}.} 
\label{table:comparison}
\end{table}

\begin{proof}
Under the change of variables given in table \ref{table:comparison} and applying the background GHP commutators to eliminate all second-order derivatives on the right-hand side, the system in theorem \ref{thm:linearizeSectionFOSH} restricted to $(\LinGTr,\dot{\tilde{\rho}}',\dot{\nu})=(0,0,0)$ is equivalent to the system \eqref{eq:translatedLinThop}-\eqref{eq:translatedLinBianchi}. Using the relations \eqref{eq:translatedEta}, \eqref{eq:translatedLinStructure} and \eqref{eq:translatedLinRicci} in the other equations one finds that the full system \eqref{eq:translatedLinThop}-\eqref{eq:translatedLinBianchi} is equivalent to the system consisting of \eqref{eq:translatedEta} for $\dot{\eta}$ together with the system in \cite[Lemma A.1]{Andersson:2019dwi} for the linearized variables in \cite{Andersson:2019dwi}. 
\end{proof}

\begin{remark}
In principle, restricting to the subset $(\LinGTr,\dot{\tilde{\rho}}')=(0,0)$ would work too, but this would make table \ref{table:comparison} more complicated. A non-zero $\dot{\nu}$ would just correspond to a spin rotation of the foreground frame with respect to the background frame.
\end{remark}

\begin{remark}
While the full system of evolution and constraint equations in theorem \ref{thm:linearizeSectionFOSH} is equivalent to the full system of evolution and constraint equations in \cite[Lemma A.1]{Andersson:2019dwi}, the evolution systems alone are not equivalent. This is due to the fact that different combinations of the constraint equations were added to the evolution system. These different combinations were added to get symmetric hyperbolicity in this paper and to have a convenient hierarchy for proving decay estimates in \cite{Andersson:2019dwi}. 
\end{remark}

\begin{remark}
It follows from the previous theorem and the results of \cite{Andersson:2019dwi} that the metric coefficients $(\LinGZero,\LinGOne,\LinGTwo)$ satisfy the decay estimates given in \cite{Andersson:2019dwi}. The decay of the remaining connection and curvature components in \cite{Andersson:2019dwi} can be computed from the decay of the metric coefficients. 
\label{rmk:linearizedDecay}
\end{remark}

Remark \ref{rmk:linearizedDecay} gives decay for all the variables except $\dot{\eta}$. As previously noted, the quantity $\dot{\eta}$ has no analogue in \cite{Andersson:2019dwi}, but $\dot{\eta}$ can be shown to converge to a limit. First, observe that $\dot\varsigma=\LinGTr/4$ and $\dot\varsigma^\#=-\LinGTr/4$, which can be treated as either being zero in the invariant subspace or as converging rapidly to zero because of the stability of the invariant subspace. Note that the background values of $\varsigma$ and $\varsigma^\#$ are both $1$.
The linearization of equation \eqref{eq:algebraicbeta1} is 
\begin{align}
\dot{\tilde{\beta}} -  \overline{\dot{\tilde{\beta}}'}={}&- \dot{\eta} \mathring{\rho}'
 -  \LinGOneDg \mathring{\rho}'
 + \dot{\eta} \bar{\mathring{\rho}}'
 + \LinGOneDg \bar{\mathring{\rho}}'
 + \tfrac{1}{2} \LinGTwoDg \mathring{\tau}'
 -  \overline{\dot{\tilde{\tau}}'}
 + i \dot{\nu} \bar{\mathring{\tau}}'
 -  \tfrac{1}{4} \LinGTr \bar{\mathring{\tau}}'.
\end{align}
Substituting this formula into equation \eqref{eq:ThopFGDiffLLinEq1}, one finds that the equations %\eqref{eq:ThopFGDiffSpinLinEq1} and 
\eqref{eq:ThopFGDiffLLinEq1} becomes
\begin{align}
\thopBG \dot{\eta}={}&
 \dot{\eta} \bar{\mathring{\rho}}'
- \LinGOneDg (\mathring{\rho}' -  \bar{\mathring{\rho}}')
 + \tfrac{1}{2} \LinGTwoDg (\bar{\mathring{\tau}} + \mathring{\tau}')
 -  \overline{\dot{\tilde{\tau}}'}
 + i \dot{\nu} (\mathring{\tau} + \bar{\mathring{\tau}}')
 -  \tfrac{1}{4} \LinGTr (\mathring{\tau} + \bar{\mathring{\tau}}').
\end{align}
Assuming all the other linearized quantities on the right of the transport equation decay, then one obtains that $\dot{\eta}$ also decays as explained at the end of section \ref{sec:linearizeThisPaper}. An alternative treatment of $\dot{\eta}$ appears in the following subsection.

%%%%%%%%%%%%%%%%%%%%%%%%%%%%%%%%%%%%%%%%%%%%%%%%%%%
\subsection{Smallness of \texorpdfstring{$\tilde\beta$}{beta tilde}}
\label{sec:SmallnessBeta}

In this section, we show that it is possible to choose initial data for $\eta$ (or its linearization) so that $\tilde{\beta}$ vanishes to linear order in both the linear and nonlinear settings. While it may seem natural to choose initial data with $\dot{\eta}=0$, this does not propagate, even when $\LinGTr=0$.

First, consider the linearized setting. The previous subsection argues that we can choose $\LinGTr=0$, $\dot{\tilde{\rho}}'=0$ and $\dot{\nu}=0$. With these choices \eqref{eq:ThopFGBetaLinEq1} gives a homogeneous evolution equation for $\dot{\tilde{\beta}}$. Thus, if $\dot{\tilde{\beta}}$ can be chosen to be initially zero, it remains so. Assuming that the linearized versions of equations \eqref{eq:ThopInvGTilde1}, \eqref{eq:algebraicbeta1}, and \eqref{eq:ethnuTobeta} hold, it follows that the vanishing of $\dot{\tilde{\beta}}$ is equivalent to each of the following
\begin{subequations}
\label{eq:smallBEtaLinearEquations}
\begin{align}
\dot{\eta}={}&- \tfrac{1}{4} \mathring{\rho}'^{-1} (\thopBG{} + 3 \mathring{\rho}' - 2 \bar{\mathring{\rho}}')\LinGOneDg
 + \tfrac{1}{4} \mathring{\rho}'^{-1} (\edtpBG{} -  \bar{\mathring{\tau}} + \mathring{\tau}')\LinGTwoDg,\label{eq:smallBeta:eta}\\
\dot{\tilde{\tau}}'={}&\LinGOne \mathring{\rho}'
 + (\tfrac{1}{2} -  \tfrac{1}{4} \kappa_{1}{}^{-1} \bar{\kappa}_{1'}{}) (\thopBG{} - 2 \mathring{\rho}' -  \bar{\mathring{\rho}}')\LinGOne
 + \tfrac{1}{4} \kappa_{1}{}^{-1} \bar{\kappa}_{1'}{} (\edtBG{} -  \mathring{\tau} -  \bar{\mathring{\tau}}')\LinGTwo,\\
\dot{\tilde{\beta}}'={}&\tfrac{1}{4} (\thopBG{} + 2 \mathring{\rho}' -  \bar{\mathring{\rho}}')\LinGOne
 + \tfrac{1}{4} (\edtBG{} + \mathring{\tau} -  \bar{\mathring{\tau}}')\LinGTwo.
\end{align}
\end{subequations}
From the first of these, we see that $\dot{\eta}$ can be chosen initially so that $\dot{\tilde{\beta}}$ vanishes initially. Recall that the initial data for the metric, spin coefficients, and curvature components must satisfy constraint equations, so they cannot be all freely specified. Although equation \eqref{eq:ThopInvGTilde1} is an evolution equation, equations \eqref{eq:algebraicbeta1} and \eqref{eq:ethnuTobeta} can be viewed as constraints on the initial data, and these impose constraints in the linearization. Due to the fact that the evolution equation for $\dot{\tilde{\beta}}$ is homogeneous, it follows that $\dot{\tilde{\beta}}$ will remain zero, and hence that these three equations \eqref{eq:smallBEtaLinearEquations} will remain valid. Equation \eqref{eq:smallBeta:eta} can be used to estimate $\dot{\eta}$. 

Now consider the nonlinear case. In this case, given initial data for the metric and its derivatives, one is free to choose initial data for $\nu$ and $\eta$ in the frame gauge. Once initial data for $\nu$ has been chosen, it is possible to compute $\tilde{\sigma}'$ and $\tilde{\rho}'$ via equation \eqref{eq:rhoTildePrimeSigmaTildePrime} purely in terms of quantities defined with respect to the background tetrad and $\nu$, without having specified $\eta$. Thus, for example, one may choose the initial value for $\eta$ so that 
\begin{align}
\eta ={}&- \frac{3 \overline{G_{2}} G_{1} \varsigma^{\#}{}^3}{4 \varsigma^2 \mathring{\rho}'} (\mathring{\rho}' -  \bar{\mathring{\rho}}')
 +  \frac{\overline{G_{1}} \varsigma^{\#}{}^2}{2 \varsigma \mathring{\rho}'} (2 - 3 \varsigma^{\#}{} \varsigma) (\mathring{\rho}' -  \bar{\mathring{\rho}}')
 + \frac{\overline{G_{2}} \varsigma^{\#}{}^2 \tilde{\sigma}'}{8 \varsigma^3 \mathring{\rho}'} (\overline{G_{2}} G_{1} + 2 \overline{G_{1}} \varsigma^2)\nonumber\\
& -  \frac{\varsigma^{\#}{}^2 \overline{\tilde{\sigma}'}}{4 \varsigma \mathring{\rho}'} (\overline{G_{1}} G_{2} + 2 G_{1} \varsigma^2)
 -  \frac{\overline{G_{2}} \varsigma^{\#}{}}{4 \mathring{\rho}'} (\bar{\mathring{\tau}} -  \mathring{\tau}')
 + \frac{1}{4 \mathring{\rho}'} (2 \varsigma^{\#}{} \varsigma^2 - \varsigma^{\#}{} -  \varsigma ) (\mathring{\tau} -  \bar{\mathring{\tau}}')\nonumber\\
& -  \frac{\varsigma^{\#}{} (\thopBG{} + \mathring{\rho}' + 2 \tilde{\rho}')\overline{G_{1}}}{4 \mathring{\rho}'}
 - \frac{\varsigma^{\#}{}^2 \edtBG \slashed{G}_{}}{8 \varsigma \mathring{\rho}'} (1 - 2 \varsigma^{\#}{} \varsigma)
 +  \frac{G_{2} \varsigma^{\#}{}^2 \edtBG \overline{G_{2}}}{16 \varsigma^3 \mathring{\rho}'} (1 - 2 \varsigma^{\#}{} \varsigma)\nonumber\\
& -  \frac{\overline{G_{2}} \varsigma^{\#}{}^2 \edtBG G_{2}}{16 \varsigma^3 \mathring{\rho}'} (1 + 2 \varsigma^{\#}{} \varsigma)
 -  \frac{\overline{G_{2}} \varsigma^{\#}{}^3 \edtpBG \slashed{G}_{}}{8 \varsigma^2 \mathring{\rho}'}
 + \frac{\varsigma^{\#}{}^3 \edtpBG \overline{G_{2}}}{4 \mathring{\rho}'}
 + \frac{\overline{G_{2}}{}^2 \varsigma^{\#}{}^3 \edtpBG G_{2}}{16 \varsigma^4 \mathring{\rho}'} .
\end{align}
With this choice, and trivial initial data for $\nu$, it follows from equations \eqref{eq:ThopInvGTilde1}, \eqref{eq:algebraicbeta1}, and \eqref{eq:ethnuTobeta}, that the initial data for $\tilde{\beta}$, $\tilde{\tau}'$ and $\tilde{\beta}'$ takes the form
\begin{subequations}
\label{eq:smallBEtaNonLinearEquations}
\begin{align}
\tilde{\beta}={}&- \eta \tilde{\rho}'
 + \frac{\bar{\eta} \overline{G_{2}} \bar{\mathring{\rho}}'}{2 \varsigma^2} ,\\
\tilde{\tau}'={}&\bar{\eta} (\mathring{\rho}' + \tilde{\rho}')
 + \frac{\varsigma^{\#}{}}{4 \varsigma^2} (\overline{G_{1}} G_{2} + 2 G_{1} \varsigma^2) (\mathring{\rho}' + 2 \tilde{\rho}')
 + \tfrac{1}{2} G_{1} \varsigma^{\#}{} (\mathring{\rho}' -  \bar{\mathring{\rho}}')
 + \eta \tilde{\sigma}'
 + \frac{\varsigma^{\#}{} \tilde{\sigma}'}{2 \varsigma^2} (\overline{G_{2}} G_{1} + 2 \overline{G_{1}} \varsigma^2)\nonumber\\
& + \frac{G_{2} \mathring{\tau}}{4 \varsigma^2} (\varsigma^{\#}{} + \varsigma)
 + \tfrac{1}{2} (\varsigma^{\#}{} -  \varsigma) \bar{\mathring{\tau}}
 -  \tfrac{1}{2} (2 -  \varsigma^{\#}{} -  \varsigma) \mathring{\tau}'
 + \frac{G_{2} \bar{\mathring{\tau}}'}{4 \varsigma^2} (\varsigma^{\#}{} -  \varsigma)
 + \frac{G_{2} \varsigma^{\#}{} \thopBG \overline{G_{1}}}{4 \varsigma^2}
 + \tfrac{1}{2} \varsigma^{\#}{} \thopBG G_{1},\\
\tilde{\beta}'={}&\frac{\eta G_{2} \mathring{\rho}'}{2 \varsigma^2}
 -  \frac{\varsigma^{\#}{}^2}{2 \varsigma^3} (\overline{G_{1}} G_{2} + 2 G_{1} \varsigma^2) (\mathring{\rho}' -  \bar{\mathring{\rho}}')
 -  \bar{\eta} (\mathring{\rho}' + \tilde{\rho}' -  \bar{\mathring{\rho}}')
 + (1 -  \varsigma^{\#}{}) \mathring{\tau}'
 + \tilde{\tau}'
 -  \frac{G_{2} \varsigma^{\#}{} \bar{\mathring{\tau}}'}{2 \varsigma^2} .
\end{align}
\end{subequations}
In particular, $\tilde{\beta}$ vanishes quadratically. The choice of $\eta$ is not unique, in that there are other choices of $\eta$ for which $\tilde{\beta}$ also vanishes quadratically. These equations will not propagate under the evolution, although $\tilde{\beta}$ will remain quadratically small for evolution under equation \eqref{eq:thopbeta}. 

\subsection*{Acknowledgements}
A significant portion of the work was done while the authors were in residence at Institut Mittag-Leffler in Djursholm, Sweden during the fall of 2019, supported by the Swedish Research Council under grant no. 2016-06596. S. M. also acknowledges the support by the ERC grant ERC-2016 CoG 725589 EPGR.
The authors are grateful to Steffen Aksteiner and Bernard Whiting for helpful discussions. 
The authors are grateful to the anonymous referees for suggestions to improve the clarity of the paper. 

\appendix

\section{GHP formalism as a gauge or principal bundle theory}
\label{sec:worry} 

To begin, we recall the definition of a tetrad. 
\begin{definition}
Let $(\mathcal{M},\metric)$ be a $1+3$ dimensional, Lorentzian manifold with an orientation and time orientation. 

A \defn{real null tetrad} at each point is defined to consist of a pair of distinct null vectors $\vecL$ and $\vecN$ satisfying $\metric(\vecL, \vecN)=1$ and an orthonormal basis $(e_1,e_2)$ for the plane orthogonal to $\vecL$ and $\vecN$. A \defn{complex null tetrad} is defined to consist of a quadruple of elements of the complexification of the tangent space $(\vecL,\vecN,\vecM,\vecMb)$ such that they are null vectors and $\vecL,\vecN,(\vecM+\vecMb)/\sqrt{2},(\vecM-\vecMb)/(i\sqrt{2})$ is a real null tetrad. 

A real null tetrad is defined to be \defn{oriented} if $(\vecL,\vecN,e_1,e_2)$ is an oriented basis, and a complex null tetrad is oriented if the corresponding real null tetrad is. 

In this paper, unless otherwise specified, a \defn{tetrad} is understood to mean an oriented complex null tetrad. 

Given an ordered pair of (distinct and future-pointing) null vectors $(\vecLSpecified,\vecNSpecified)$ a tetrad $(\vecL,\vecN,\vecM,\vecMb)$ is defined to be \defn{aligned with $\vecLSpecified$ and $\vecNSpecified$} if $\vecL$ is a positive multiple of $\vecLSpecified$ and $\vecN$ is a positive multiple of $\vecNSpecified$ and defined to be \defn{constructed from $\vecLSpecified$ and $\vecNSpecified$} if $\vecL=\vecLSpecified$ and $\vecN=\vecNSpecified$. 

A \defn{local tetrad} is a smooth map from an open subset of $\mathcal{M}$ taking values, at each point, in the set of null tetrads at that point. 
\end{definition}

An important aspect of the GHP formalism \cite{GHP} is that it is designed specifically to handle the situation where there is a pair of null directions that is naturally singled out rather than a choice of global tetrad. This is particularly important where there is a globally defined pair of ingoing and outgoing null vectors but there is no globally defined tetrad that is aligned with this choice. The nonexistence of such a tetrad is most clearly visible in the Schwarzschild space-time, where a hypothetical $(\vecM+\vecMb)/\sqrt{2}$ and $(\vecM-\vecMb)/(i\sqrt{2})$ would specify a global basis for the tangent space of the spheres orthogonal to the radial ingoing and outgoing vectors, but no such global basis can exist, since it is known that the $2$-spheres do not have any globally non-vanishing vector fields. 

Nonetheless, some specification of local tetrads is required, and, to explain this, it is useful to use the language of principal-$G$ bundles or, equivalently, gauge theory. Regarding the patching of these local tetrads, since \cite[p269]{PenroseRindler} simply states ``this idea can be made mathematically more precise in the language of fiber bundles \ldots, but we need not elaborate on it here'', we briefly summarize the situation in this appendix. 

We begin by considering sets of null tetrads. Consider a pair of null vectors at a point, $\vecL$ and $\vecN$ that satisfy $\metric(\vecL,\vecN)=1$. One can choose an oriented orthonormal basis $(e_1,e_2)$ for the plane orthogonal to the plane spanned by $\vecL$ and $\vecN$. Any oriented real null tetrad aligned with $\vecL$ and $\vecN$ is of the form $(\lambda \vecL,\lambda^{-1}\vecN,\cos\varphi e_1+\sin\varphi e_2,\cos\varphi e_2-\sin\varphi e_1)$ with $\lambda\in(0,\infty),\varphi\in\Reals$ and hence uniquely specified by $\lambda e^{i\varphi}\in\Complex^*$, the set of invertible elements in $\Complex$. Similarly, any oriented real null tetrad constructed from $\vecL$ and $\vecN$ is of the form $(\vecL,\vecN,\cos\varphi e_1+\sin\varphi e_2,\cos\varphi e_2-\sin\varphi e_1)$ with $\varphi\in\Reals$ and hence uniquely specified by $e^{i\varphi}\in\Circle$. Real null tetrads are in one-to-one correspondence with (complex) null tetrads, by taking $\vecM=2^{-1/2}(e_1+ie_2)$. Now consider a pair of globally defined null vector fields, also denoted $\vecL$ and $\vecN$. In the language of principal-$G$ bundles, the set of tetrads aligned with $\vecL$ and $\vecN$ is a principal-$\Complex^*$ bundle, and the set of tetrads constructed from $\vecL$ and $\vecN$ is a principal-$\Circle$ bundle. In the language of gauge theory, these sets have $\Complex^*$ and $\Circle$ gauge groups respectively.

We now consider the notion of GHP scalar, which can be stated in various languages. The GHP scalars that typically arise in this paper can be viewed as $\Complex$-valued contractions of a tensor with elements of a local null tetrad or their derivatives; hence, they can be viewed as $\Complex$-valued functions on the the bundle of null tetrads or the jet bundles over it. An important type of GHP scalars are those that are properly weighted. In perhaps the simplest language, a GHP scalar is defined to be \textbf{properly weighted} if there are scalars $(b,s)$ such that if the tetrad $(\vecL,\vecN,\vecM,\vecMb)$ is transformed to $(\lambda\vecL,\lambda^{-1}\vecN,e^{i\varphi}\vecM,e^{-i\varphi}\vecMb)$ then the GHP scalar is transformed from $\alpha$ to $\lambda^b e^{is\varphi}\alpha$. The exponents $b$ and $s$ are the boost and spin weight. This definition can be expressed as being a function on the frame bundle and transforming equivariantly, as being a section of an associated complex line bundle for the null tetrad bundle, or as a gauge field associated with the set of null tetrads. Roughly speaking, these different characterizations are like characterizing tensors, on the one hand, by how their components transform under a change of basis, or, on the other hand, as tensor products of copies of the tangent and cotangent space. The GHP scalars that are not properly weighted but which arise in the standard presentation (that is $\beta,\beta',\epsilon,\epsilon'$) can be viewed as connection coefficients for a connection on these associated complex line bundles. For properly weighted scalars, it is conventional to use the $(p,q)$ weights such that $b=\frac12(p+q)$ and $s=\frac12(p-q)$. 

Although it doesn't arise in this paper, GHP spinors with noninteger boost and spin weight can be defined by exploiting spinor structures instead of just tensorial structures.

\section{GHP Commutators}\label{sec:commutators}
With our gauge choice we have the following commutator relations for the foreground operators acting on a field $\varphi$ with weight $(p,q)$ with respect to background spin and boost transformations. These are verified using the definition of the operators in terms of the background operators.
\begin{subequations}
\begin{align}
\thop \tho \varphi ={}&\tho \thop \varphi
 -  (\bar{\mathring{\tau}} -  \mathring{\tau}' -  \tilde{\tau}') \edt \varphi
 -  (\mathring{\tau} -  \overline{\tilde{\tau}'} -  \bar{\mathring{\tau}}') \edtp \varphi\nonumber\\
& -  \Bigl(p \bigl(- \mathring{\Psi}_{2}{} -  \tilde{\Psi}_{2}{} + \mathring{\tau} (\mathring{\tau}' + \tilde{\tau}')\bigr) + q \bigl(- \bar{\mathring{\Psi}}_{2}{} -  \overline{\tilde{\Psi}_{2}{}} + \bar{\mathring{\tau}} (\overline{\tilde{\tau}'} + \bar{\mathring{\tau}}')\bigr)\Bigr) \varphi ,\\
\thop \edt \varphi ={}&\edt \thop \varphi
 -  \mathring{\tau} \thop \varphi
 + (\mathring{\rho}' + \tilde{\rho}') \edt \varphi
 + \overline{\tilde{\sigma}'} \edtp \varphi
 + \bigl(- p (\mathring{\rho}' + \tilde{\rho}') \mathring{\tau} + q (\overline{\tilde{\Psi}_{3}{}} -  \overline{\tilde{\sigma}'} \bar{\mathring{\tau}})\bigr) \varphi ,\\
\thop \edtp \varphi ={}&\edtp \thop \varphi
 -  \bar{\mathring{\tau}} \thop \varphi
 + \tilde{\sigma}' \edt \varphi
 + (\overline{\tilde{\rho}'} + \bar{\mathring{\rho}}') \edtp \varphi
 + \bigl(p (\tilde{\Psi}_{3}{} -  \tilde{\sigma}' \mathring{\tau}) -  q (\overline{\tilde{\rho}'} + \bar{\mathring{\rho}}') \bar{\mathring{\tau}}\bigr) \varphi ,\\
\tho \edt \varphi ={}&\edt \tho \varphi
 -  (\overline{\tilde{\tau}'} + \bar{\mathring{\tau}}') \tho \varphi
 -  \tilde{\kappa} \thop \varphi
 + (\overline{\tilde{\rho}} + \bar{\mathring{\rho}}) \edt \varphi
 + \tilde{\sigma} \edtp \varphi\nonumber\\
& -  \Bigl(p \bigl(\tilde{\Psi}_{1}{} + \tilde{\kappa} (\mathring{\rho}' + \tilde{\rho}') -  \tilde{\sigma} (\mathring{\tau}' + \tilde{\tau}')\bigr) + q \bigl(\overline{\tilde{\kappa}} \overline{\tilde{\sigma}'} -  (\overline{\tilde{\rho}} + \bar{\mathring{\rho}}) (\overline{\tilde{\tau}'} + \bar{\mathring{\tau}}')\bigr)\Bigr) \varphi ,\\
\tho \edtp \varphi ={}&\edtp \tho \varphi
 -  (\mathring{\tau}' + \tilde{\tau}') \tho \varphi
 -  \overline{\tilde{\kappa}} \thop \varphi
 + \overline{\tilde{\sigma}} \edt \varphi
 + (\mathring{\rho} + \tilde{\rho}) \edtp \varphi\nonumber\\
& -  \Bigl(p \bigl(\tilde{\kappa} \tilde{\sigma}' -  (\mathring{\rho} + \tilde{\rho}) (\mathring{\tau}' + \tilde{\tau}')\bigr) + q \bigl(\overline{\tilde{\Psi}_{1}{}} + \overline{\tilde{\kappa}} (\overline{\tilde{\rho}'} + \bar{\mathring{\rho}}') -  \overline{\tilde{\sigma}} (\overline{\tilde{\tau}'} + \bar{\mathring{\tau}}')\bigr)\Bigr) \varphi ,\\
\edtp \edt \varphi ={}&\edt \edtp \varphi
 -  (- \mathring{\rho}' -  \tilde{\rho}' + \overline{\tilde{\rho}'} + \bar{\mathring{\rho}}') \tho \varphi
 -  (\mathring{\rho} + \tilde{\rho} -  \overline{\tilde{\rho}} -  \bar{\mathring{\rho}}) \thop \varphi\nonumber\\*
& -  \Bigl(- p \bigl(- \mathring{\Psi}_{2}{} -  \tilde{\Psi}_{2}{} -  (\mathring{\rho} + \tilde{\rho}) (\mathring{\rho}' + \tilde{\rho}') + \tilde{\sigma} \tilde{\sigma}'\bigr) + q \bigl(- \bar{\mathring{\Psi}}_{2}{} -  \overline{\tilde{\Psi}_{2}{}} -  (\overline{\tilde{\rho}} + \bar{\mathring{\rho}}) (\overline{\tilde{\rho}'} + \bar{\mathring{\rho}}') + \overline{\tilde{\sigma}} \overline{\tilde{\sigma}'}\bigr)\Bigr) \varphi .
\end{align}
\end{subequations}

\section{Expressions for the differential connection}\label{sec:gamma}
The background frame components of the differential connection $\widetilde\Gamma^{a}{}_{bc}$ are given by the following expressions together with their complex conjugates.
\begin{subequations}
\label{eq:GammaComponents}
\begin{align}
\widetilde\Gamma^{\mathring{l}}{}_{\mathring{l} \mathring{l}}={}&\overline{G_{1}} (- \bar{\mathring{\tau}} + \mathring{\tau}')
 + G_{1} (- \mathring{\tau} + \bar{\mathring{\tau}}')
 -  \tfrac{1}{2} \thopBG G_{0},\\
\widetilde\Gamma^{\mathring{l}}{}_{\mathring{l} \mathring{n}}={}&0,\\
\widetilde\Gamma^{\mathring{l}}{}_{\mathring{l} \vecMBackground}={}&- \tfrac{1}{2} \overline{G_{2}} \bar{\mathring{\tau}}
 + \tfrac{1}{2} \overline{G_{2}} \mathring{\tau}'
 + \tfrac{1}{4} \slashed{G}_{} (\mathring{\tau} -  \bar{\mathring{\tau}}')
 -  \tfrac{1}{2} (\thopBG{} -  \mathring{\rho}')\overline{G_{1}},\\
\widetilde\Gamma^{\mathring{l}}{}_{\mathring{n} \mathring{n}}={}&0,\\
\widetilde\Gamma^{\mathring{l}}{}_{\mathring{n} \vecMBackground}={}&0,\\
\widetilde\Gamma^{\mathring{l}}{}_{\vecMBackground \vecMBackground}={}&- \tfrac{1}{2} (\thopBG{} - 2 \mathring{\rho}')\overline{G_{2}},\\
\widetilde\Gamma^{\mathring{l}}{}_{\vecMBackground \vecMbBackground}={}&\tfrac{1}{4} (\thopBG{} -  \mathring{\rho}' -  \bar{\mathring{\rho}}')\slashed{G}_{},\\
\widetilde\Gamma^{\mathring{n}}{}_{\mathring{l} \mathring{l}}={}&\tfrac{1}{2} G^{\#}_{1} (-2 (\thoBG{} -  \bar{\mathring{\rho}})\overline{G_{1}} + (\edtBG{} - 2 \bar{\mathring{\tau}}')G_{0})
 + \tfrac{1}{2} \overline{G^{\#}_{1}} (-2 (\thoBG{} -  \mathring{\rho})G_{1} + (\edtpBG{} - 2 \mathring{\tau}')G_{0})
 + \tfrac{1}{2} \thoBG G_{0}\nonumber\\*
& + G^{\#}_{0} \bigl(\overline{G_{1}} (- \bar{\mathring{\tau}} + \mathring{\tau}') + G_{1} (- \mathring{\tau} + \bar{\mathring{\tau}}') -  \tfrac{1}{2} \thopBG G_{0}\bigr),\\
\widetilde\Gamma^{\mathring{n}}{}_{\mathring{l} \mathring{n}}={}&G_{1} \mathring{\tau}
 + \overline{G_{1}} \bar{\mathring{\tau}}
 + \tfrac{1}{4} G^{\#}_{1} \bigl(-2 \overline{G_{2}} (\bar{\mathring{\tau}} + \mathring{\tau}') + \slashed{G}_{} (\mathring{\tau} + \bar{\mathring{\tau}}') - 2 (\thopBG{} -  \mathring{\rho}')\overline{G_{1}}\bigr)\nonumber\\*
& + \tfrac{1}{4} \overline{G^{\#}_{1}} \bigl(\slashed{G}_{} (\bar{\mathring{\tau}} + \mathring{\tau}') - 2 G_{2} (\mathring{\tau} + \bar{\mathring{\tau}}') - 2 (\thopBG{} -  \bar{\mathring{\rho}}')G_{1}\bigr)
 + \tfrac{1}{2} \thopBG G_{0},\\
\widetilde\Gamma^{\mathring{n}}{}_{\mathring{l} \vecMBackground}={}&\overline{G_{1}} \bar{\mathring{\rho}}
 + \tfrac{1}{4} G^{\#}_{0} \bigl(-2 \overline{G_{2}} (\bar{\mathring{\tau}} -  \mathring{\tau}') + \slashed{G}_{} (\mathring{\tau} -  \bar{\mathring{\tau}}') - 2 (\thopBG{} -  \mathring{\rho}')\overline{G_{1}}\bigr)
  -  \tfrac{1}{2} G^{\#}_{1} (2 \overline{G_{1}} \bar{\mathring{\tau}}' + \thoBG \overline{G_{2}})\nonumber\\*
& -  \tfrac{1}{4} \overline{G^{\#}_{1}} \bigl(2 G_{0} (\mathring{\rho}' -  \bar{\mathring{\rho}}') -  (\thoBG{} -  \mathring{\rho} + \bar{\mathring{\rho}})\slashed{G}_{} + 2 (\edtBG{} + \bar{\mathring{\tau}}')G_{1} - 2 (\edtpBG{} -  \mathring{\tau}')\overline{G_{1}}\bigr)
+ \tfrac{1}{2} \edtBG G_{0},\\
\widetilde\Gamma^{\mathring{n}}{}_{\mathring{n} \mathring{n}}={}&0,\\
\widetilde\Gamma^{\mathring{n}}{}_{\mathring{n} \vecMBackground}={}&\tfrac{1}{2} \overline{G_{2}} (\bar{\mathring{\tau}}
 -   \mathring{\tau}')
 -  \tfrac{1}{4} \slashed{G}_{} (\mathring{\tau} -  \bar{\mathring{\tau}}')
 + \tfrac{1}{2} (\thopBG{} + \mathring{\rho}')\overline{G_{1}}
 + \tfrac{1}{4} \overline{G^{\#}_{1}} (\thopBG{} + \mathring{\rho}' -  \bar{\mathring{\rho}}')\slashed{G}_{}
 -  \tfrac{1}{2} G^{\#}_{1} \thopBG \overline{G_{2}},\\
\widetilde\Gamma^{\mathring{n}}{}_{\vecMBackground \vecMBackground}={}&- \tfrac{1}{2} (\thoBG{} - 2 \bar{\mathring{\rho}})\overline{G_{2}}
 -  \tfrac{1}{2} G^{\#}_{0} (\thopBG{} - 2 \mathring{\rho}')\overline{G_{2}}
 + (\edtBG{} -  \bar{\mathring{\tau}}')\overline{G_{1}}
 -  \tfrac{1}{2} G^{\#}_{1} \edtBG \overline{G_{2}}\nonumber\\*
& + \tfrac{1}{2} \overline{G^{\#}_{1}} \bigl(2 \overline{G_{1}} (- \mathring{\rho}' + \bar{\mathring{\rho}}') + \edtBG \slashed{G}_{} + \edtpBG \overline{G_{2}}\bigr),\\
\widetilde\Gamma^{\mathring{n}}{}_{\vecMBackground \vecMbBackground}={}&\tfrac{1}{2} G_{0} \mathring{\rho}'
 + \tfrac{1}{2} G_{0} \bar{\mathring{\rho}}'
 + \tfrac{1}{4} (\thoBG{} -  \mathring{\rho} -  \bar{\mathring{\rho}})\slashed{G}_{}
 + \tfrac{1}{4} G^{\#}_{0} (\thopBG{} -  \mathring{\rho}' -  \bar{\mathring{\rho}}')\slashed{G}_{}
 + \tfrac{1}{2} (\edtBG{} -  \bar{\mathring{\tau}}')G_{1}
\nonumber\\*
& + \tfrac{1}{2} (\edtpBG{} -  \mathring{\tau}')\overline{G_{1}}
 -  \tfrac{1}{2} \overline{G^{\#}_{1}} (2 G_{1} \mathring{\rho}' + \edtBG G_{2})
 -  \tfrac{1}{2} G^{\#}_{1} (2 \overline{G_{1}} \bar{\mathring{\rho}}' + \edtpBG \overline{G_{2}}),\\
\widetilde\Gamma^{\vecMBackground}{}_{\mathring{l} \mathring{l}}={}&G_{1} \mathring{\rho}
 -  G_{0} \mathring{\tau}'
 + G^{\#}_{2} ((\thoBG{} -  \bar{\mathring{\rho}})\overline{G_{1}} -  \tfrac{1}{2} (\edtBG{} - 2 \bar{\mathring{\tau}}')G_{0})
 -  \thoBG G_{1}
  + \tfrac{1}{2} \edtpBG G_{0}\nonumber\\*
& + G^{\#}_{1} \bigl(\overline{G_{1}} (\bar{\mathring{\tau}} -  \mathring{\tau}') + G_{1} (\mathring{\tau} -  \bar{\mathring{\tau}}') + \tfrac{1}{2} \thopBG G_{0}\bigr)
 + \tfrac{1}{4} \slashed{G}^{\#} (2 G_{1} \mathring{\rho} - 2 G_{0} \mathring{\tau}' - 2 \thoBG G_{1} + \edtpBG G_{0}),\\
\widetilde\Gamma^{\vecMBackground}{}_{\mathring{l} \mathring{n}}={}&
 -  \tfrac{1}{2} G_{2} \mathring{\tau}
 + \tfrac{1}{4} \slashed{G}_{} (\bar{\mathring{\tau}} + \mathring{\tau}')
 -  \tfrac{1}{2} G_{2} \bar{\mathring{\tau}}'
 + \tfrac{1}{4} G^{\#}_{2} \bigl(2 \overline{G_{2}} (\bar{\mathring{\tau}} + \mathring{\tau}') -  \slashed{G}_{} (\mathring{\tau} + \bar{\mathring{\tau}}') + 2 (\thopBG{} -  \mathring{\rho}')\overline{G_{1}}\bigr)\nonumber\\*
& + \tfrac{1}{8} \slashed{G}^{\#} \bigl(2 G_{1} \bar{\mathring{\rho}}' + \slashed{G}_{} \bar{\mathring{\tau}} + \slashed{G}_{} \mathring{\tau}' - 2 G_{2} (\mathring{\tau} + \bar{\mathring{\tau}}') - 2 \thopBG G_{1}\bigr)
 -  \tfrac{1}{2} \thopBG G_{1}
 + \tfrac{1}{2} G_{1} \bar{\mathring{\rho}}',\\
\widetilde\Gamma^{\vecMBackground}{}_{\mathring{l} \vecMBackground}={}&- \tfrac{1}{4} \slashed{G}_{} (\mathring{\rho} -  \bar{\mathring{\rho}})
 -  \tfrac{1}{2} G_{0} \mathring{\rho}'
 + \tfrac{1}{2} G_{0} \bar{\mathring{\rho}}'
 -  \tfrac{1}{2} \overline{G_{1}} \mathring{\tau}'
 -  \tfrac{1}{2} G_{1} \bar{\mathring{\tau}}'
 -  \tfrac{1}{2} \edtBG G_{1}
 + \tfrac{1}{2} \edtpBG \overline{G_{1}}\nonumber\\*
& + \tfrac{1}{4} G^{\#}_{1} \bigl(2 \overline{G_{2}} (\bar{\mathring{\tau}} -  \mathring{\tau}') + \slashed{G}_{} (- \mathring{\tau} + \bar{\mathring{\tau}}') + 2 (\thopBG{} -  \mathring{\rho}')\overline{G_{1}}\bigr)
 + \tfrac{1}{4} \thoBG \slashed{G}_{}
 + \tfrac{1}{2} G^{\#}_{2} (2 \overline{G_{1}} \bar{\mathring{\tau}}' + \thoBG \overline{G_{2}})\nonumber\\*
& -  \tfrac{1}{8} \slashed{G}^{\#} \bigl(\slashed{G}_{} \mathring{\rho} -  \slashed{G}_{} \bar{\mathring{\rho}} + 2 G_{0} (\mathring{\rho}' -  \bar{\mathring{\rho}}') + 2 \overline{G_{1}} \mathring{\tau}' + 2 G_{1} \bar{\mathring{\tau}}' -  \thoBG \slashed{G}_{} + 2 \edtBG G_{1} - 2 \edtpBG \overline{G_{1}}\bigr),\\
\widetilde\Gamma^{\vecMBackground}{}_{\mathring{l} \vecMbBackground}={}&- G_{1} \mathring{\tau}'
 + \tfrac{1}{4} G^{\#}_{1} \bigl(\slashed{G}_{} (- \bar{\mathring{\tau}} + \mathring{\tau}') + 2 G_{2} (\mathring{\tau} -  \bar{\mathring{\tau}}') + 2 (\thopBG{} -  \bar{\mathring{\rho}}')G_{1}\bigr)
  -  \tfrac{1}{4} \slashed{G}^{\#} (2 G_{1} \mathring{\tau}' + \thoBG G_{2})\nonumber\\*
& -  \tfrac{1}{4} G^{\#}_{2} \bigl(2 G_{0} (\mathring{\rho}' -  \bar{\mathring{\rho}}') + (\thoBG{} + \mathring{\rho} -  \bar{\mathring{\rho}})\slashed{G}_{} + 2 (\edtBG{} -  \bar{\mathring{\tau}}')G_{1} - 2 (\edtpBG{} + \mathring{\tau}')\overline{G_{1}}\bigr)
 -  \tfrac{1}{2} \thoBG G_{2},\\
\widetilde\Gamma^{\vecMBackground}{}_{\mathring{n} \mathring{n}}={}&0,\\
\widetilde\Gamma^{\vecMBackground}{}_{\mathring{n} \vecMBackground}={}&\tfrac{1}{4} \slashed{G}_{} (\mathring{\rho}' -  \bar{\mathring{\rho}}')
 + \tfrac{1}{4} \thopBG \slashed{G}_{}
 + \tfrac{1}{8} \slashed{G}^{\#} \bigl(\slashed{G}_{} (\mathring{\rho}' -  \bar{\mathring{\rho}}') + \thopBG \slashed{G}_{}\bigr)
 + \tfrac{1}{2} G^{\#}_{2} \thopBG \overline{G_{2}},\\
\widetilde\Gamma^{\vecMBackground}{}_{\mathring{n} \vecMbBackground}={}&- \tfrac{1}{4} G^{\#}_{2} (\thopBG{} -  \mathring{\rho}' + \bar{\mathring{\rho}}')\slashed{G}_{}
 -  \tfrac{1}{2} \thopBG G_{2}
 -  \tfrac{1}{4} \slashed{G}^{\#} \thopBG G_{2},\\
\widetilde\Gamma^{\vecMBackground}{}_{\vecMBackground \vecMBackground}={}&\overline{G_{1}} (\bar{\mathring{\rho}}' - \mathring{\rho}')
 + \tfrac{1}{2} G^{\#}_{1} (\thopBG{} - 2 \mathring{\rho}')\overline{G_{2}}
 + \tfrac{1}{2} \edtBG \slashed{G}_{}
 + \tfrac{1}{2} G^{\#}_{2} \edtBG \overline{G_{2}}
 + \tfrac{1}{2} \edtpBG \overline{G_{2}}\nonumber\\*
& + \tfrac{1}{4} \slashed{G}^{\#} \bigl(\edtBG \slashed{G}_{} + \edtpBG \overline{G_{2}} -2 \overline{G_{1}} (\mathring{\rho}' -  \bar{\mathring{\rho}}')\bigr),\\
\widetilde\Gamma^{\vecMBackground}{}_{\vecMBackground \vecMbBackground}={}&- G_{1} \mathring{\rho}'
 -  \tfrac{1}{4} G^{\#}_{1} (\thopBG{} -  \mathring{\rho}' -  \bar{\mathring{\rho}}')\slashed{G}_{}
 -  \tfrac{1}{2} \edtBG G_{2}
 -  \tfrac{1}{4} \slashed{G}^{\#} (2 G_{1} \mathring{\rho}' + \edtBG G_{2})
 + \tfrac{1}{2} G^{\#}_{2} (2 \overline{G_{1}} \bar{\mathring{\rho}}' + \edtpBG \overline{G_{2}}),\\
\widetilde\Gamma^{\vecMBackground}{}_{\vecMbBackground \vecMbBackground}={}&\tfrac{1}{2} G^{\#}_{1} (\thopBG{} - 2 \bar{\mathring{\rho}}')G_{2}
 -  \tfrac{1}{2} G^{\#}_{2} \bigl(2 G_{1} (\mathring{\rho}' -  \bar{\mathring{\rho}}') + \edtBG G_{2} + \edtpBG \slashed{G}_{}\bigr)
 -  \tfrac{1}{2} \edtpBG G_{2}
 -  \tfrac{1}{4} \slashed{G}^{\#} \edtpBG G_{2}.
\end{align}
\end{subequations}

\section{Constraint equations arising from the structure and Ricci relations}\label{sec:ExtraStructureEqs}
The remaining structure equations take the form
\begin{subequations}
\label{eq:StructureSpincoeff1}
\begin{align}
i \edt \nu ={}&\tilde{\beta}
 + \overline{\tilde{\beta}'}
 -  \overline{\tilde{G}_{1}{}} (\mathring{\rho}'
 -  \bar{\mathring{\rho}}')
 -  \eta \bar{\mathring{\rho}}'
 + \frac{\varsigma}{2 \varsigma^{\#}{}} (\bar{\eta} \overline{\tilde{G}_{2}{}}
 + 2 \eta \varsigma^{\#}{}^2) (\mathring{\rho}'
 + \bar{\mathring{\rho}}')
 + \frac{\overline{\tilde{G}_{2}{}} \varsigma \edt \tilde{G}_{2}{}}{4 \varsigma^{\#}{}^3}
 + \frac{\edt \varsigma^{\#}{}}{\varsigma^{\#}{}}\nonumber\\
& + \frac{\varsigma \edtp \overline{\tilde{G}_{2}{}}}{2 \varsigma^{\#}{}},
\label{eq:ethnuTobeta}\\
i \tho \nu ={}&\tilde{\epsilon}
 -  \overline{\tilde{\epsilon}}
 -  \frac{\tilde{\slashed{G}} \varsigma}{4 \varsigma^{\#}{}} (\mathring{\rho}
 + \tilde{\rho}
 -  \overline{\tilde{\rho}}
 -  \bar{\mathring{\rho}})
 -  \tfrac{1}{2} \tilde{G}_{0}{} (\mathring{\rho}'
 -  \bar{\mathring{\rho}}')
 -  \frac{\varsigma}{2 \varsigma^{\#}{}} (\eta \tilde{G}_{1}{}
 -  \bar{\eta} \overline{\tilde{G}_{1}{}}) (\mathring{\rho}'
 + \bar{\mathring{\rho}}')\nonumber\\
&
 + \frac{\varsigma \mathring{\tau}'}{4 e^{i \nu} \varsigma^{\#}{}^2} (\bar{\eta} \overline{\tilde{G}_{2}{}} + 2 \eta \varsigma^{\#}{}^2)
 -  \frac{e^{i \nu} \varsigma \bar{\mathring{\tau}}'}{4 \varsigma^{\#}{}^2} (\eta \tilde{G}_{2}{} + 2 \bar{\eta} \varsigma^{\#}{}^2)
 + \frac{\eta \varsigma}{2 \varsigma^{\#}{}} (\overline{\tilde{\beta}} -  \tilde{\beta}' + \mathring{\tau}' + \tilde{\tau}')
 \nonumber\\
& -  \frac{\bar{\eta} \varsigma}{2 \varsigma^{\#}{}} (\tilde{\beta} -  \overline{\tilde{\beta}'} + \overline{\tilde{\tau}'} + \bar{\mathring{\tau}}')
 -  \frac{\varsigma (\edt{} -  \overline{\tilde{\tau}'} -  \bar{\mathring{\tau}}')\tilde{G}_{1}{}}{2 \varsigma^{\#}{}}
 + \frac{\varsigma (\edtp{} -  \mathring{\tau}' -  \tilde{\tau}')\overline{\tilde{G}_{1}{}}}{2 \varsigma^{\#}{}}
 + \frac{\overline{\tilde{G}_{2}{}} \varsigma \tho \tilde{G}_{2}{}}{8 \varsigma^{\#}{}^3}\nonumber\\
& -  \frac{\tilde{G}_{2}{} \varsigma \tho \overline{\tilde{G}_{2}{}}}{8 \varsigma^{\#}{}^3}
 -  \frac{\tilde{G}_{2}{} \varsigma \tilde{\sigma}}{2 \varsigma^{\#}{}}
  + \frac{\overline{\tilde{G}_{2}{}} \varsigma \overline{\tilde{\sigma}}}{2 \varsigma^{\#}{}} ,
\label{eq:thoetaToepsilon}\\
\edt \eta ={}&- \tfrac{1}{2} \overline{\tilde{G}_{2}{}} \mathring{\rho}
 + \overline{\tilde{G}_{2}{}} \overline{\tilde{\rho}}
 + \tfrac{1}{2} \overline{\tilde{G}_{2}{}} \bar{\mathring{\rho}}
 + \frac{\varsigma}{\varsigma^{\#}{}} \bigl((\bar{\eta} -  \tilde{G}_{1}{}) (\eta -  \overline{\tilde{G}_{1}{}}) \overline{\tilde{G}_{2}{}}
 + \tfrac{1}{2} (\eta -  \overline{\tilde{G}_{1}{}})^2 (2 + \tilde{\slashed{G}})\bigr) (\mathring{\rho}'
 -  \bar{\mathring{\rho}}')\nonumber\\
& + \frac{\overline{\tilde{G}_{2}{}} \varsigma^2}{4 \varsigma^{\#}{}^2} \bigl((\eta^2 -  \overline{\tilde{G}_{1}{}}^2) \tilde{G}_{2}{}
 + (\bar{\eta}^2 -  \tilde{G}_{1}{}^2) \overline{\tilde{G}_{2}{}}
 + (\eta \bar{\eta} -  \tilde{G}_{1}{} \overline{\tilde{G}_{1}{}}) (2 + \tilde{\slashed{G}})
 -  \frac{\tilde{G}_{0}{} \varsigma^{\#}{}^2}{\varsigma^2} \bigr) (\mathring{\rho}'
 + \bar{\mathring{\rho}}')\nonumber\\
& -  \tfrac{1}{2} (2
 + \tilde{\slashed{G}}) \tilde{\sigma}
 -  \frac{\eta \overline{\tilde{G}_{2}{}}}{2 e^{i \nu} \varsigma^{\#}{}} (\bar{\mathring{\tau}}
 -  \mathring{\tau}')
 + e^{i \nu} \eta \varsigma^{\#}{} (\mathring{\tau}
 -  \bar{\mathring{\tau}}')
 + (\eta
 -  \overline{\tilde{G}_{1}{}}) (\overline{\tilde{\tau}'}
 + \bar{\mathring{\tau}}')
 -  \tfrac{1}{2} \tho \overline{\tilde{G}_{2}{}}\nonumber\\
& + \edt \overline{\tilde{G}_{1}{}}
 -  \frac{\varsigma^2}{4 \varsigma^{\#}{}^2} (\bar{\eta}
 -  \tilde{G}_{1}{}) \bigl((2 + \tilde{\slashed{G}}) \edt \overline{\tilde{G}_{2}{}}
 - 2 \overline{\tilde{G}_{2}{}} (\edt \tilde{\slashed{G}} + \edtp \overline{\tilde{G}_{2}{}})\bigr)
\nonumber\\
& -  \frac{\varsigma^2}{4 \varsigma^{\#}{}^2} (\eta
  -  \overline{\tilde{G}_{1}{}}) \bigl(2 \tilde{G}_{2}{} \edt \overline{\tilde{G}_{2}{}}
 -  (2 + \tilde{\slashed{G}}) (\edt \tilde{\slashed{G}} + \edtp \overline{\tilde{G}_{2}{}})\bigr),\\
\edtp \eta -  \edt \bar{\eta}={}&- \tilde{\rho} + \overline{\tilde{\rho}}
 -  \frac{1}{2 e^{i \nu} \varsigma^{\#}{}} (\bar{\eta} \overline{\tilde{G}_{2}{}}
 + 2 \eta \varsigma^{\#}{}^2) (\bar{\mathring{\tau}}
 + \mathring{\tau}')
  + \frac{e^{i \nu}}{2 \varsigma^{\#}{}} (\eta \tilde{G}_{2}{}
 + 2 \bar{\eta} \varsigma^{\#}{}^2) (\mathring{\tau} + \bar{\mathring{\tau}}') 
\nonumber\\
& + \frac{\varsigma}{2 \varsigma^{\#}{}} \Bigl(
  \frac{\tilde{G}_{0}{} \varsigma^{\#}{}^2}{\varsigma^2} 
+(\overline{\tilde{G}_{1}{}}^2 - \eta^2) \tilde{G}_{2}{}
 + (\tilde{G}_{1}{}^2 - \bar{\eta}^2 ) \overline{\tilde{G}_{2}{}}
  + (\tilde{G}_{1}{} \overline{\tilde{G}_{1}{}} - \eta \bar{\eta}) (2 + \tilde{\slashed{G}})
 \Bigr) (\mathring{\rho}' -  \bar{\mathring{\rho}}')
 \nonumber\\
&
 +  \Bigl(\frac{\varsigma^{\#}{}}{\varsigma} -  1\Bigr) (\mathring{\rho} -  \bar{\mathring{\rho}})
 + \eta (\mathring{\tau}'  + \tilde{\tau}')
 -  \bar{\eta} (\overline{\tilde{\tau}'} + \bar{\mathring{\tau}}') ,\\
\edtp \eta + \edt \bar{\eta}={}&
 - \tfrac{1}{2} (2 + \tilde{\slashed{G}}) (\tilde{\rho}+ \overline{\tilde{\rho}})
 - \frac{\varsigma}{\varsigma^{\#}{}} \bigl((\eta -  \overline{\tilde{G}_{1}{}})^2 \tilde{G}_{2}{}
 - (\bar{\eta} -  \tilde{G}_{1}{})^2 \overline{\tilde{G}_{2}{}}\bigr) (\mathring{\rho}' -  \bar{\mathring{\rho}}')
 +\frac{\varsigma^2}{4 \varsigma^{\#}{}^2}(2 + \tilde{\slashed{G}})
\nonumber\\
& \qquad\times \Bigl((\overline{\tilde{G}_{1}{}}^2 - \eta^2) \tilde{G}_{2}{}
 + (\tilde{G}_{1}{}^2 - \bar{\eta}^2) \overline{\tilde{G}_{2}{}}
 + (\tilde{G}_{1}{} \overline{\tilde{G}_{1}{}}- \eta \bar{\eta}) (2 + \tilde{\slashed{G}})
 + \frac{\tilde{G}_{0}{} \varsigma^{\#}{}^2}{\varsigma^2} \Bigr) (\mathring{\rho}' + \bar{\mathring{\rho}}')\nonumber\\
&
 + \tilde{G}_{2}{} \tilde{\sigma}
 + \overline{\tilde{G}_{2}{}} \overline{\tilde{\sigma}}
 + (\eta -  \overline{\tilde{G}_{1}{}}) (\mathring{\tau}' + \tilde{\tau}')
 + (\bar{\eta} -  \tilde{G}_{1}{}) (\overline{\tilde{\tau}'} + \bar{\mathring{\tau}}')
 + \tfrac{1}{2} \tho \tilde{\slashed{G}}
 + \edt \tilde{G}_{1}{}
 + \edtp \overline{\tilde{G}_{1}{}}\nonumber\\
& + \frac{\varsigma^2 \edt \tilde{G}_{2}{}}{2 \varsigma^{\#}{}^2} \bigl(2 (\tilde{G}_{1}{} - \bar{\eta}) \overline{\tilde{G}_{2}{}}
 + (\overline{\tilde{G}_{1}{}}- \eta) (2 + \tilde{\slashed{G}})\bigr)
 + \frac{1}{2 e^{i \nu} \varsigma^{\#}{}} (2 \eta \varsigma^{\#}{}^2- \bar{\eta} \overline{\tilde{G}_{2}{}}) (\bar{\mathring{\tau}} -  \mathring{\tau}')
\nonumber\\
& + \frac{\varsigma^2 \edtp \overline{\tilde{G}_{2}{}}}{2 \varsigma^{\#}{}^2} \bigl(2 (\overline{\tilde{G}_{1}{}}- \eta) \tilde{G}_{2}{}
  + (\tilde{G}_{1}{} - \bar{\eta}) (2 + \tilde{\slashed{G}})\bigr)
  + \frac{e^{i \nu}}{2 \varsigma^{\#}{}} (2 \bar{\eta} \varsigma^{\#}{}^2 - \eta \tilde{G}_{2}{}) (\mathring{\tau} -  \bar{\mathring{\tau}}'),\\
\tho \eta ={}&- \eta \tilde{\epsilon}
 -  \eta \overline{\tilde{\epsilon}}
 -  \tilde{\kappa}
 + \overline{\tilde{G}_{1}{}} \overline{\tilde{\rho}}
 -  \tfrac{1}{2} \overline{\tilde{G}_{1}{}} (\mathring{\rho}
 -  \bar{\mathring{\rho}})
 + \tfrac{1}{2} \eta (\mathring{\rho}
 + \bar{\mathring{\rho}})
 + \tfrac{1}{4} \tilde{G}_{0}{} (\eta
 -  \overline{\tilde{G}_{1}{}}) (\mathring{\rho}' + \bar{\mathring{\rho}}')
 + \tilde{G}_{1}{} \tilde{\sigma}
\nonumber\\
&  + \bigl((\eta -  \overline{\tilde{G}_{1}{}})^2 \tilde{G}_{2}{}
 + (\bar{\eta} -  \tilde{G}_{1}{})^2 \overline{\tilde{G}_{2}{}}
 + (\bar{\eta} -  \tilde{G}_{1}{}) (\eta -  \overline{\tilde{G}_{1}{}}) (2 + \tilde{\slashed{G}})\bigr) (\frac{e^{i \nu} \varsigma^2 \mathring{\tau}}{2 \varsigma^{\#}{}}
 -  \frac{\overline{\tilde{G}_{2}{}} \varsigma^2 \bar{\mathring{\tau}}}{4 e^{i \nu} \varsigma^{\#}{}^3})\nonumber\\
& -  \frac{\tilde{G}_{0}{} \overline{\tilde{G}_{2}{}}}{4 e^{i \nu} \varsigma^{\#}{}} (\bar{\mathring{\tau}}
 + \mathring{\tau}')
 + \tfrac{1}{2} e^{i \nu} \tilde{G}_{0}{} \varsigma^{\#}{} (\mathring{\tau} + \bar{\mathring{\tau}}')
  + \tfrac{1}{2} (\edt
 - 2 \overline{\tilde{\tau}'}
 - 2 \bar{\mathring{\tau}}')\tilde{G}_{0}{}
\nonumber\\
& + \bigl(2 (\bar{\eta} -  \tilde{G}_{1}{}) \overline{\tilde{G}_{2}{}}
 + (\eta -  \overline{\tilde{G}_{1}{}}) (2 + \tilde{\slashed{G}})\bigr) \Bigl(\frac{\varsigma}{4 \varsigma^{\#}{}} (\mathring{\rho} -  \bar{\mathring{\rho}})
 -  \frac{3 \tilde{G}_{0}{} \varsigma}{8 \varsigma^{\#}{}} (\mathring{\rho}' -  \bar{\mathring{\rho}}')\nonumber\\
&\qquad + \frac{\varsigma^2}{8 \varsigma^{\#}{}^2} \bigl(
2 \overline{\tilde{G}_{2}{}} \overline{\tilde{\sigma}} -2 \tilde{G}_{2}{} \tilde{\sigma}  + (\tho{} -  \mathring{\rho} -  \tilde{\rho} + \overline{\tilde{\rho}} + \bar{\mathring{\rho}})\tilde{\slashed{G}} - 2 (\edt{} -  \overline{\tilde{\tau}'} -  \bar{\mathring{\tau}}')\tilde{G}_{1}{} 
\nonumber\\
&\qquad+ 2 (\edtp{} -  \mathring{\tau}' -  \tilde{\tau}')\overline{\tilde{G}_{1}{}}\bigr)\Bigr)
-  \frac{\varsigma^2 \tho \overline{\tilde{G}_{2}{}}}{4 \varsigma^{\#}{}^2} \bigl(2 (\eta -  \overline{\tilde{G}_{1}{}}) \tilde{G}_{2}{}
+ (\bar{\eta} -  \tilde{G}_{1}{}) (2 + \tilde{\slashed{G}})\bigr)
\nonumber\\
& + \bigl((\eta^2 -  \overline{\tilde{G}_{1}{}}^2) \tilde{G}_{2}{}
 + (\bar{\eta}^2 -  \tilde{G}_{1}{}^2) \overline{\tilde{G}_{2}{}}
 + (\eta \bar{\eta} -  \tilde{G}_{1}{} \overline{\tilde{G}_{1}{}}) (2 + \tilde{\slashed{G}})\bigr) 
 \nonumber\\
&\qquad\times\Bigl(\frac{\varsigma^3}{8 \varsigma^{\#}{}^3} \bigl(2 (\bar{\eta} -  \tilde{G}_{1}{}) \overline{\tilde{G}_{2}{}} + (\eta -  \overline{\tilde{G}_{1}{}}) (2 + \tilde{\slashed{G}})\bigr) (\mathring{\rho}' -  \bar{\mathring{\rho}}')
 + \frac{\varsigma^2}{4 \varsigma^{\#}{}^2} (\overline{\tilde{G}_{1}{}} - \eta ) (\mathring{\rho}' + \bar{\mathring{\rho}}')
\nonumber\\
&
\qquad\qquad + \frac{\overline{\tilde{G}_{2}{}} \varsigma^2 \mathring{\tau}'}{4 e^{i \nu} \varsigma^{\#}{}^3}
 -  \frac{e^{i \nu} \varsigma^2 \bar{\mathring{\tau}}'}{2 \varsigma^{\#}{}} \Bigr).
\end{align}
\end{subequations}
The remaining Ricci relations take the form
\begin{subequations}
\label{eq:ExtraRicci}
\begin{align}
\edtp \tilde{\rho}' -  \edt \tilde{\sigma}'={}&- \tilde{\Psi}_{3}{}
 + \frac{\varsigma^{\#}{} \mathring{\rho}'}{\varsigma} (\tilde{G}^{\#}_1{} -  \eta \tilde{G}^{\#}_2{}) (\bar{\mathring{\rho}}' -  \mathring{\rho}')
 -  \bar{\eta} \mathring{\rho}' \bigl(2 \varsigma^{\#}{} \varsigma (\mathring{\rho}' -  \bar{\mathring{\rho}}') + \bar{\mathring{\rho}}'\bigr)
 -  \frac{e^{i \nu} \tilde{G}^{\#}_2{} \varsigma^{\#}{} \mathring{\tau}}{2 \varsigma^2} (2 \mathring{\rho}' -  \bar{\mathring{\rho}}')
 \nonumber\\*
&+ \bigl(2 \mathring{\rho}' + \tilde{\rho}' -  \overline{\tilde{\rho}'} -  \bar{\mathring{\rho}}' + \frac{\varsigma^{\#}{}}{e^{i \nu}} (-2 \mathring{\rho}' + \bar{\mathring{\rho}}')\bigr) \mathring{\tau}'
 + (2 \mathring{\rho}' + \tilde{\rho}' -  \overline{\tilde{\rho}'} -  \bar{\mathring{\rho}}') \tilde{\tau}',\\
\tho \tilde{\sigma}' -  \edtp \tilde{\tau}'={}&(\mathring{\rho}' + \tilde{\rho}') \overline{\tilde{\sigma}}
 + (\mathring{\rho} + \tilde{\rho}) \tilde{\sigma}'
 - \Bigl(1 - \frac{\varsigma^{\#}{}}{e^{i \nu}}\Bigr) \mathring{\tau}'^2
 + \bar{\eta} \mathring{\rho}' (\mathring{\tau}' - \bar{\mathring{\tau}} )
 -  \tilde{\tau}'^2
 -  \mathring{\tau}' (\overline{\tilde{\beta}} + \tilde{\beta}' + 2 \tilde{\tau}')\nonumber\\*
& + \frac{e^{i \nu} \tilde{G}^{\#}_2{} \varsigma^{\#}{} \mathring{\edt}' \mathring{\tau}}{2 \varsigma^2} ,\\
\tho \tilde{\rho}' -  \edt \tilde{\tau}'={}&- \tilde{\Psi}_{2}{}
 -  (\tilde{\epsilon} + \overline{\tilde{\epsilon}} -  \overline{\tilde{\rho}}) \mathring{\rho}'
 + \tfrac{1}{2} \bigl(\tilde{G}^{\#}_0{} + \eta^2 \tilde{G}^{\#}_2{} + \bar{\eta}^2 \overline{\tilde{G}}^{\#}_2{} -  \eta \bar{\eta} (2 + \tilde{\slashed{G}}^{\#}{})\bigr) \mathring{\rho}'^2
 + (\overline{\tilde{\rho}} + \bar{\mathring{\rho}}) \tilde{\rho}'
 + \tilde{\sigma} \tilde{\sigma}'\nonumber\\
& + \frac{\overline{\tilde{G}}^{\#}_2{} \varsigma^{\#}{} \mathring{\tau}'^2}{2 e^{i \nu} \varsigma^2}
 -  \overline{\tilde{G}}^{\#}_1{} \varsigma^{\#}{} \mathring{\rho}' \bigl(\frac{e^{i \nu} \tilde{G}^{\#}_2{} \mathring{\tau}}{\varsigma^2} + \frac{1}{e^{i \nu}} (\bar{\mathring{\tau}} + \mathring{\tau}')\bigr)
 -  \tfrac{1}{2} \tilde{G}^{\#}_1{} \varsigma^{\#}{} \mathring{\rho}' \bigl(4 e^{i \nu} \mathring{\tau} + \frac{\overline{\tilde{G}}^{\#}_2{}}{e^{i \nu} \varsigma^2} (\bar{\mathring{\tau}} + \mathring{\tau}')\bigr)\nonumber\\
& + \tfrac{1}{2} \bar{\eta} \mathring{\rho}' \bigl(-2 \overline{\tilde{G}}^{\#}_1{} \mathring{\rho}' - 4 e^{i \nu} \varsigma \mathring{\tau} + \frac{\overline{\tilde{G}}^{\#}_2{}}{e^{i \nu} \varsigma} (\bar{\mathring{\tau}} + \mathring{\tau}')\bigr)
 -  (1 -  e^{i \nu} \varsigma^{\#}{}) \mathring{\edt}' \mathring{\tau}
  -  \tilde{\tau}' \overline{\tilde{\tau}'}
 -  \tilde{\tau}' \bar{\mathring{\tau}}'\nonumber\\
& + \eta \mathring{\rho}' \bigl(- \tilde{G}^{\#}_1{} \mathring{\rho}' + \frac{e^{i \nu} \tilde{G}^{\#}_2{} \mathring{\tau}}{\varsigma} -  \bar{\mathring{\tau}} + \mathring{\tau}' -  \frac{\varsigma}{e^{i \nu}} (\bar{\mathring{\tau}} + \mathring{\tau}')\bigr)
 + \mathring{\tau}' (\tilde{\beta} + \overline{\tilde{\beta}'} -  \overline{\tilde{\tau}'}),\\
\tho \tilde{\rho} -  \edtp \tilde{\kappa}={}&\tilde{\epsilon} \mathring{\rho}
 + \overline{\tilde{\epsilon}} \mathring{\rho}
 + 2 \mathring{\rho} \tilde{\rho}
 + \tilde{\rho}^2
 + \tilde{\sigma} \overline{\tilde{\sigma}}
 -  \overline{\tilde{\kappa}} \mathring{\tau}
 -  \frac{e^{i \nu} \mathring{\tau}}{2 \varsigma^2} (\overline{\tilde{G}}^{\#}_1{} \tilde{G}^{\#}_2{} \varsigma^{\#}{} -  \eta \tilde{G}^{\#}_2{} \varsigma + 2 \tilde{G}^{\#}_1{} \varsigma^{\#}{} \varsigma^2 + 2 \bar{\eta} \varsigma^3) (\mathring{\rho} -  \bar{\mathring{\rho}})\nonumber\\
& -  \frac{\mathring{\rho} \mathring{\tau}'}{e^{i \nu} \varsigma^2} (\tilde{G}^{\#}_1{} \overline{\tilde{G}}^{\#}_2{} \varsigma^{\#}{} -  \bar{\eta} \overline{\tilde{G}}^{\#}_2{} \varsigma + 2 \overline{\tilde{G}}^{\#}_1{} \varsigma^{\#}{} \varsigma^2 + 2 \eta \varsigma^3)
 -  \tilde{\kappa} (\mathring{\tau}' + \tilde{\tau}')\nonumber\\
& + \tfrac{1}{2} (-2 \eta \bar{\eta} + \tilde{G}^{\#}_0{} - 2 \eta \tilde{G}^{\#}_1{} - 2 \bar{\eta} \overline{\tilde{G}}^{\#}_1{} + \eta^2 \tilde{G}^{\#}_2{} + \bar{\eta}^2 \overline{\tilde{G}}^{\#}_2{} -  \eta \bar{\eta} \tilde{\slashed{G}}^{\#}{}) \thoBG \mathring{\rho}',\\
\tho \tilde{\sigma} -  \edt \tilde{\kappa}={}&\tilde{\Psi}_{0}{}
 + (\mathring{\rho} + \tilde{\rho} + \overline{\tilde{\rho}} + \bar{\mathring{\rho}}) \tilde{\sigma}
 -  \tilde{\kappa} (\mathring{\tau} + \overline{\tilde{\tau}'} + \bar{\mathring{\tau}}'),\\
\tho \tilde{\beta} -  \edt \tilde{\epsilon}={}&
  \tilde{\Psi}_{1}{}
 - \eta \mathring{\Psi}_{2}{}
 + \tilde{\beta} (\overline{\tilde{\rho}} + \bar{\mathring{\rho}})
 + \tilde{\kappa} \mathring{\rho}'
 -  \frac{\varsigma^{\#}{}}{\varsigma} (\eta + \overline{\tilde{G}}^{\#}_1{} -  \bar{\eta} \overline{\tilde{G}}^{\#}_2{} + \tfrac{1}{2} \eta \tilde{\slashed{G}}^{\#}{}) (\mathring{\Psi}_{2}{} + \mathring{\rho} \mathring{\rho}')
 + \tilde{\kappa} \tilde{\rho}'\nonumber\\
& + \tfrac{1}{2} e^{i \nu} \tilde{G}^{\#}_0{} \varsigma^{\#}{} \mathring{\rho}' \mathring{\tau}
 + \frac{\overline{\tilde{G}}^{\#}_2{} \varsigma^{\#}{} \mathring{\rho} \mathring{\tau}'}{2 e^{i \nu} \varsigma^2}
 + \eta \mathring{\tau} \mathring{\tau}'
 -  \tilde{\sigma} (\tilde{\beta}' + \mathring{\tau}' + \tilde{\tau}')
 -  \tilde{\epsilon} (\overline{\tilde{\tau}'} + \bar{\mathring{\tau}}')\nonumber\\
& + \frac{e^{i \nu} \mathring{\rho}' \mathring{\tau}}{2 \varsigma^2} \bigl(\eta \overline{\tilde{G}}^{\#}_1{} \tilde{G}^{\#}_2{} \varsigma^{\#}{} + \bar{\eta} (\bar{\eta} \overline{\tilde{G}}^{\#}_2{} - 2 \overline{\tilde{G}}^{\#}_1{}) \varsigma^{\#}{} \varsigma^2 - (1 - \varsigma^{\#}{} \varsigma) (\eta^2 \tilde{G}^{\#}_2{} \varsigma - 4 \eta \bar{\eta} \varsigma^3)\bigr),\\
\tho \tilde{\beta}' + \edtp \tilde{\epsilon}={}&\bar{\eta} \mathring{\Psi}_{2}{}
 + \tilde{\beta}' (\mathring{\rho} + \tilde{\rho})
 -  \frac{\varsigma^{\#}{}}{\varsigma} (\bar{\eta} + \tilde{G}^{\#}_1{} -  \eta \tilde{G}^{\#}_2{} + \tfrac{1}{2} \bar{\eta} \tilde{\slashed{G}}^{\#}{}) (\mathring{\Psi}_{2}{} + \mathring{\rho} \mathring{\rho}')
 -  \tilde{\beta} \overline{\tilde{\sigma}}
 -  \tilde{\kappa} \tilde{\sigma}'\nonumber\\
& -  \frac{e^{i \nu} \tilde{G}^{\#}_0{} \tilde{G}^{\#}_2{} \varsigma^{\#}{} \mathring{\rho}' \mathring{\tau}}{4 \varsigma^2}
 +  \frac{e^{i \nu} \varsigma^{\#}{} \mathring{\rho}' \mathring{\tau}}{4 \varsigma^2} (\eta \tilde{G}^{\#}_2{} - 2 \bar{\eta} \varsigma^2) (2 \tilde{G}^{\#}_1{} - \eta \tilde{G}^{\#}_2{} + 2 \bar{\eta} \varsigma^2)
 + \Bigl(1 -  \frac{\varsigma^{\#}{}}{e^{i \nu}}\Bigr) \mathring{\rho} \mathring{\tau}'\nonumber\\
& + \tilde{\rho} \mathring{\tau}'
 -  \bar{\eta} \mathring{\tau} \mathring{\tau}'
 + \mathring{\rho} \tilde{\tau}'
 + \tilde{\rho} \tilde{\tau}'
 + \tilde{\epsilon} (\mathring{\tau}' + \tilde{\tau}'),\\
\edt \tilde{\beta}' + \edtp \tilde{\beta}={}&\tilde{\Psi}_{2}{}
 + \tilde{\rho} \mathring{\rho}'
 + \Bigl(1 -  \frac{\varsigma^{\#}{}}{\varsigma}\Bigr) (\mathring{\Psi}_{2}{} + \mathring{\rho} \mathring{\rho}')
 + (\mathring{\rho} + \tilde{\rho}) \tilde{\rho}'
 + \frac{\varsigma^{\#}{} \tilde{\epsilon}}{\varsigma} (\mathring{\rho}' -  \bar{\mathring{\rho}}')
 -  \tilde{\sigma} \tilde{\sigma}'\nonumber\\
& + \frac{e^{i \nu} \varsigma^{\#}{} \mathring{\rho}' \mathring{\tau}}{2 \varsigma^2} (\eta \tilde{G}^{\#}_2{} - 2 \bar{\eta} \varsigma^2),\\
\edt \tilde{\rho} -  \edtp \tilde{\sigma}={}&- \tilde{\Psi}_{1}{}
 + \tilde{\beta} \mathring{\rho}
 -  \overline{\tilde{\beta}'} \mathring{\rho}
 -  \tilde{\kappa} (\mathring{\rho}' + \tilde{\rho}' -  \overline{\tilde{\rho}'} -  \bar{\mathring{\rho}}')
 + (\tilde{\rho} -  \overline{\tilde{\rho}}) \mathring{\tau}
 +  (1 - e^{i \nu} \varsigma^{\#}{}) (\mathring{\rho} -  \bar{\mathring{\rho}}) \mathring{\tau}\nonumber\\
& -  \frac{\overline{\tilde{G}}^{\#}_2{} \varsigma^{\#}{} \mathring{\rho} \mathring{\tau}'}{e^{i \nu} \varsigma^2}
 -  \eta \thoBG \mathring{\rho}',
\end{align}
\end{subequations}
where the background derivatives of background spin coefficients are given by \eqref{eq:edtpBGtauBG} and
\begin{align}
\thoBG \mathring{\rho}'={}&- \tfrac{1}{2} \mathring{\Psi}_{2}{}
 -  \frac{\bar{\mathring{\Psi}}_{2}{} \bar{\kappa}_{1'}{}}{2 \kappa_{1}{}}
 + \mathring{\rho} \mathring{\rho}'
 -  \mathring{\tau} \bar{\mathring{\tau}}
 + \mathring{\tau} \mathring{\tau}'.
\end{align}

%\section{Extra linearized equations}
\section{Linearized constraint equations}
The linearization of the constraint equations \eqref{eq:AlgebraicSpincoeff1}, \eqref{eq:StructureSpincoeff1} are equivalent to
\begin{subequations}
\label{eq:extraLinearizedStructure}
\begin{align}
\dot{\tilde{\kappa}}'={}&0,\qquad
\dot{\tilde{\epsilon}}'={}0,\quad
\dot{\tilde{\tau}}={}0,\\
\dot{\tilde{\rho}}' -  \overline{\dot{\tilde{\rho}}'}={}&\tfrac{1}{2} \LinGTr (- \mathring{\rho}'
 + \bar{\mathring{\rho}}'),\\
\dot{\tilde{\beta}} -  \overline{\dot{\tilde{\beta}}'}={}&- \dot{\eta} \mathring{\rho}'
 -  \LinGOneDg \mathring{\rho}'
 + \dot{\eta} \bar{\mathring{\rho}}'
 + \LinGOneDg \bar{\mathring{\rho}}'
 + \tfrac{1}{2} \LinGTwoDg \mathring{\tau}'
 -  \overline{\dot{\tilde{\tau}}'}
 + i \dot{\nu} \bar{\mathring{\tau}}'
 -  \tfrac{1}{4} \LinGTr \bar{\mathring{\tau}}',\\
\dot{\tilde{\rho}} -  \overline{\dot{\tilde{\rho}}}={}&- \tfrac{1}{2} \LinGTr (\mathring{\rho}
 -  \bar{\mathring{\rho}})
 -  \tfrac{1}{2} \LinGZero (\mathring{\rho}'
 -  \bar{\mathring{\rho}}')
 + \overline{\dot{\eta}} \mathring{\tau}
 -  \dot{\eta} \bar{\mathring{\tau}}
 + \edtBG \overline{\dot{\eta}}
 -  \edtpBG \dot{\eta},\\
\dot{\tilde{\kappa}}={}&\dot{\eta} \mathring{\rho}
 + \LinGOneDg \mathring{\rho}
 -  \LinGOneDg \bar{\mathring{\rho}}
 -  \tfrac{1}{2} \LinGZero (\mathring{\tau}
 -  \bar{\mathring{\tau}}')
 -  \thoBG \dot{\eta}
 -  \tfrac{1}{2} \edtBG \LinGZero,\\
\dot{\tilde{\epsilon}} -  \overline{\dot{\tilde{\epsilon}}}={}&- \tfrac{1}{4} \LinGTr (\mathring{\rho}
 -  \bar{\mathring{\rho}})
 -  \tfrac{1}{2} \LinGZero (\mathring{\rho}'
 -  \bar{\mathring{\rho}}')
 -  \dot{\eta} \mathring{\tau}'
 -  \tfrac{1}{2} \LinGOneDg \mathring{\tau}'
 + \overline{\dot{\eta}} \bar{\mathring{\tau}}'
 + \tfrac{1}{2} \LinGOne \bar{\mathring{\tau}}'
 + i \thoBG \dot{\nu}
 -  \tfrac{1}{2} \edtBG \LinGOne\nonumber\\
& + \tfrac{1}{2} \edtpBG \LinGOneDg,\\
\dot{\tilde{\beta}}'={}&- \tfrac{1}{2} \overline{\dot{\eta}} \mathring{\rho}'
 + \tfrac{1}{2} i \dot{\nu} \mathring{\tau}'
 + \tfrac{1}{8} \LinGTr \mathring{\tau}'
 + \tfrac{1}{2} \dot{\tilde{\tau}}'
 -  \tfrac{1}{4} \LinGTwo \bar{\mathring{\tau}}'
 + \tfrac{1}{4} \edtBG \LinGTwo
 -  \tfrac{1}{2} i \edtpBG \dot{\nu}
 + \tfrac{1}{8} \edtpBG \LinGTr,\\
\dot{\tilde{\rho}}={}&- \tfrac{1}{4} \LinGTr (\mathring{\rho}
 -  \bar{\mathring{\rho}})
 -  \tfrac{1}{2} \LinGZero \mathring{\rho}'
 + \overline{\dot{\eta}} \mathring{\tau}
 + \tfrac{1}{2} \LinGOneDg \mathring{\tau}'
 + \tfrac{1}{2} \LinGOne \bar{\mathring{\tau}}'
 -  \tfrac{1}{4} \thoBG \LinGTr
 -  \tfrac{1}{2} \edtBG \LinGOne
 -  \edtpBG \dot{\eta}
 -  \tfrac{1}{2} \edtpBG \LinGOneDg,\\
\dot{\tilde{\sigma}}={}&\tfrac{1}{2} \LinGTwoDg (\mathring{\rho}
 -  \bar{\mathring{\rho}})
 + \dot{\eta} \mathring{\tau}
 + \LinGOneDg \bar{\mathring{\tau}}'
 + \tfrac{1}{2} \thoBG \LinGTwoDg
 -  \edtBG \dot{\eta}
 -  \edtBG \LinGOneDg.
\end{align}
\end{subequations}
The linearization of the Ricci relations \eqref{eq:ExtraRicci} are
\begin{subequations}
\label{eq:extraLinearizedRicci}
\begin{align}
\edtpBG \dot{\tilde{\rho}}' -  \edtBG \dot{\tilde{\sigma}}'={}&- \dot{\Psi}_{3}{}
 -  \LinGOne \mathring{\rho}' (\mathring{\rho}' -  \bar{\mathring{\rho}}')
 -  \overline{\dot{\eta}} \mathring{\rho}' (2 \mathring{\rho}' -  \bar{\mathring{\rho}}')
 -  \tfrac{1}{2} \LinGTwo (2 \mathring{\rho}' -  \bar{\mathring{\rho}}') \mathring{\tau}
 + \dot{\tilde{\rho}}' \mathring{\tau}'
 -  \overline{\dot{\tilde{\rho}}'} \mathring{\tau}'\nonumber\\
& + i \dot{\nu} (2 \mathring{\rho}' -  \bar{\mathring{\rho}}') \mathring{\tau}'
 + \tfrac{1}{4} \LinGTr (2 \mathring{\rho}' -  \bar{\mathring{\rho}}') \mathring{\tau}'
 + 2 \mathring{\rho}' \dot{\tilde{\tau}}'
 -  \bar{\mathring{\rho}}' \dot{\tilde{\tau}}',\\
\thoBG \dot{\tilde{\sigma}}' -  \edtpBG \dot{\tilde{\tau}}'={}&\mathring{\rho}' \overline{\dot{\tilde{\sigma}}}
 + \mathring{\rho} \dot{\tilde{\sigma}}'
 -  \overline{\dot{\eta}} \mathring{\rho}' (\bar{\mathring{\tau}} -  \mathring{\tau}')
 -  \overline{\dot{\tilde{\beta}}} \mathring{\tau}'
 -  \dot{\tilde{\beta}}' \mathring{\tau}'
 - i \dot{\nu} \mathring{\tau}'^2
 -  \tfrac{1}{4} \LinGTr \mathring{\tau}'^2
 - 2 \mathring{\tau}' \dot{\tilde{\tau}}'
 + \tfrac{1}{2} \LinGTwo \edtpBG \mathring{\tau},\\
\thoBG \dot{\tilde{\rho}}' -  \edtBG \dot{\tilde{\tau}}'={}&- \dot{\Psi}_{2}{}
 -  \dot{\tilde{\epsilon}} \mathring{\rho}'
 -  \overline{\dot{\tilde{\epsilon}}} \mathring{\rho}'
 + \overline{\dot{\tilde{\rho}}} \mathring{\rho}'
 + \tfrac{1}{2} \LinGZero \mathring{\rho}'^2
 + \bar{\mathring{\rho}} \dot{\tilde{\rho}}'
 - 2 \overline{\dot{\eta}} \mathring{\rho}' \mathring{\tau}
 - 2 \LinGOne \mathring{\rho}' \mathring{\tau}
 - 2 \dot{\eta} \mathring{\rho}' \bar{\mathring{\tau}}
 + \dot{\tilde{\beta}} \mathring{\tau}'
 + \overline{\dot{\tilde{\beta}}'} \mathring{\tau}'\nonumber\\
& + \tfrac{1}{2} \LinGTwoDg \mathring{\tau}'^2
 -  \LinGOneDg \mathring{\rho}' (\bar{\mathring{\tau}} + \mathring{\tau}')
 -  \mathring{\tau}' \overline{\dot{\tilde{\tau}}'}
 -  \dot{\tilde{\tau}}' \bar{\mathring{\tau}}'
 + i \dot{\nu} \edtpBG \mathring{\tau}
 -  \tfrac{1}{4} \LinGTr \edtpBG \mathring{\tau},\\
\thoBG \dot{\tilde{\rho}} -  \edtpBG \dot{\tilde{\kappa}}={}&\dot{\tilde{\epsilon}} \mathring{\rho}
 + \overline{\dot{\tilde{\epsilon}}} \mathring{\rho}
 + 2 \mathring{\rho} \dot{\tilde{\rho}}
 -  \overline{\dot{\tilde{\kappa}}} \mathring{\tau}
 -  (\overline{\dot{\eta}} + \LinGOne) (\mathring{\rho} -  \bar{\mathring{\rho}}) \mathring{\tau}
 -  \dot{\tilde{\kappa}} \mathring{\tau}'
 - 2 \dot{\eta} \mathring{\rho} \mathring{\tau}'
 - 2 \LinGOneDg \mathring{\rho} \mathring{\tau}'
 + \tfrac{1}{2} \LinGZero \thoBG \mathring{\rho}',\\
\thoBG \dot{\tilde{\sigma}} -  \edtBG \dot{\tilde{\kappa}}={}&\dot{\Psi}_{0}{}
 + \mathring{\rho} \dot{\tilde{\sigma}}
 + \bar{\mathring{\rho}} \dot{\tilde{\sigma}}
 -  \dot{\tilde{\kappa}} (\mathring{\tau} + \bar{\mathring{\tau}}'),\\
\thoBG \dot{\tilde{\beta}} -  \edtBG \dot{\tilde{\epsilon}}={}&\dot{\Psi}_{1}{}
 + \dot{\tilde{\beta}} \bar{\mathring{\rho}}
 + \dot{\tilde{\kappa}} \mathring{\rho}'
 -  \LinGOneDg (\mathring{\Psi}_{2}{} + \mathring{\rho} \mathring{\rho}')
 + \tfrac{1}{2} \LinGZero \mathring{\rho}' \mathring{\tau}
 + \tfrac{1}{2} \LinGTwoDg \mathring{\rho} \mathring{\tau}'
 -  \dot{\tilde{\sigma}} \mathring{\tau}'
 -  \dot{\eta} (2 \mathring{\Psi}_{2}{} + \mathring{\rho} \mathring{\rho}' -  \mathring{\tau} \mathring{\tau}')\nonumber\\
& -  \dot{\tilde{\epsilon}} \bar{\mathring{\tau}}',\\
\thoBG \dot{\tilde{\beta}}' + \edtpBG \dot{\tilde{\epsilon}}={}&\dot{\tilde{\beta}}' \mathring{\rho}
 -  \LinGOne (\mathring{\Psi}_{2}{} + \mathring{\rho} \mathring{\rho}')
 + \dot{\tilde{\epsilon}} \mathring{\tau}'
 + i \dot{\nu} \mathring{\rho} \mathring{\tau}'
 + \tfrac{1}{4} \LinGTr \mathring{\rho} \mathring{\tau}'
 + \dot{\tilde{\rho}} \mathring{\tau}'
 -  \overline{\dot{\eta}} (\mathring{\rho} \mathring{\rho}' + \mathring{\tau} \mathring{\tau}')
 + \mathring{\rho} \dot{\tilde{\tau}}',\\
\edtBG \dot{\tilde{\beta}}' + \edtpBG \dot{\tilde{\beta}}={}&\dot{\Psi}_{2}{}
 + \dot{\tilde{\rho}} \mathring{\rho}'
 + \tfrac{1}{2} \LinGTr (\mathring{\Psi}_{2}{} + \mathring{\rho} \mathring{\rho}')
 + \mathring{\rho} \dot{\tilde{\rho}}'
 + \dot{\tilde{\epsilon}} (\mathring{\rho}' -  \bar{\mathring{\rho}}')
 -  \overline{\dot{\eta}} \mathring{\rho}' \mathring{\tau},\\
\edtBG \dot{\tilde{\rho}} -  \edtpBG \dot{\tilde{\sigma}}={}&- \dot{\Psi}_{1}{}
 + \dot{\tilde{\beta}} \mathring{\rho}
 -  \overline{\dot{\tilde{\beta}}'} \mathring{\rho}
 -  \dot{\tilde{\kappa}} (\mathring{\rho}' -  \bar{\mathring{\rho}}')
 + \dot{\tilde{\rho}} \mathring{\tau}
 -  \overline{\dot{\tilde{\rho}}} \mathring{\tau}
 + \tfrac{1}{4} (-4i \dot{\nu} + \LinGTr) (\mathring{\rho} -  \bar{\mathring{\rho}}) \mathring{\tau}
 -  \LinGTwoDg \mathring{\rho} \mathring{\tau}'
 -  \dot{\eta} \thoBG \mathring{\rho}'.
\end{align}
\end{subequations}

\section{Translation between this paper and \texorpdfstring{\cite{Andersson:2019dwi}}{[1]}}
If we impose
\begin{align}
\LinGTr={}&0,&
\dot{\tilde{\rho}}'={}&0,&
\dot{\nu}={}&0,
\end{align}
it is possible to translate to the variables in \cite{Andersson:2019dwi} using the relations in table \ref {table:comparison} as well as background GHP commutator relations to eliminate all second-order derivatives.
In particular, equations \eqref{eq:ThopFGDiffLLinEq1}, \eqref{eq:ThopLinG}, and \eqref{eq:ThopLinConnection} translate to
\begin{subequations}
\label{eq:translatedLinThop}
\begin{align}
\thopBG \dot{\eta}={}&- \overline{\tilde{\beta}'}
 + \tilde{\beta}
 -  \tfrac{1}{2} G_{01'} \mathring{\rho}'
 + \tfrac{1}{2} G_{02'} \bar{\mathring{\tau}},
\label{eq:translatedEta}\\
\thopBG G_{20'}={}&G_{20'} \bar{\mathring{\rho}}'
 + 2 \tilde{\sigma}',\\
\thopBG G_{10'}={}&G_{10'} \bar{\mathring{\rho}}'
 -  G_{20'} \mathring{\tau}
 + 2 \tilde{\tau}',\\
\thopBG G_{00'}={}&-2 \tilde{\epsilon}
 - 2 \overline{\tilde{\epsilon}}
 - 2 G_{10'} \mathring{\tau}
 - 2 G_{01'} \bar{\mathring{\tau}}
 + G_{01'} \mathring{\tau}'
 + G_{10'} \bar{\mathring{\tau}}',\\
\thopBG \tilde{\sigma}'={}&\vartheta \Psi_{4}{}
 + \bar{\mathring{\rho}}' \tilde{\sigma}',\\
\thopBG \tilde{\tau}'={}&\vartheta \Psi_{3}{}
 + \tilde{\beta}' \mathring{\rho}'
 -  \overline{\tilde{\beta}} \mathring{\rho}'
 -  \tfrac{1}{2} G_{10'} \mathring{\rho}' \bar{\mathring{\rho}}'
 -  \tilde{\sigma}' \mathring{\tau}
 + \tfrac{1}{2} G_{20'} \mathring{\rho}' \bar{\mathring{\tau}}'
 + \tilde{\sigma}' \bar{\mathring{\tau}}'
 -  \mathring{\rho}' \tilde{\tau}',\\
\thopBG \tilde{\beta}'={}&\vartheta \Psi_{3}{}
 + \tilde{\beta}' \bar{\mathring{\rho}}'
 -  \tilde{\sigma}' \mathring{\tau},\\
\thopBG \tilde{\beta}={}&- \overline{\tilde{\beta}'} \mathring{\rho}'
 + 2 \tilde{\beta} \mathring{\rho}'
 + \mathring{\rho}' \overline{\tilde{\tau}'},\\
\thopBG \tilde{\epsilon}={}&- \vartheta \Psi_{2}{}
 + \tilde{\beta}' \mathring{\tau}
 -  \tilde{\beta} \bar{\mathring{\tau}}
 -  \overline{\tilde{\beta}'} \mathring{\tau}'
 + 2 \tilde{\beta} \mathring{\tau}'
 -  \tilde{\beta}' \bar{\mathring{\tau}}'
 + \mathring{\tau} \tilde{\tau}'
 + \mathring{\tau}' \overline{\tilde{\tau}'},\\
\thopBG \tilde{\rho}={}&- \vartheta \Psi_{2}{}
 + \bar{\mathring{\rho}}' \tilde{\rho}
 + 2 \tilde{\beta}' \mathring{\tau}
 + \overline{\tilde{\beta}'} \bar{\mathring{\tau}}
 -  \tilde{\beta} \bar{\mathring{\tau}}
 -  \bar{\mathring{\tau}} \overline{\tilde{\tau}'}
 -  \edtpBG\overline{\tilde{\beta}'}
 + \edtpBG\tilde{\beta}
 + \edtpBG\overline{\tilde{\tau}'},\\
\thopBG \tilde{\sigma}={}&\tfrac{1}{2} G_{02'} \mathring{\Psi}_{2}{}
 + \mathring{\rho}' \tilde{\sigma}
 + \overline{\tilde{\beta}'} \mathring{\tau}
 - 3 \tilde{\beta} \mathring{\tau}
 -  \mathring{\tau} \overline{\tilde{\tau}'}
 -  \edtBG\overline{\tilde{\beta}'}
 + \edtBG\tilde{\beta}
 + \edtBG\overline{\tilde{\tau}'},\\
\thopBG \tilde{\kappa}={}&\tfrac{1}{2} G_{01'} \mathring{\Psi}_{2}{}
 -  \vartheta \Psi_{1}{}
 + \overline{\tilde{\beta}'} \mathring{\rho}
 -  \tilde{\beta} \mathring{\rho}
 -  \tfrac{1}{2} G_{01'} \mathring{\rho} \mathring{\rho}'
 - 2 \tilde{\epsilon} \mathring{\tau}
 -  \tilde{\rho} \mathring{\tau}
 -  \tilde{\sigma} \bar{\mathring{\tau}}
 + \tfrac{1}{2} G_{02'} \mathring{\rho} \mathring{\tau}'
 + \tilde{\sigma} \mathring{\tau}'
 + \tilde{\rho} \bar{\mathring{\tau}}'
 -  \mathring{\rho} \overline{\tilde{\tau}'}\nonumber\\*
& -  \thoBG \overline{\tilde{\beta}'}
 + \thoBG \tilde{\beta}
 + \thoBG \overline{\tilde{\tau}'}.
\end{align}
\end{subequations}
The linearized structure equations \eqref{eq:extraLinearizedStructure} are equivalent to
\begin{subequations}
\label{eq:translatedLinStructure}
\begin{align}
\tilde{\kappa}'={}&0,\qquad
\tilde{\epsilon}'={}0,\\
\tilde{\tau}={}&- \tfrac{1}{2} G_{01'} \mathring{\rho}'
 + \tfrac{1}{2} G_{02'} \mathring{\tau}',\\
\tilde{\beta} - \overline{\tilde{\beta}'}={}&- \tfrac{1}{2} G_{01'} \mathring{\rho}'
 + \tfrac{1}{2} G_{02'} \mathring{\tau}'
 -  \overline{\tilde{\tau}'},\\
\tilde{\rho} -  \overline{\tilde{\rho}}={}&- \tfrac{1}{2} G_{00'} \mathring{\rho}'
 + \tfrac{1}{2} G_{00'} \bar{\mathring{\rho}}'
 -  \tfrac{1}{2} \edtBG G_{10'}
 + \tfrac{1}{2} \edtpBG G_{01'},\\
\tilde{\kappa}={}&- G_{01'} \bar{\mathring{\rho}}
 + \tfrac{1}{2} G_{00'} \bar{\mathring{\tau}}'
 + \tfrac{1}{2} \thoBG G_{01'}
 -  \tfrac{1}{2} \edtBG G_{00'},\\
\tilde{\epsilon} -  \overline{\tilde{\epsilon}}={}&- \tfrac{1}{2} G_{00'} \mathring{\rho}'
 + \tfrac{1}{2} G_{00'} \bar{\mathring{\rho}}'
 -  \tfrac{1}{2} \edtBG G_{10'}
 + \tfrac{1}{2} \edtpBG G_{01'},\\
\tilde{\beta}'={}&\tfrac{1}{2} G_{10'} \mathring{\rho}'
 -  \tfrac{1}{4} G_{20'} \bar{\mathring{\tau}}'
 + \tfrac{1}{2} \tilde{\tau}'
 + \tfrac{1}{4} \edtBG G_{20'},\\
\tilde{\rho}={}&- \tfrac{1}{2} G_{00'} \mathring{\rho}'
 + \tfrac{1}{2} G_{01'} \mathring{\tau}'
 + \tfrac{1}{2} G_{10'} \bar{\mathring{\tau}}'
 -  \tfrac{1}{2} \edtBG G_{10'},\\
\tilde{\sigma}={}&- \tfrac{1}{2} G_{02'} \bar{\mathring{\rho}}
 + G_{01'} \bar{\mathring{\tau}}'
 + \tfrac{1}{2} \thoBG G_{02'}
 -  \tfrac{1}{2} \edtBG G_{01'}.
\end{align}
\end{subequations}

The linearized Ricci relations in equations \eqref{eq:extraLinearizedRicci} are equivalent to  
\begin{subequations}
\label{eq:translatedLinRicci}
\begin{align}
\vartheta \Psi_{3}{}={}&2 \vartheta \Psi_{3}{}
 -  G_{10'} \mathring{\rho}'^2
 -  \tfrac{1}{2} G_{20'} \bar{\mathring{\rho}}' \mathring{\tau}
 - 2 \mathring{\rho}' \tilde{\tau}'
 + \bar{\mathring{\rho}}' \tilde{\tau}'
 -  \tfrac{1}{2} \mathring{\rho}' \edtBG G_{20'}
 -  \edtBG \tilde{\sigma}',\\
\vartheta \Psi_{2}{}={}&- \tilde{\epsilon} \mathring{\rho}'
 -  \overline{\tilde{\epsilon}} \mathring{\rho}'
 + \tfrac{1}{2} G_{00'} \mathring{\rho}'^2
 + \mathring{\rho}' \overline{\tilde{\rho}}
 + \tfrac{1}{2} G_{10'} \bar{\mathring{\rho}}' \mathring{\tau}
 + \overline{\tilde{\beta}'} \mathring{\tau}'
 + \tilde{\beta} \mathring{\tau}'
 -  G_{01'} \mathring{\rho}' \mathring{\tau}'
 + \tfrac{1}{2} G_{02'} \mathring{\tau}'^2\nonumber\\
&-  \bar{\mathring{\tau}}' \tilde{\tau}'
  -  \mathring{\tau}' \overline{\tilde{\tau}'}
 + \tfrac{1}{2} \mathring{\rho}' \edtBG G_{10'}
 + \edtBG \tilde{\tau}',\\
\vartheta \Psi_{2}{}={}&- \tilde{\epsilon} \mathring{\rho}'
 + \tilde{\epsilon} \bar{\mathring{\rho}}'
 -  \mathring{\rho}' \tilde{\rho}
 + \edtBG \tilde{\beta}'
 + \edtpBG \tilde{\beta},\\
\vartheta \Psi_{1}{}={}&- \tilde{\beta} \bar{\mathring{\rho}}
 -  \tilde{\kappa} \mathring{\rho}'
 + \tilde{\sigma} \mathring{\tau}'
 + \tilde{\epsilon} \bar{\mathring{\tau}}'
 + \thoBG \tilde{\beta}
 -  \edtBG \tilde{\epsilon},\\
\vartheta \Psi_{0}{}={}&- \tfrac{1}{2} G_{02'} \mathring{\rho} \bar{\mathring{\rho}}
 -  \mathring{\rho} \tilde{\sigma}
 -  \bar{\mathring{\rho}} \tilde{\sigma}
 + \tilde{\kappa} \mathring{\tau}
 + \tilde{\kappa} \bar{\mathring{\tau}}'
 -  \tfrac{1}{2} G_{00'} \mathring{\tau} \bar{\mathring{\tau}}'
 -  \tfrac{1}{2} \mathring{\tau} \thoBG G_{01'}
 + \tfrac{1}{2} \mathring{\rho} \thoBG G_{02'}
 + \thoBG \tilde{\sigma}\nonumber\\
& + \tfrac{1}{2} \mathring{\tau} \edtBG G_{00'}
 -  \tfrac{1}{2} \mathring{\rho} \edtBG G_{01'}
 -  \edtBG \tilde{\kappa},\\
\thoBG \tilde{\sigma}' -  \edtpBG \tilde{\tau}'={}&\mathring{\rho} \tilde{\sigma}'
 + \mathring{\rho}' \overline{\tilde{\sigma}}
 -  \tilde{\beta}' \mathring{\tau}'
 -  \overline{\tilde{\beta}} \mathring{\tau}'
 -  \tfrac{1}{2} G_{10'} (2 \mathring{\rho}' + \bar{\mathring{\rho}}') \mathring{\tau}'
 + \tfrac{1}{2} G_{20'} (\mathring{\Psi}_{2}{} + \mathring{\rho} \mathring{\rho}' + \mathring{\tau}' \bar{\mathring{\tau}}')
 - 2 \mathring{\tau}' \tilde{\tau}'\nonumber\\
& -  \tfrac{1}{2} \mathring{\rho}' \thoBG G_{20'}
 + \tfrac{1}{2} \mathring{\rho}' \edtpBG G_{10'},\\
\thoBG \tilde{\rho} -  \edtpBG \tilde{\kappa}={}&\tilde{\epsilon} \mathring{\rho}
 + \overline{\tilde{\epsilon}} \mathring{\rho}
 + 2 \mathring{\rho} \tilde{\rho}
 -  \overline{\tilde{\kappa}} \mathring{\tau}
 -  \tfrac{1}{2} G_{10'} (2 \mathring{\rho} -  \bar{\mathring{\rho}}) \mathring{\tau}
 -  \tilde{\kappa} \mathring{\tau}'
 -  \tfrac{1}{2} G_{01'} \mathring{\rho} \mathring{\tau}'
 -  \tfrac{1}{2} G_{00'} (\mathring{\Psi}_{2}{} -  \bar{\mathring{\rho}} \mathring{\rho}' -  \mathring{\tau} \mathring{\tau}')\nonumber\\
& + \tfrac{1}{2} \mathring{\tau} \thoBG G_{10'}
 -  \tfrac{1}{2} \mathring{\tau} \edtpBG G_{00'}
 + \tfrac{1}{2} \mathring{\rho} \edtpBG G_{01'},\\
\thoBG \tilde{\beta}' + \edtpBG \tilde{\epsilon}={}&- G_{10'} \mathring{\Psi}_{2}{}
 + \tilde{\beta}' \mathring{\rho}
 + \tilde{\epsilon} \mathring{\tau}'
 + \tilde{\rho} \mathring{\tau}'
 + \mathring{\rho} \tilde{\tau}',\\
\edtBG \tilde{\rho} -  \edtpBG \tilde{\sigma}={}&- \vartheta \Psi_{1}{}
 -  \overline{\tilde{\beta}'} \mathring{\rho}
 + \tilde{\beta} \mathring{\rho}
 -  \tilde{\kappa} \mathring{\rho}'
 -  \tfrac{1}{2} G_{01'} (\mathring{\Psi}_{2}{} + \mathring{\rho} \mathring{\rho}' - 2 \bar{\mathring{\rho}} \mathring{\rho}')
 + \tilde{\kappa} \bar{\mathring{\rho}}'
 + \tfrac{1}{2} G_{00'} (\mathring{\rho}' -  \bar{\mathring{\rho}}') \mathring{\tau}
 + \tilde{\rho} \mathring{\tau}\nonumber\\
& -  \overline{\tilde{\rho}} \mathring{\tau}
 + (\mathring{\rho} -  \bar{\mathring{\rho}}) \thopBG \dot{\eta}
 + \tfrac{1}{2} (\mathring{\rho} -  \bar{\mathring{\rho}}) \thopBG G_{01'}
 + \tfrac{1}{2} \mathring{\tau} \edtBG G_{10'}
 -  \tfrac{1}{2} \mathring{\tau} \edtpBG G_{01'}
 + \tfrac{1}{2} \mathring{\rho} \edtpBG G_{02'}.
\end{align}
\end{subequations}
The linearized Bianchi equations \eqref{eq:LinBianchi} translate to
\begin{subequations}
\label{eq:translatedLinBianchi}
\begin{align}
\thoBG \vartheta \Psi_{1}{} -  \edtpBG \vartheta \Psi_{0}{}={}&-3 \mathring{\Psi}_{2}{} \tilde{\kappa}
 -  \tfrac{3}{2} G_{01'} \mathring{\Psi}_{2}{} \mathring{\rho}
 + 4 \vartheta \Psi_{1}{} \mathring{\rho}
 + \tfrac{3}{2} G_{00'} \mathring{\Psi}_{2}{} \mathring{\tau}
 -  \vartheta \Psi_{0}{} \mathring{\tau}',\\
\thoBG \vartheta \Psi_{2}{} -  \edtpBG \vartheta \Psi_{1}{}={}&3 \vartheta \Psi_{2}{} \mathring{\rho}
 + \tfrac{3}{2} G_{00'} \mathring{\Psi}_{2}{} \mathring{\rho}'
 + 3 \mathring{\Psi}_{2}{} \tilde{\rho}
 - 3 G_{10'} \mathring{\Psi}_{2}{} \mathring{\tau}
 -  \tfrac{3}{2} G_{01'} \mathring{\Psi}_{2}{} \mathring{\tau}'
 - 2 \vartheta \Psi_{1}{} \mathring{\tau}',\\
\thoBG \vartheta \Psi_{3}{} -  \edtpBG \vartheta \Psi_{2}{}={}&2 \vartheta \Psi_{3}{} \mathring{\rho}
 - 3 G_{10'} \mathring{\Psi}_{2}{} \mathring{\rho}'
 + \tfrac{3}{2} G_{20'} \mathring{\Psi}_{2}{} \mathring{\tau}
 - 3 \vartheta \Psi_{2}{} \mathring{\tau}'
 - 3 \mathring{\Psi}_{2}{} \tilde{\tau}',\\
\thoBG \vartheta \Psi_{4}{} -  \edtpBG \vartheta \Psi_{3}{}={}&\vartheta \Psi_{4}{} \mathring{\rho}
 + \tfrac{3}{2} G_{20'} \mathring{\Psi}_{2}{} \mathring{\rho}'
 + 3 \mathring{\Psi}_{2}{} \tilde{\sigma}'
 - 4 \vartheta \Psi_{3}{} \mathring{\tau}',\\
\thopBG \vartheta \Psi_{0}{} -  \edtBG \vartheta \Psi_{1}{}={}&\tfrac{3}{2} G_{02'} \mathring{\Psi}_{2}{} \mathring{\rho}
 + \vartheta \Psi_{0}{} \mathring{\rho}'
 + 3 \mathring{\Psi}_{2}{} \tilde{\sigma}
 -  \tfrac{3}{2} G_{01'} \mathring{\Psi}_{2}{} \mathring{\tau}
 - 4 \vartheta \Psi_{1}{} \mathring{\tau},\\
\thopBG \vartheta \Psi_{1}{} -  \edtBG \vartheta \Psi_{2}{}={}&- \tfrac{3}{2} G_{01'} \mathring{\Psi}_{2}{} \mathring{\rho}'
 + 2 \vartheta \Psi_{1}{} \mathring{\rho}'
 - 3 \vartheta \Psi_{2}{} \mathring{\tau}
 + \tfrac{3}{2} G_{02'} \mathring{\Psi}_{2}{} \mathring{\tau}'
 - 3 \mathring{\Psi}_{2}{} \thopBG \dot{\eta}
 -  \tfrac{3}{2} \mathring{\Psi}_{2}{} \thopBG G_{01'},\\
\thopBG \vartheta \Psi_{2}{} -  \edtBG \vartheta \Psi_{3}{}={}&3 \vartheta \Psi_{2}{} \mathring{\rho}'
 - 2 \vartheta \Psi_{3}{} \mathring{\tau},\\
\thopBG \vartheta \Psi_{3}{} -  \edtBG \vartheta \Psi_{4}{}={}&4 \vartheta \Psi_{3}{} \mathring{\rho}'
 -  \vartheta \Psi_{4}{} \mathring{\tau}.
\end{align}
\end{subequations}
Observe that the $\dot\eta$ dependence of the equations \eqref{eq:translatedLinRicci} and \eqref{eq:translatedLinBianchi} can be eliminated by using \eqref{eq:translatedEta}.

%\nocite{aipauth42Control}
%
%\bibliographystyle{aipauth4-1mod}
%
%\bibliography{kerrlinstab}

%merlin.mbs aipauth4-1.bst 2010-07-25 4.21a (PWD, AO, DPC) hacked
%Control: key (0)
%Control: author (72) initials jnrlst
%Control: editor formatted (1) identically to author
%Control: production of article title (0) allowed
%Control: page (1) range
%Control: year (0) verbatim
%Control: production of eprint (0) enabled
%

\bigskip

\end{document}